\theoremstyle{plain}
\newtheorem{theorem}{Theorem}[section]
\newtheorem{lemma}[theorem]{Lemma}
\newtheorem{claim}[theorem]{Claim}
\newtheorem{observation}[theorem]{Observation}
\newtheorem{rrule}{Reduction Rule}[section]
\theoremstyle{definition}
\newtheorem{definition}[theorem]{Definition}
\crefname{cond}{Condition}{Conditions}
\crefname{claim}{Claim}{Claim}
\crefname{rrule}{Reduction Rule}{Reduction Rules}
\crefname{step}{Step}{Steps}
\newcommand{\problemdef}[3]{
	\begin{center}
		\begin{minipage}{0.98\textwidth}
			\textsc{#1}\smallskip\\
			\setlength{\tabcolsep}{3pt}
			\begin{tabularx}{\textwidth}{@{}lX@{}}
				\textbf{Input:} 	& #2 \\
				\textbf{Question:} 	& #3
			\end{tabularx}
		\end{minipage}
	\end{center}
}
\newcommand{\dist}[2]{\ensuremath{\overline{#1 #2}}}
\newcommand{\distP}[3]{\ensuremath{\overline{#1 #2}|_{#3}}}
\newcommand{\hyp}{\textsc{Hyperbolicity}\xspace}
\newcommand{\NP}{\ensuremath{\mathsf{NP}}}
\newcommand{\appref}[1]{{\hyperref[proof:#1]{\appsymb}}}
\newcommand{\raproof}{($\Rightarrow$)}
\newcommand{\laproof}{($\Leftarrow$)}
\renewcommand{\vec}[1]{\overrightarrow{#1}}
\newcommand{\ceil}[1]{\lceil#1\rceil}
\title{When can Graph Hyperbolicity be computed in~Linear~Time?\thanks{This work was initiated at the 2016 research retreat of the Algorithmics and Computational Complexity (AKT) group of TU Berlin.}}
\author[1]{Till Fluschnik\thanks{Supported by the DFG, project DAMM (NI~369/13-2).}}
\author[2]{Christian Komusiewicz\thanks{Supported by the DFG, project MAGZ (KO 3669/4-1).}}
\author[3]{George B. Mertzios} 
\author[1,3]{Andr\'{e}~Nichterlein\thanks{Supported by a postdoc fellowship of the DAAD while at Durham University.}} 
\author[1]{Rolf Niedermeier} 
\author[4]{Nimrod Talmon\thanks{Nimrod Talmon was supported by a postdoctoral fellowship from I-CORE ALGO.}}
\affil[1]{\small{Institut f\"ur Softwaretechnik und Theoretische Informatik, TU~Berlin, Germany, 

\texttt{\{till.fluschnik, andre.nichterlein, rolf.niedermeier\}@tu-berlin.de}}}
\affil[2]{Friedrich-Schiller-Universität Jena, Germany, 

\texttt{christian.komusiewicz@uni-jena.de}}
\affil[3]{School of Engineering and Computing Sciences, Durham University, UK, 

\texttt{george.mertzios@durham.ac.uk}}
\affil[4]{Weizmann Institute of Science, Rehovot, Israel, 

\texttt{nimrodtalmon77@gmail.com}}
\date{}
\begin{document}

\maketitle

\begin{abstract}
Hyperbolicity measures, in terms of (distance) metrics, how close a given graph
is to being a tree. Due to its relevance in modeling real-world networks,
hyperbolicity has seen intensive research over the last years.
Unfortunately, the best known algorithms for computing the hyperbolicity number
of a graph (the smaller, the more tree-like) have running time~$O(n^4)$, 
where $n$ is the number of graph vertices. 
Exploiting the framework of parameterized complexity analysis, we 
explore possibilities for ``linear-time FPT'' algorithms to 
compute hyperbolicity. For instance, we show that hyperbolicity can be computed in time $O(2^{O(k)} + n +m)$ ($m$ being the number of graph edges) 
 while at the same time, unless the SETH fails, there 
is no $2^{o(k)}n^2$-time algorithm.

\end{abstract}

\section{Introduction}
(Gromov) hyperbolicity~\cite{Gro87} of a graph 
is a popular attempt to capture and 
measure how \emph{metrically} close a graph is to being a tree.
The study of hyperbolicity is motivated by the fact that many real-world 
graphs are tree-like from a distance metric point of view~\cite{AD16,BCCM15}. 
This is due to the fact that many of these graphs (including 
Internet application networks or social networks) 
possess certain geometric and topological characteristics. 
Hence, for many applications,
including the design of (more) 
efficient algorithms,
it is useful to know the hyperbolicity of a graph.
The hyperbolicity of a graph is a nonnegative number~$\delta$;
the smaller~$\delta$ is,
the more tree-like the graph is; in particular, $\delta =0$ means that the graph metric indeed is a tree 
metric.
Typical hyperbolicity values for real-world graphs 
are below~5~\cite{AD16}.  

Hyperbolicity can be defined via a four-point condition:
Considering all size-four subsets $\{ a, b, c, d \}$ 
of the vertex set of the graph, 
one takes the  (nonnegative) difference 
between the biggest two of the three sums
$\dist{a}{b}+\dist{c}{d}$, $\dist{a}{c}+\dist{b}{d}$, and 
$\dist{a}{d}+\dist{b}{c}$,
where, e.g., $\dist{a}{b}$ denotes the length 
of the shortest path between vertices $a$ and~$b$ in the given graph.
For an $n$-vertex graph, this characterization of 
hyperbolicity directly implies a simple (brute-force)
$O(n^4)$-time algorithm to compute its hyperbolicity.
It has been observed that this polynomial running time is too slow 
for computing the hyperbolicity of big graphs as occurring in 
applications~\cite{AD16,BCCM15,BCH16,FIV15}. 
On the theoretical side, it was shown that relying on some (rather impractical)
matrix multiplication results, one can improve the upper bound to
$O(n^{3.69})$~\cite{FIV15}. Moreover, roughly quadratic lower bounds are 
known~\cite{BCH16,FIV15}. In practice, however, the best known algorithm
still has an $O(n^4)$-time worst-case bound but uses several 
clever tricks when compared to the straightforward brute-force 
algorithm~\cite{BCCM15}. Indeed, based on empirical studies 
an $O(mn)$~running time is claimed, where $m$~is the number
of edges in the graph. 

To explore the possibility of faster algorithms for hyperbolicity in 
relevant special cases is the guiding principle of this work.
More specifically, introducing some graph parameters, we investigate 
whether one can compute hyperbolicity in linear time when these parameters take small values. In other words, we employ the framework of parameterized 
complexity analysis (so far mainly used for studying \NP-hard problems) 
applied to the polynomial-time solvable hyperbolicity problem. 
In this sense, we follow the recent trend of studying 
``FPT~in~P''~\cite{GMN15}. Indeed, other than for \NP-hard problems, for some 
parameters we 
achieve not only exponential dependence on the parameter but also polynomial
ones. 
\paragraph{Our contributions.}
\cref{tab:results} summarizes our main results.
On the positive side, for a number of natural graph parameters we can
attain ``linear FPT'' running times. Our ``positive'' graph parameters 
here are the following:
\begin{compactitem}
\item the covering path number, that is, the minimum number of paths where only 
the endpoints have degree greater than two and which cover all vertices;
\item the feedback edge number, that is, the minimum number of edges to 
delete to obtain a forest;
\item the number of graph vertices of degree at least three; 
\item the vertex cover number, that is the minimum number of 
vertices needed to cover all edges in the graph;
\item the minimum vertex deletion number to cographs, that is, the minimum number of vertices to delete to obtain a cograph.\footnote{Cographs are the graphs without induced~$P_4$s. For instance, distance to cographs is
never bigger than the graph parameter cluster graph vertex deletion distance~\cite{DK12}. Moreover, it is also upper-bounded by the vertex cover number.}
\end{compactitem}

On the negative side, we prove that that with respect to the parameter vertex 
cover number~$k$, we cannot hope for any $2^{o(k)} n^{2-\epsilon}$ algorithm 
unless the SETH fails. We also obtain a ``quadratic-time FPT''~lower bound
with respect to the parameter maximum vertex degree, again assuming SETH.
Finally, we show that computing the hyperbolicity is at least as hard as computing a size-four independent set of a graph. 
It is conjectured that computing size-four independent sets needs~$\Omega(n^3)$ time.

\begin{table}[t]
  \centering
 \caption{Summary of our algorithmic results. Herein, $k$~denotes the parameter and~$n$ 
    and $m$ denote the number of vertices and edges, respectively.}
  \begin{tabular}{lrr}\toprule
  Parameter & \multicolumn{1}{l}{Running time} \\\midrule
  covering path number & $O(k^4(n+m))$ & [\cref{thm:maximal-paths}] \\
  feedback edge number & $O(k^4(n+m))$ & [\cref{thm:fes}] \\
  number of $\geq 3$-degree vertices & $O(k^8(n+m))$ & [\cref{thm:degree3vertices}]\\
  vertex cover number & $2^{O(k)} + O(n+m)$ & [\cref{theorem:vc}]\\
  distance to cographs & $O(4^{4k}\cdot k^7\cdot (n+m))$ &  [\cref{thm:cograph-dist}]\\
   \bottomrule
  \end{tabular}

 \label{tab:results}
\end{table}

\section{Preliminaries and Basic Observations}

We write $[n]:=\{1,\ldots,n\}$ for every $n\in\mathbb N$. For a function~$f:X\to Y$
and~$X'\subseteq X$ we set~$f(X'):=\{y\in Y\mid\exists x\in X':f(x)=y\}$.
\paragraph{Graph theory.} 
Let $G=(V,E)$~be a graph.  We define $|G|=|V|+|E|$.  For
$W\subseteq V$, we denote by~$G[W]$ the graph \emph{induced} by~$W$. We
use~$G-W:=G[V\setminus W]$ to denote the graph obtained from~$G$ by deleting the vertices
of~$W \subseteq V$.
A \emph{path} $P=(v_1,\ldots,v_k)$ in~$G$ is a tuple of distinct vertices in~$V$ such
that $\{v_i,v_{i+1}\}\in E$ for all $i\in[k-1]$;  we say that such a path $P$ has endpoints $v_1$ and
$v_k$, we call the other vertices of $P$ (i.e., $P \setminus \{v_1, v_k\}$) as inner nodes,
and we say that $P$ is a $v_1$-$v_k$~path.
We denote by $\dist{a}{b}$ the length of a
shortest~$a$-$b$ path if such a path exists; otherwise, that is, if~$a$ and~$b$ are in
different connected components, $\dist{a}{b}:=\infty$.  Let~$P=(v_1,\ldots,v_k)$ be a path
and~$v_i,v_j$ two vertices on~$P$.  We denote by~$\distP{v_i}{v_j}{P}$ the distance
of~$v_i$ to~$v_j$ on~$P$, that is, $\distP{v_i}{v_j}{P} = |j-i|$.
For a graph~$G$ we denote with~$V^{\ge 3}_{G}$ the set of vertices of~$G$ that have degree at least three.

\paragraph{Hyperbolicity.}
Let $G=(V,E)$ be graph and $a,b,c,d\in V$.
We denote the distance between two vertices~$a$ and~$b$ by \dist{a}{b}.
We define $D_1:=\dist{a}{b}+\dist{c}{d}$, $D_2:=\dist{a}{c}+\dist{b}{d}$, and $D_3:=\dist{a}{d}+\dist{b}{c}$ (referred to as \emph{distance sums}).
Moreover, we define $\delta(a,b,c,d) := |D_i-D_j|$ if $D_k\le\min\{D_i,D_j\}$, for pairwise distinct $i,j,k\in\{1,2,3\}$.
The \emph{hyperbolicity} of a graph is defined as
$\delta(G) = \max_{a,b,c,d \in V}\{\delta(a,b,c,d)\}$.
We say that the graph is \emph{$\delta$-hyperbolic} for some~$\delta \in \mathbb N$ if it has hyperbolicity at most~$\delta$.
That is, a graph is~$\delta$-hyperbolic if for each 4-tuple $a,b,c,d \in V$ we have
$$ \dist{a}{b} + \dist{c}{d} \le \max\{\dist{a}{c}+\dist{b}{d},\dist{a}{d}+\dist{b}{c}\} + \delta.$$
Formally, the \hyp problem is defined as follows.

\problemdef{\hyp}
        {An undirected graph~$G=(V,E)$ and a positive integer~$\delta$.}
        {Is~$G$ $\delta$-hyperbolic?}

The following lemmas would be useful later.
For any quadruple $\{a,b,c,d\}$,
\cref{lem:hyp-distance-bounded} upper bounds $\delta(a, b, c, d)$
by twice the distance between any pair of vertices of the quadruple.
\cref{lem:diamequalshyp} discusses graphs for which the hyperbolicity equals the diameter.
\cref{lemma:1seps} is used in the proof of \cref{rrule:degree1}.

\begin{lemma}[{\cite[Lemma 3.1]{CCL15}}] \label{lem:hyp-distance-bounded}
	$\delta(a,b,c,d) \le 2 \cdot \min_{u \neq v \in \{a,b,c,d\}} \{\dist{u}{v}\}$
\end{lemma}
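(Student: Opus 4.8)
The statement is symmetric under every permutation of $a,b,c,d$, and its right-hand side is a minimum over all six pairs, so it suffices to prove the bound for one \emph{arbitrary} fixed pair and then take the minimum. Relabelling, the plan is to show $\delta(a,b,c,d)\le 2\dist{a}{b}$ for all $a,b,c,d$. The only tool needed is that shortest-path distances form a metric, so the triangle inequality holds between any two (or three) of the four vertices.

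The argument rests on two inequalities. First, I would bound the gap between the two distance sums $D_2$ and $D_3$ that both separate $a$ from $b$: writing $D_2-D_3=(\dist{a}{c}-\dist{b}{c})+(\dist{b}{d}-\dist{a}{d})$ and applying the triangle inequality $|\dist{a}{x}-\dist{b}{x}|\le\dist{a}{b}$ to each bracket gives $|D_2-D_3|\le 2\dist{a}{b}$. Second, for the sum $D_1$ that pairs $a$ with $b$, I would use the three-term triangle inequality $\dist{c}{d}\le\dist{a}{c}+\dist{a}{b}+\dist{b}{d}$ to obtain $D_1-D_2\le 2\dist{a}{b}$, and symmetrically (swapping $c$ and $d$) $D_1-D_3\le 2\dist{a}{b}$.

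It then remains to assemble these into a bound on $\delta(a,b,c,d)$, which equals the difference of the two largest distance sums, and I would do this by a case analysis on which sum is smallest. If $D_1$ is smallest, then $\delta=|D_2-D_3|$ and the first inequality finishes it. If instead $D_2$ (or, symmetrically, $D_3$) is smallest, the two largest sums are $D_1$ and $D_3$, so $\delta=|D_1-D_3|$; here the direction $D_1-D_3\le 2\dist{a}{b}$ is the second inequality, while the other direction follows from $D_3-D_1\le D_3-D_2=|D_2-D_3|\le 2\dist{a}{b}$, using that $D_2$ is smallest. Either way $\delta(a,b,c,d)\le 2\dist{a}{b}$, and minimising over all pairs yields the claim.

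The one place demanding care is this final combination. Because $\delta$ singles out the top two sums, the naive per-pair bounds only directly control the two distances appearing in the \emph{smallest} distance sum; covering the two distances that sit in the \emph{largest} sum is exactly what the three-term triangle inequality and the reduction to $|D_2-D_3|$ accomplish. Once one sees that these extra steps are needed precisely to reach the distances in the largest sum, the case analysis is routine.
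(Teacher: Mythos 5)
Your proposal is correct, but there is nothing in the paper to compare it against: the paper does not prove this lemma at all, it imports the statement by citation from \cite[Lemma~3.1]{CCL15}, so your argument fills in a proof rather than parallels one. Checking it in detail: the symmetry reduction to the single bound $\delta(a,b,c,d)\le 2\dist{a}{b}$ is legitimate, since permuting the four arguments only permutes the multiset $\{D_1,D_2,D_3\}$ and hence leaves $\delta$ unchanged; the two estimates $|D_2-D_3|\le 2\dist{a}{b}$ (splitting into $|\dist{a}{c}-\dist{b}{c}|$ and $|\dist{b}{d}-\dist{a}{d}|$) and $D_1-D_i\le 2\dist{a}{b}$ for $i\in\{2,3\}$ (via $\dist{c}{d}\le\dist{c}{a}+\dist{a}{b}+\dist{b}{d}$ and its mirror) are both valid consequences of the triangle inequality; and the concluding case analysis on the smallest sum correctly accounts for $\delta$ being the gap between the two \emph{largest} sums. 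The one step where a careless writer would slip --- when $D_2$ is smallest and $D_3>D_1$, so that neither prepared inequality applies directly --- you handle correctly by the monotonicity trick $D_3-D_1\le D_3-D_2=|D_2-D_3|\le 2\dist{a}{b}$. This is, in essence, the standard elementary proof of the four-point bound (adjusted to this paper's normalization of $\delta$, which omits the factor $1/2$ used in some references), and it would serve as a self-contained replacement for the external citation.
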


\begin{lemma}%
  \label{lem:diamequalshyp}
 Let $G$ be a graph with diameter $h$ and $\delta(G)=h$.
 Then for each quadruple $a,b,c,d\in V(G)$ with $\delta(a,b,c,d)=h$, it holds that exactly two disjoint pairs are at distance~$h$ and all the other pairs are at distance~$h/2$.
\end{lemma}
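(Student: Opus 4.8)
The plan is to combine the diameter upper bound with the lower bound supplied by \cref{lem:hyp-distance-bounded} and then pin down all six pairwise distances through a short squeezing argument. Fix a quadruple $a,b,c,d$ with $\delta(a,b,c,d)=h$. First I would record two global constraints on the pairwise distances. Since $h$ is the diameter, every pairwise distance among the four vertices is at most~$h$. Conversely, \cref{lem:hyp-distance-bounded} gives $h=\delta(a,b,c,d)\le 2\cdot\min_{u\neq v\in\{a,b,c,d\}}\dist{u}{v}$, so every pairwise distance is at least $h/2$. Hence all six distances lie in the interval $[h/2,h]$.

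Next I would analyze the three distance sums. Relabelling the vertices if necessary, I may assume $D_1\ge D_2\ge D_3$, so that by definition $\delta(a,b,c,d)=D_1-D_2=h$. Because each $D_i$ is a sum of two distances drawn from $[h/2,h]$, we have $D_i\in[h,2h]$ for every~$i$. In particular $D_1\le 2h$ and $D_2\ge h$; together with $D_1=D_2+h$ this forces $D_1=2h$ and $D_2=h$.

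From $D_1=\dist{a}{b}+\dist{c}{d}=2h$ and the fact that each summand is at most~$h$, I conclude $\dist{a}{b}=\dist{c}{d}=h$; the pairs $\{a,b\}$ and $\{c,d\}$ are disjoint by the very structure of a distance sum. Symmetrically, from $D_2=\dist{a}{c}+\dist{b}{d}=h$ and the fact that each summand is at least $h/2$, both summands equal $h/2$. It remains to treat $D_3$: it satisfies $D_3\le D_2=h$, while the global lower bound gives $D_3=\dist{a}{d}+\dist{b}{c}\ge h/2+h/2=h$, so $D_3=h$ and both of its summands also equal $h/2$. Collecting these six values shows that exactly the two disjoint pairs $\{a,b\}$ and $\{c,d\}$ are at distance~$h$ (the remaining four pairs being at distance $h/2<h$), which is the claim.

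I do not anticipate a serious obstacle here, as the whole argument reduces to a chain of squeezing inequalities. The only point that deserves care is the bookkeeping of which pairs occur in which distance sum, in order to certify that the two distance-$h$ pairs are genuinely disjoint; this is immediate once one observes that each $D_i$ partitions $\{a,b,c,d\}$ into two disjoint pairs and that the distance-$h$ pairs are precisely the two summands of the maximal sum~$D_1$.
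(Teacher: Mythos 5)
Your proof is correct and follows essentially the same route as the paper's own argument: both use \cref{lem:hyp-distance-bounded} to bound all six distances below by $h/2$, the diameter to bound them above by $h$, and then squeeze the three distance sums to force $D_1=2h$ and $D_2=D_3=h$. The only difference is cosmetic bookkeeping in how the squeezing inequalities are ordered.
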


{
  \begin{proof}
  Let $a,b,c,d\in V(G)$ be an arbitrary but fixed quadruple with $\delta(a,b,c,d)=h$.
  By \cref{lem:hyp-distance-bounded}, $\min_{u \neq v \in \{a,b,c,d\}} \{\dist{u}{v}\}\geq h/2$.
  Let w.l.o.g. be $S_1=\dist{a}{b}+\dist{c}{d}$ and $S_1\geq \max\{S_2,S_3\}$.
  Then $h=S_1-\max\{S_2,S_3\}\leq S_1-h$.
  It follows that $S_1\geq 2h$ and since $G$ is of diameter $h$, it follows that $\dist{a}{b}=\dist{c}{d}=h$.
  Moreover, it follows that~$\max\{S_2,S_3\}=h$ and together with $\min_{u \neq v \in \{a,b,c,d\}} \{\dist{u}{v}\}\geq h/2$, we obtain that each other distance equals $h/2$.
  \end{proof}
}

\begin{lemma}%
  \label{lemma:1seps}
 Given a graph $G=(V,E)$ with $|V|>4$ and $v\in V$ being a 1-separator in~$G$. 
 Let $A_1,\ldots,A_\ell$ be the components in $G-\{v\}$. 
 Then there is an $i\in[\ell]$ such that $\delta(G)=\delta(G-V(A_i))$.
\end{lemma}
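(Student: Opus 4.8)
The plan is to prove the two inequalities $\delta(G-V(A_i))\le\delta(G)$ and $\delta(G-V(A_i))\ge\delta(G)$ for a suitably chosen component $A_i$. The first inequality is the easy one, and it holds for \emph{every} $i$: since $v$ is the only vertex through which a path can enter or leave $A_i$, no shortest path between two vertices of $V\setminus V(A_i)$ ever uses a vertex of $A_i$ (such a path would have to pass through $v$ twice and could be shortcut). Hence deleting $A_i$ leaves all pairwise distances among the surviving vertices unchanged, so every quadruple of $G-V(A_i)$ has exactly the same $\delta$-value as in $G$, giving $\delta(G-V(A_i))\le\delta(G)$. The whole difficulty is therefore to find \emph{one} component whose deletion does not destroy a maximizing quadruple, i.e.\ to establish the reverse inequality.

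For that I would fix a quadruple $Q=\{a,b,c,d\}$ with $\delta(a,b,c,d)=\delta(G)$ and analyse how its vertices distribute over $A_1,\dots,A_\ell$, treating $v$ separately since it lies in no component. Two observations drive the argument. First, a \emph{relocation} step: if a vertex $u\in Q\setminus\{v\}$ is the only vertex of $Q$ in its component, then every distance from $u$ to another vertex $x$ of $Q$ decomposes as $\dist{u}{v}+\dist{v}{x}$, so replacing $u$ by $v$ shifts each of the three distance sums $D_1,D_2,D_3$ by the \emph{same} constant $\dist{u}{v}$ and hence leaves $\delta$ unchanged. Second, a \emph{balanced split is degenerate}: if $Q$ admits a partition into pairs $\{x,y\}\sqcup\{z,w\}$ such that no single component contains one vertex from each pair, then $v$ lies on every cross-pair shortest path, so the two ``crossing'' pairing sums both equal $\dist{x}{v}+\dist{y}{v}+\dist{z}{v}+\dist{w}{v}$, while the remaining ``aligned'' sum $\dist{x}{y}+\dist{z}{w}$ does not exceed this by the triangle inequality; consequently $\delta(a,b,c,d)=0$.

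With these two facts the case analysis collapses to a dichotomy on $M:=\max_p|Q\cap V(A_p)|$. If $M\le 2$, such a balanced bipartition of $Q$ always exists (take the two vertices of a size-two component as one pair, or split arbitrarily if every component meets $Q$ at most once), so $\delta(G)=\delta(a,b,c,d)=0$, and then $\delta(G-V(A_i))=0$ for every $i$ by the easy inequality, settling the claim. If instead $M\ge 3$, some component $A_p$ contains at least three vertices of $Q$, so at most one vertex of $Q$ lies outside $A_p$. If that vertex is $v$ or absent we already have $Q\subseteq V(A_p)\cup\{v\}$; otherwise it is a lonely non-$v$ vertex and $v\notin Q$, so the relocation step replaces it by $v$ and yields a maximizing quadruple $Q'\subseteq V(A_p)\cup\{v\}$. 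Deleting any other component $A_i$ (one exists since $v$ is a $1$-separator, i.e.\ $\ell\ge 2$) preserves every vertex of $Q'$ and its mutual distances, so $\delta(G-V(A_i))\ge\delta(Q')=\delta(G)$, and with the easy inequality we obtain equality.

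The main obstacle I expect is the second observation, namely proving that every ``spread-out'' optimal quadruple is in fact degenerate. The subtlety is to phrase this in a pairing-invariant way that simultaneously covers the distributions $(2,2)$, $(2,1,1)$, $(1,1,1,1)$ and all the cases with $v\in Q$, rather than checking each by hand; casting it as ``a bipartition of $Q$ that no component straddles forces two of the three distance sums to coincide'' is exactly what makes the argument uniform. Verifying that such a bipartition always exists when $M\le 2$, together with the bookkeeping that the relocation step returns four \emph{distinct} vertices (which is why one must note $v\notin Q$ in the $M\ge 3$ branch), is the part that will need the most care.
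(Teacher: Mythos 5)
Your proof is correct, and it rests on the same three core facts as the paper's proof: that $G-V(A_i)$ is an isometric subgraph of~$G$ (your ``easy inequality''), that a quadruple vertex which is alone in its component can be exchanged for~$v$ without changing~$\delta$ (the paper's Case~(i), your relocation step), and that a quadruple admitting a bipartition which no component straddles has $\delta=0$ (the paper's Cases~(ii) and~(iii), which your balanced-split observation unifies, including the situations with $v$ in the quadruple). Where you genuinely depart from the paper is in how the deleted component is selected. The paper commits to a component \emph{before} looking at any quadruple --- it deletes the $A_i$ minimizing $\delta(G[A_i\cup\{v\}])$ (or minimizing $|V(A_i)|$ in the degenerate small cases) --- and must then argue that every quadruple of~$G$ is accounted for: the straddling ones by its Cases~(i)--(iii), and the ones inside $A_i\cup\{v\}$ by the minimality of the choice together with the isometric embedding of $G[A_j\cup\{v\}]$, $j\ne i$, into $G-V(A_i)$. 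You instead fix a maximizing quadruple $Q$ first and choose the component adaptively via the dichotomy on $M=\max_p|Q\cap V(A_p)|$: either $M\le 2$, in which case $\delta(G)=0$ and any component works, or $M\ge 3$, in which case $Q$ can be localized into $V(A_p)\cup\{v\}$ and any other component (one exists since $\ell\ge 2$) can be deleted. This buys you two things: you never need the global minimality criterion, whose justification the paper leaves largely implicit, and your case analysis is uniform and visibly exhaustive, whereas the paper's explicit list omits the symmetric case of three quadruple vertices inside~$A_i$ and compresses the final deduction into a single sentence. The price is exactly the bookkeeping you anticipated --- checking that a balanced split exists whenever $M\le 2$ (including when $v\in Q$) and that relocation returns four distinct vertices --- and you handle both correctly. (Both proofs share the same benign edge case: when $G-V(A_i)$ has fewer than four vertices, one needs the convention that its hyperbolicity is~$0$.)
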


{
  \begin{proof}
  Let $A_i$ be one of the components in $G-\{v\}$ with $\delta(G[A_i\cup\{v\}])$ being minimum if $|V(A_j)\cup\{v\}|\geq 4$ for all $j\in[\ell]$, or with $|V(A_i)|$ being minimum otherwise.
  We distinguish three cases. 
  Let $a,b,c,d\in V$ such that (we assume $|V(A_1)|\geq 3$) either
  \begin{enumerate}[(i)]
    \item $a,b,c\in V\backslash (V(A_i)\cup\{v\})$ and $d\in V(A_i)$, or
    \item $a,b\in V\backslash (V(A_i)\cup\{v\})$ and $c,d\in V(A_i)$ (assuming $|V(A_i)|\geq 2$), or
    \item $a,b\in V\backslash (V(A_i)\cup\{v\})$, $c=v$, and $d\in V(A_i)$.
  \end{enumerate}

  \paragraph{Case (i):} 
  In this case, every shortest path from $d$ to any of $a,b,c$ contains $v$. Hence, we obtain
  \begin{align*}
  \dist{a}{b}+\dist{c}{d} &= \dist{a}{b}+\dist{v}{c}+\dist{v}{d}, \\
  \dist{a}{c}+\dist{b}{d} &= \dist{a}{c}+\dist{v}{b}+\dist{v}{d}, \\
  \dist{a}{d}+\dist{b}{c} &= \dist{b}{c}+\dist{v}{a}+\dist{v}{d},
  \end{align*}
  and thus $\delta(a,b,c,d)=\delta(a,b,c,v)$.

  \paragraph{Case (ii):} 
  In this case, every shortest path between $a,b$ and $c,d$ contains $v$. Hence, we obtain
  \begin{align*}
  \dist{a}{b}+\dist{c}{d} &= \dist{a}{b}+\dist{c}{d}, \\
  \dist{a}{c}+\dist{b}{d} &= \dist{a}{v}+\dist{v}{c}+\dist{v}{b}+\dist{v}{d}, \\
  \dist{a}{d}+\dist{b}{c} &= \dist{a}{v}+\dist{v}{c}+\dist{v}{b}+\dist{v}{d}.
  \end{align*}
  Since $\dist{a}{b}\leq \dist{a}{v}+\dist{v}{b}$ on the one hand, and $\dist{c}{d}\leq \dist{c}{v}+\dist{v}{d}$ on the other hand, it follows that $\delta(a,b,c,d)=0$.

  \paragraph{Case (iii):} 
  In this case, $c$ is contained in every shortest path. Hence, we obtain
  \begin{align*}
  \dist{a}{b}+\dist{c}{d} &= \dist{a}{b}+\dist{c}{d}, \\
  \dist{a}{c}+\dist{b}{d} &= \dist{a}{c}+\dist{b}{c}+\dist{c}{d}, \\
  \dist{a}{d}+\dist{b}{c} &= \dist{a}{c}+\dist{c}{d}+\dist{b}{c}.
  \end{align*}
  Since $\dist{a}{b}\leq \dist{a}{c}+\dist{c}{b}$, it follows that $\delta(a,b,c,d)=0$.

  Observe that the case where $a\in V\backslash (V(A_i)\cup\{v\})$, $c=v$, and $b,d\in V(A_i)$ reduces to~(iii).
  Since $A_i$ was chosen as $\delta(G[A_i\cup\{v\}])$ being minimum if $|V(A_j)\cup\{v\}|\geq 4$ for all $j\in[\ell]$, or with $|V(A_i)|$ being minimum otherwise, it follows that $\delta(G)=\delta(G-V(A_i))$.
  \end{proof}
}

\begin{rrule}\label{rrule:degree1}
 As long as there are more than four vertices, remove vertices of degree one.
\end{rrule}

\begin{lemma}%
  \label{lem:rrule-1-correct-lin-time}
	\cref{rrule:degree1} is correct and can be applied exhaustively in linear time.
\end{lemma}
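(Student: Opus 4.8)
The plan is to prove two things about \cref{rrule:degree1}: its \emph{correctness} (that removing a degree-one vertex while more than four vertices remain does not change the hyperbolicity) and that it can be applied \emph{exhaustively in linear time}. The correctness part should follow essentially immediately from \cref{lemma:1seps}. The key observation is that a degree-one vertex~$u$ is attached to the rest of the graph through a unique neighbor~$v$, so~$v$ is a $1$-separator whenever the graph has at least three vertices, and one of the components of~$G-\{v\}$ is exactly the singleton~$\{u\}$. Thus I would invoke \cref{lemma:1seps} to argue that there exists a component~$A_i$ with $\delta(G)=\delta(G-V(A_i))$; the subtlety is that \cref{lemma:1seps} only guarantees existence of \emph{some} such component, not necessarily the singleton~$\{u\}$. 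So the cleaner route is to show directly that deleting the singleton component~$\{u\}$ preserves~$\delta$. For this I would note that every shortest path from~$u$ to any other vertex passes through~$v$ and has the form $\dist{u}{w}=1+\dist{v}{w}$, so in any quadruple containing~$u$ the "$+1$" contributes equally to all three distance sums and hence cancels in every pairwise difference; consequently $\delta(a,b,c,u)=\delta(a,b,c,v)$, matching the Case~(i) computation in \cref{lemma:1seps}. Since any quadruple achieving the maximum hyperbolicity that uses~$u$ can be replaced by one using~$v$ with the same $\delta$-value, and since $G-\{u\}$ still contains~$v$ (as $|V|>4$), we get $\delta(G)=\delta(G-\{u\})$.

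For the running-time claim, I would describe a standard peeling procedure. First compute all vertex degrees in $O(n+m)$ time and initialize a queue (or stack) containing every vertex of degree one. Then repeatedly extract a vertex~$u$ from the queue, delete it, decrement the degree of its unique neighbor~$v$, and if $v$'s degree drops to one, enqueue~$v$. I would guard the whole loop with the condition that more than four vertices remain, halting once exactly four survive (or once no degree-one vertices remain). Each vertex and each edge is touched a constant number of times, so the total work is $O(n+m)$, i.e.\ linear. I would remark that using an adjacency-list representation with direct access to a vertex's neighbor is what makes the per-operation cost constant.

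The main obstacle is conceptual rather than computational: I must be careful about the stopping condition and about repeatedly applying the single-deletion argument. Correctness of one deletion gives $\delta(G)=\delta(G-\{u\})$ only when $|V(G)|>4$, so I need to argue that the exhaustive application is sound, i.e.\ that chaining these equalities is valid as long as the graph retains more than four vertices at each step. This is immediate by induction on the number of deletions, since each intermediate graph still satisfies the size hypothesis $|V|>4$ at the moment its degree-one vertex is removed. A second point requiring care is that a vertex which is not initially of degree one may \emph{become} degree one after its neighbors are deleted (the long-pendant-path case); the queue-based procedure handles this automatically, but I would state explicitly that "exhaustive" means no degree-one vertex remains (subject to the four-vertex floor), which is exactly what the cascading queue guarantees. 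With these two points addressed, combining the per-step correctness with the linear-time peeling yields the lemma.
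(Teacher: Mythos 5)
Your proposal is correct and its overall structure matches the paper's proof: correctness rests on the 1-separator analysis of \cref{lemma:1seps}, and the linear running time comes from the same cascading list/queue peeling procedure (collect degree-one vertices, delete one, decrement its neighbor's degree, enqueue the neighbor if it drops to degree one). The one place where you genuinely diverge is instructive: the paper disposes of correctness with a single sentence claiming it follows ``immediately'' from \cref{lemma:1seps}, whereas you observe---rightly---that the statement of \cref{lemma:1seps} only guarantees that \emph{some} component of $G-\{v\}$ can be deleted, not necessarily the singleton $\{u\}$, and you therefore redo the Case~(i) computation directly: $\dist{w}{u}=\dist{w}{v}+1$ for all $w\neq u$, so the ``$+1$'' cancels in all three distance sums and $\delta(a,b,c,u)=\delta(a,b,c,v)$. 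Your version is thus the more self-contained of the two, making explicit what the paper's ``immediately'' quietly relies on. One small point to tighten: if the maximizing quadruple contains both $u$ and its neighbor $v$, the replacement $u\mapsto v$ degenerates (the four vertices are no longer distinct); but in that case two of the three distance sums coincide and dominate the third, so $\delta(a,b,v,u)=0$ (this is exactly Case~(iii) in the proof of \cref{lemma:1seps}), and the needed inequality $\delta(G)\le\delta(G-\{u\})$ holds trivially. With that case noted, your induction over successive deletions (each intermediate graph still having more than four vertices) and the queue-based implementation give precisely the lemma.
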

{
\begin{proof}
	The soundness of \cref{rrule:degree1} follows immediately from \cref{lemma:1seps}.
	To apply \cref{rrule:degree1} in linear time do the following.
 	First, collect all degree one vertices in linear time in a list~$L$.
	Then, iteratively delete degree-one vertices and put their neighbor in~$L$ if it has degree one after the deletion.
	Each iteration can be applied in constant time.
	Thus, \cref{rrule:degree1} can be applied in linear time.
\end{proof}
}

\section{Polynomial Linear-Time Parameterized Algorithms}
In this section, we provide polynomial linear-time parameterized algorithms with respect to the parameters feedback edge number and number of vertices with degree at least three;
that is, algorithms with a linear-time dependence on the input size times a polynomial-time dependence on the parameter value.

To this end, we first introduce an auxiliary parameter, the \emph{minimum maximal paths cover number},
which we formally define below and also describe a polynomial linear-time paramaterized algorithm for it.

Building upon this result,
for the parameter feedback edge number we then show that,
after applying \cref{rrule:degree1}, the number of maximal paths can be upper bounded by a polynomial of the feedback edge number.
This implies a polynomial linear-time parameterized algorithm for the feedback edge number as well.
For the parameter number of vertices with degree at least three,
we introduce an additional reduction rule to achieve that the number of maximal paths is bounded in a polynomial of this parameter.
Again, this implies a polynomial linear-time algorithm.

\paragraph{Minimum maximal paths cover number.}

Consider the following definition.

\begin{definition}[Maximal path]
Let $G$ be a graph and $P$ be a path in $G$.
Then,
$P$ is a \emph{maximal path} if the following hold:
(1) it contains at least two vertices;
(2) all its inner nodes have degree two in $G$;
and
(3) either both its endpoints have degree at least three in $G$,
or one of its endpoints has degree at least three in~$G$
while the other endpoint is of degree two in $G$;
and
(4) $P$ is size-wise maximal with respect to these properties.
\end{definition}

We will be interested in the minimum number of maximal paths needed to cover the vertices of a given graph;
we call this number the \emph{minimum maximal paths cover number}.
While not all graphs can be covered by maximal paths
(e.g., edgeless graphs),
graphs which have minimum degree two and contain no isolated cycles can be covered by maximal paths
(it follows by, e.g., a greedy algorithm which iteratively selects an arbitrary uncovered vertex and exhaustively extend it arbitrarily;
since there are no isolated cycles and the minimum degree is two,
we are bound to eventually hit at least one vertex of degree three).
In the following lemma we show how to approximate the minimum maximal paths cover number,
for graphs which have minimum degree two and contain no isolated cycles.

\begin{lemma}%
  \label{lem:maximal-paths}
  There is a linear time algorithm which approximates the minimum maximal paths cover number
  for graphs which have minimum degree two and contain no isolated cycles.
\end{lemma}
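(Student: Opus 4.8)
The plan is to exploit the rigid structure of maximal paths so that most of the cover is \emph{forced} and only an easily approximable remainder needs a greedy step. First I would record that, under the hypotheses (minimum degree two, no isolated cycles), every maximal path has one of three types: (a) a \emph{nontrivial} path $(b_1, v_1, \ldots, v_k, b_2)$ with $k \ge 1$, all $v_i$ of degree two and $b_1,b_2$ of degree at least three; (b) a single edge $(b_1,b_2)$ between two degree-$\ge 3$ vertices; or (c) a path $(b, v_1, \ldots, v_k)$, $k\ge 1$, with all $v_i$ of degree two, $b$ of degree at least three, and $v_k$ adjacent back to $b$ (extending would close a cycle). The crucial claim is that every degree-two vertex lies on \emph{exactly one} maximal path: since a degree-two vertex has only two incident edges, size-wise maximality forces the path to run along the entire maximal chain of degree-two vertices containing it until reaching degree-$\ge 3$ endpoints, and such endpoints must exist because there are no isolated cycles.

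Next I would turn this into lower bounds. Let $M_2$ be the set of all maximal paths of type (a) or (c), i.e.\ those containing a degree-two vertex. By the claim, these are the only maximal paths that can cover a degree-two vertex, so $M_2$ is contained in \emph{every} cover, giving $\mathrm{OPT} \ge |M_2|$. Selecting $M_2$ covers all degree-two vertices together with every degree-$\ge 3$ vertex adjacent to one; the only vertices still uncovered are the degree-$\ge 3$ vertices all of whose neighbors have degree at least three, a set I call $B'$. A vertex of $B'$ can be covered only by a type-(b) edge, since any maximal path leaving it would immediately pass through a degree-$\ge 3$ neighbor. As each type-(b) edge covers at most two vertices of $B'$, any cover needs at least $|B'|/2$ of them, whence $\mathrm{OPT} \ge |M_2| + |B'|/2$.

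The algorithm then reads off $M_2$ by a single traversal (trace each chain from a degree-$\ge 3$ vertex along degree-two vertices until the next degree-$\ge 3$ vertex, using marks so each chain is visited once), and afterwards covers $B'$ greedily: repeatedly take a still-uncovered $b' \in B'$, add an arbitrary incident edge $\{b', b''\}$ (which exists and leads to a degree-$\ge 3$ vertex), and mark both endpoints covered. The greedy step uses at most $|B'|$ edges, so the cover returned has size at most $|M_2| + |B'|$. Using $|B'| \le 2(\mathrm{OPT}-|M_2|)$ from the lower bound gives a total of at most $|M_2| + 2(\mathrm{OPT}-|M_2|) \le 2\,\mathrm{OPT}$, so this is a constant-factor (in fact factor-two) approximation; since every phase is a constant number of linear scans of $G$, the running time is $O(n+m)$.

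I expect the main obstacle to be the structural bookkeeping rather than the counting. One has to argue cleanly that maximal paths correspond bijectively to maximal degree-two chains, handling the cycle-attachment type (c) carefully (the path stops one vertex short of closing), that no degree-two vertex can belong to two distinct maximal paths, and that $B'$-vertices are genuinely reachable only through type-(b) edges. The hypotheses of minimum degree two and no isolated cycles are precisely what guarantee that each chain terminates at degree-$\ge 3$ endpoints and that the traversal defining $M_2$ is well defined and terminates.
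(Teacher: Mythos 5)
Your proposal follows essentially the same route as the paper's proof: a two-phase greedy that first covers every maximal chain of degree-two vertices by a maximal path (the paper's phase one, implemented by extending from a degree-two vertex in both directions), and then covers the remaining uncovered degree-$\ge 3$ vertices greedily, with the factor $2$ coming from the observation that a maximal path contains at most two vertices of degree at least three. Your write-up is in fact more explicit than the paper's on one point: you argue that type-(a)/(c) paths can never cover a vertex of $B'$, which is what justifies adding the two lower bounds into $\mathrm{OPT}\ge |M_2|+|B'|/2$ rather than only taking their maximum; the paper leaves this disjointness implicit.

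However, one of your claims is false as stated and needs repair. On an attached cycle $b,v_1,\ldots,v_k,b$ (your type (c)), a degree-two vertex lies on exactly \emph{two} maximal paths, not one: $(b,v_1,\ldots,v_k)$ and $(v_1,\ldots,v_k,b)$, which differ in which of the two cycle edges at $b$ is omitted, and both of which are size-wise maximal. Consequently, if $M_2$ is literally ``the set of all maximal paths of type (a) or (c)'', then $M_2$ is \emph{not} contained in every cover (a cover needs only one of the two paths per attached cycle), and the stated bound $\mathrm{OPT}\ge |M_2|$ fails whenever attached cycles exist, since $|M_2|$ counts each such cycle twice. The fix is small and consistent with what your traversal actually computes: let $M_2$ contain \emph{one} maximal path per maximal degree-two chain. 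Since the inner degree-two vertices of any maximal path belong to a single chain, every cover must contain at least one path per chain, and these paths are pairwise distinct across chains; this yields $\mathrm{OPT}\ge C+|B'|/2$ where $C$ is the number of chains, and the remainder of your argument (cover size at most $C+|B'|\le 2\,\mathrm{OPT}$, linear time) goes through verbatim. This is exactly the accounting the paper uses, which never claims uniqueness of the covering path and therefore sidesteps the issue.
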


{
\begin{proof}
The algorithm operates in two phases.
In the first phase,
we greedily cover all vertices of degree two.
Specifically,
we arbitrarily select a vertex of degree two,
view it as a path of length one,
and arbitrarily try to extend it in both directions
(it has degree two, so, pictorially, has two possible directions for extension).
We stop extending it in each direction whenever we hit a vertex of degree at least three;
if it is the same vertex in both directions then we extend it only in one direction
(since a path cannot contain the same vertex more than once).

The second phase begins when all vertices of degree two are already covered.
In the second phase,
ideally we would find a matching between those uncovered vertices of degree at least three.
To get a $2$-approximation
we arbitrarily select a vertex of degree at least three,
view it as a path of length one,
and arbitrarily extend it until it is maximal.
This finishes the description of the linear-time algorithm.

For correctness of the first phase,
the crucial observation is that each vertex of degree two has two be covered by at least one path.
For the second phase,
$2$-approximation follows since
each maximal path can cover at most two vertices of degree at least three.
\end{proof}
}

Now we are ready to design a polynomial linear-time parameterized algorithm for \hyp with respect to
the minimum maximal paths cover number.

\begin{theorem}%
  \label{thm:maximal-paths}
	Let~$G = (V,E)$ be a graph and $k$ be its minimum maximal paths cover number.
	Then, \hyp can be solved in~$O(k^4 (n+m))$ time.
\end{theorem}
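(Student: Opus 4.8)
The plan is to exploit the structure imposed by a small maximal-paths cover number~$k$: the graph decomposes into few long paths joined at a small set of high-degree ``branch'' vertices. First I would apply \cref{rrule:degree1} exhaustively (correct and linear-time by \cref{lem:rrule-1-correct-lin-time}), and handle tiny cases directly, so that afterwards the graph has minimum degree two apart from a bounded remainder; by \cref{lem:maximal-paths} we can compute in linear time a cover by $O(k)$ maximal paths. The key observation is that the set~$V^{\ge 3}_G$ of branch vertices has size $O(k)$, since each maximal path has at most two degree-$\ge 3$ endpoints and the paths cover all vertices. All remaining vertices are interior to some path and have degree exactly two.

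Next I would argue that hyperbolicity is witnessed by a quadruple whose vertices are ``essentially'' governed by the branch vertices. The idea is that for a vertex~$v$ interior to a maximal path~$P$, the distance from~$v$ to any vertex~$w$ outside~$P$ factors through the two endpoints of~$P$: a shortest $v$-$w$ path must leave~$P$ through one of its (at most two) degree-$\ge 3$ endpoints, so $\dist{v}{w}$ equals $\min$ over the endpoints~$e$ of $\distP{v}{e}{P} + \dist{e}{w}$. Consequently, once we fix which branch endpoints are used, the distance from an interior vertex behaves as an affine (piecewise-linear) function of its position along its path. The central step is therefore to bound, for each choice of one, two, three, or four paths from which the quadruple~$a,b,c,d$ is drawn, how the quantity $\delta(a,b,c,d)$ depends on the positions of the chosen vertices within their paths. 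Since each distance sum $D_1,D_2,D_3$ becomes piecewise-linear in the (at most four) position variables, and $\delta(a,b,c,d)$ is a difference of the two larger sums, a maximizer is attained at a vertex of the underlying polyhedral cells; the number of relevant ``breakpoints'' per path is $O(k)$ (one per possible target branch vertex), so only $O(k^4)$ candidate position-combinations per choice of paths need to be examined.

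The hard part will be the careful bookkeeping in this piecewise-linear analysis: correctly enumerating the cases according to how many of the four selected vertices lie on a common path (including the degenerate cases where two or more of $a,b,c,d$ coincide on one path, where intra-path distances $\distP{\cdot}{\cdot}{P}$ must be used directly), and verifying that in every case the optimum over interior positions is achieved either at a path endpoint (a branch vertex) or at one of the $O(k)$ breakpoints where the identity of the optimal exit-endpoint switches. I would precompute all pairwise distances between branch vertices, which is $O(k)$ single-source shortest-path computations each costing $O(n+m)$, giving an $O(k(n+m))$ preprocessing term that is dominated by the claimed bound.

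Finally I would assemble the running time. There are $\binom{O(k)}{\le 4} = O(k^4)$ ways to choose the (at most four) paths hosting the quadruple; for each, the piecewise-linear maximization over interior positions examines $O(k^4)$ breakpoint combinations, but after fixing the paths this inner search is over a constant number of position variables with $O(k)$ breakpoints each and evaluating $\delta$ for a fixed combination takes constant time using the precomputed branch-to-branch distances. Balancing these contributions yields total time $O(k^4(n+m))$, matching the statement. I expect that the only genuinely delicate point, beyond the case enumeration, is justifying that no maximizing quadruple is missed when vertices share a path, which follows because \cref{lem:hyp-distance-bounded} bounds the contribution of any such near-collision and the affine dependence confines extrema to the enumerated breakpoints.
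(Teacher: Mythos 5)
Your plan reconstructs the paper's skeleton almost exactly: apply \cref{rrule:degree1}, obtain $O(k)$ covering maximal paths via \cref{lem:maximal-paths}, run BFS from the $O(k)$ path endpoints to precompute endpoint-to-endpoint distances in $O(k(n+m))$ time, enumerate the $O(k^4)$ assignments of $\{a,b,c,d\}$ to paths, and express each pairwise distance as a minimum of affine functions of the positions, since a shortest path leaving a maximal path must exit through one of its two endpoints. The genuine gap is in the inner maximization. You assert that a maximizer ``is attained at a vertex of the underlying polyhedral cells,'' but the positions are \emph{integers}, and the cells are cut out by constraints of the form $2p \le c$ (obtained by comparing two exit options, where the position variables appear with coefficient $2$), so cell vertices are in general half-integral. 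The maximum of $\delta$ over integer positions need not sit at such a vertex, and evaluating at a fractional vertex can strictly overestimate $\delta(G)$, so your enumeration is neither sound nor obviously complete as stated. This is precisely the difficulty the paper resolves differently: for each assignment it builds an \emph{integer} linear program with a constant number of variables and constraints (one ILP per choice of which exit realizes each of the six minima and per ordering of $D_1,D_2,D_3$) and solves it with Lenstra's fixed-dimension ILP algorithm in $(\log n)^{O(1)}$ time. Your route could be repaired, e.g., by an LP/ILP proximity theorem (with constant dimension and constant-size coefficients, an integer optimum lies within constant distance of an LP vertex, so a constant-size integer neighborhood of each cell vertex can be enumerated), but you neither notice the issue nor supply such an argument.

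Two further problems. First, your claim of ``$O(k)$ breakpoints per path'' is off: once the four host paths are fixed, each pairwise distance is a minimum of only \emph{four} affine functions (two exits per endpoint pair), so the cell structure has constant complexity; taking your count at face value, your own accounting gives $O(k^4)$ assignments times $O(k^4)$ inner combinations, i.e.\ $O(k^8)$ work, which is \emph{not} bounded by $O(k^4(n+m))$ (the number of maximal paths $k$ can be polynomial in $m$), so the stated running time does not follow from your argument. Second, the case where two or more of $a,b,c,d$ lie on the same path is not mere bookkeeping: the positions must then be coupled and ordered along the shared path, and the distance between two same-path vertices is the minimum of the along-path distance $\distP{x}{y}{P}$ and routes exiting through the endpoints; the paper devotes a separate block of constraints to this case, whereas your appeal to \cref{lem:hyp-distance-bounded} to handle ``near-collisions'' does not address it.
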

{
\begin{proof}
We begin with some preprocessing.
First,
we apply \cref{rrule:degree1} to have a graph with no vertices of degree one.
Second,
we check whether there are any isolated cycles;
if there are,
then we consider the largest isolated cycle,
and compute its hyperbolicity.
If its hyperbolicity is at least $\delta$ then we have a yes-instance and we halt;
otherwise,
we remove all isolated cycles and continue.

Now we use \cref{lem:maximal-paths} to get a set of at most $2k$ maximal paths which cover $G$.
By initiating a breadth-first search from each of the endpoints of those maximal paths,
we can compute the pairwise distances between those endpoints in $O(k (n+m))$ time.
Thus,
for the rest of the algorithm we assume that we can access the distances between any two vertices
which are endpoints of those maximal paths in constant time.

	Let~$(a,b,c,d)$ be a quadruple such that~$\delta(a,b,c,d)=\delta(G)$. Since the set~${\cal P}$ covers all vertices of~$G$, each vertex of~$a$, $b$, $c$, and~$d$ belongs to some path~$P\in {\cal P}$. Since~$|\mathcal{P}| = k$, there are~$O(k^4)$ possibilities to assign the vertices~$a$, $b$, $c$, and~$d$ to paths in~$\mathcal{P}$. For each possibility we compute the maximum hyperbolicity respecting the assignment in linear time, that is, we compute the positions of the vertices on their respective paths that maximize~$\delta(a,b,c,d)$. We achieve the running time by formulating an integer linear program (ILP) with a constant number of variables and constraints whose coeffecients have value at most~$n$. 
	 
	To this end, denote with~$P_a, P_b, P_c, P_d \in \mathcal{P}$ the paths containing~$a,b,c,d$, respectively.
	We assume for now that these paths are different and deal later with the case that one path contains at least two vertices from~$a,b,c,d$.
	Let~$a_1$ and~$a_2$ ($b_1,b_2,c_1,c_2,d_1,d_2$) be the endpoints of~$P_a$ ($P_b, P_c, P_d$, respectively).
	Furthermore, denote by~$\ell(P)$ the length of a path~$P \in \mathcal{P}$, that is, the number of its edges.\todo{C: I propose to change to~$|P|$ because this will make the proof of Lemma 7 more readable.}
	Without loss of generality assume that~$D_1 \le D_2 \le D_3$.
	We now compute the positions of the vertices on their respective paths that maximize~$D_1 - D_2$ by solving an ILP.
	Recall that~$\distP{v_1}{v}{P_v}$ denotes the distance of~$v$ to~$v_1$ on~$P_v$.
	Thus, $\distP{v_1}{v}{P_v} + \distP{v}{v_2}{P_v} = \ell(P_v)$ and~$\distP{v_1}{v}{P_v} \ge 0$ and~$\distP{v}{v_2}{P_v} \ge 0$.
	The following is a compressed description of the ILP containing the minimum function. 
	We describe below how to remove it.
	\begin{align}
		\text{maximize:} 				&& & D_1 - D_2 \\
		\text{subject to:} 				&& D_1 & = \dist{a}{b}+\dist{c}{d} \\
										&& D_2 & = \dist{a}{c}+\dist{b}{d} \\
										&& D_3 & = \dist{a}{d}+\dist{b}{c} \\
										&& D_1 &\le D_2 \le D_3 \\
		\forall x \in \{a,b,c,d\}:		&& \ell(P_x) & = \distP{x_1}{x}{P_x} + \distP{x}{x_2}{P_x} \label{line:vars} \\ 
		\forall x,y \in \{a,b,c,d\}:	&& \dist{x}{y} & = \min \left\{
		\begin{array}{c}
			\distP{x_1}{x}{P_x} + \dist{x_1}{y_1} + \distP{y_1}{y}{P_y}, \\
			\distP{x_1}{x}{P_x} + \dist{x_1}{y_2} + \distP{y}{y_2}{P_y}, \\
			\distP{x}{x_2}{P_x} + \dist{x_2}{y_1} + \distP{y_1}{y}{P_y}, \\
			\distP{x}{x_2}{P_x} + \dist{x_2}{y_2} + \distP{y}{y_2}{P_y}  
		\end{array} \right\}\label{line:min}
		\end{align}
	First, observe that the ILP obviously has a constant number of variables. The only constant coefficients are~$\dist{x_i}{y_j}$ for~$x,y \in \{a,b,c,d\}$ and~$i,j \in \{1,2\}$ and obviously have value at most~$n-1$.
	To remove the minimization function in \cref{line:min}, we use another case distinction:
	We simply try all possibilities of which value is the smallest one and adjust the ILP accordingly. 
	For example, for the case that the minimum in \cref{line:min} is~$\distP{x}{x_1}{P_x} + \dist{x_1}{y_1} + \distP{y_1}{y}{P_y}$, we replace this equation by the following:
	\begin{align*}
		\dist{x}{y} =   \distP{x_1}{x}{P_x} + \dist{x_1}{y_1} + \distP{y_1}{y}{P_y} \\
		\dist{x}{y} \le \distP{x_1}{x}{P_x} + \dist{x_1}{y_2} + \distP{y}{y_2}{P_y} \\
		\dist{x}{y} \le \distP{x}{x_2}{P_x} + \dist{x_2}{y_1} + \distP{y_1}{y}{P_y} \\
		\dist{x}{y} \le \distP{x}{x_2}{P_x} + \dist{x_2}{y_2} + \distP{y}{y_2}{P_y} 
	\end{align*}
	There are four possibilities of which value is the smallest one, and we have to
        consider each of them independently for each of the $\binom{4}{2} = 6$ pairs.
        Hence, for each assignment of the vertices~$a$, $b$, $c$, and~$d$ to paths
        in~$\mathcal{P}$, we need to solve $4 \cdot 6 = 24$ different ILPs in order to
        remove the minimization function.%
        Since each ILP has a constant number of variables and constraints, this takes~$L^{O(1)}$ time where~$L=O(\log n)$ is the total size of the ILP instance (for example by using the algorithm of Lenstra~\cite{Len83}).
	
	It remains to discuss the case that at least two vertices of~$a$, $b$, $c$, and~$d$ are assigned to the same path $P \in \mathcal{P}$.
	We show the changes in case that~$a$, $b$, and $c$ are mapped to~$P_{a} \in \mathcal{P}$.
	We assume without loss of generality that the vertices~$a_1,a,b,c,a_2$ appear in this order in~$P$ (allowing $a=a_1$ and~$c=a_2$). 
	The adjustments for the other cases can be done in a similar fashion.
	The objective function as well as the first four lines of the ILP remain unchanged.
	\cref{line:vars} is replaced with the following:
	\begin{align*}
		\ell(P_a) & = \distP{a_1}{a}{P_a} + \distP{a}{b}{P_a} + \distP{b}{c}{P_a} + \distP{c}{a_2}{P_a} \\
		\ell(P_d) & = \distP{d_1}{d}{P_d} + \distP{d}{d_2}{P_a}
	\end{align*}
	To ensure that \cref{line:min} works as before, we add the following:
	\begin{align*}
		\distP{a}{a_2}{P_a} & = \distP{a}{b}{P_a} + \distP{b}{c}{P_a} + \distP{c}{a_2}{P_a} \\
		\distP{b_1}{b}{P_b} & = \distP{a_1}{a}{P_a} + \distP{a}{b}{P_a} \\
		\distP{b}{b_2}{P_b} & = \distP{b}{c}{P_a} + \distP{c}{a_2}{P_a} \\
		\distP{c_1}{c}{P_c} & = \distP{a_1}{a}{P_a} + \distP{a}{b}{P_a} + \distP{b}{c}{P_a} \\
		\distP{c}{c_2}{P_c} & = \distP{c}{a_2}{P_a}  	
	\end{align*}
\end{proof}
}

\paragraph{Feedback edge number.}

We next show a polynomial linear-time parameterized algorithm with respect to the parameter feedback edge number~$k$.
The idea is to show that a graph that is reduced with respect to~\cref{rrule:degree1} contains $O(k)$ maximal paths.

\begin{theorem}%
\label{thm:fes}
	\hyp can be computed in $O(k^4(n+m))$ time, where~$k$ is the feedback edge number.
\end{theorem}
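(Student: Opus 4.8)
The plan is to reduce the feedback edge number parameter to the minimum maximal paths cover number, and then invoke \cref{thm:maximal-paths}. The key structural claim is that after applying \cref{rrule:degree1} exhaustively, a graph with feedback edge number~$k$ has minimum maximal paths cover number in~$O(k)$; then the $O(k^4(n+m))$ running time follows directly from the previous theorem.

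First I would preprocess by applying \cref{rrule:degree1} in linear time (correctness and running time guaranteed by \cref{lem:rrule-1-correct-lin-time}), so that the reduced graph~$G'$ has minimum degree at least two (assuming more than four vertices remain; small instances are handled by brute force). Let~$F$ be a minimum feedback edge set with $|F| = k$, so that $T = G' - F$ is a forest. The core of the argument is a degree/counting bound: I claim that the number of vertices of degree at least three in~$G'$ is~$O(k)$. To see this, observe that deleting the~$k$ edges of~$F$ can decrease the degree of at most~$2k$ vertices, and in the forest~$T$ the number of vertices of degree at least three is bounded in terms of the number of leaves. Since every degree-one vertex of~$T$ must either be incident to an edge of~$F$ (there are at most~$2k$ such endpoints) or have had degree at least two in~$G'$, and since $G'$ has no degree-one vertices, a standard handshake/counting argument on the forest shows $|V^{\ge 3}_{G'}| = O(k)$.

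With the number of branching vertices bounded, I would next bound the number of maximal paths. Each maximal path has at least one endpoint of degree at least three, and every edge of~$G'$ lies on exactly one maximal path (the maximal paths partition the edge set, since inner nodes have degree two). Counting by endpoints: the maximal paths incident to a fixed high-degree vertex~$v$ are at most~$\deg_{G'}(v)$ in number, and summing degrees over~$V^{\ge 3}_{G'}$ together with the bound on~$|F|$ gives that the total number of maximal paths is~$O(k)$. (The vertices not of degree~$\ge 3$ all have degree exactly two and lie in the interior of these paths or on isolated cycles, which contribute at most~$O(k)$ further components.) Hence the minimum maximal paths cover number of~$G'$ is~$O(k)$, and \cref{thm:maximal-paths} applied to~$G'$ solves \hyp in $O(k^4(n+m))$ time; since the preprocessing is linear and does not change the hyperbolicity, the bound carries over to~$G$.

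The main obstacle I anticipate is making the counting bound $|V^{\ge 3}_{G'}| = O(k)$ fully rigorous, in particular accounting for the fact that removing~$F$ from~$G'$ may split maximal paths and create new forest leaves, and ensuring that isolated cycles (components that are simple cycles, which have no degree-$\ge 3$ vertex and so are invisible to the branching-vertex count) are treated separately---exactly as \cref{thm:maximal-paths} already does in its preprocessing. Care is also needed because $F$ need not be edge-disjoint from any particular maximal path, so the correspondence between feedback edges and forest leaves must be argued through the degree sequence rather than a direct bijection. Once these bookkeeping points are handled, the reduction to \cref{thm:maximal-paths} is immediate.
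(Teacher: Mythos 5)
Your proposal is correct and shares the paper's skeleton---apply \cref{rrule:degree1}, bound the number of maximal paths of the reduced graph~$G'$ by~$O(k)$, and invoke \cref{thm:maximal-paths}---but the core counting argument is genuinely different. Both proofs begin identically: since $G'$ has minimum degree two, every leaf of the forest $G'-F$ is an endpoint of a feedback edge, so the forest has at most~$2k$ leaves and hence at most~$2k$ vertices of degree at least three. From there the paper keeps counting maximal paths: it bounds the maximal paths of the forest $G'-F$ by~$O(k)$ (edges of the forest obtained by contracting degree-two vertices) and then shows that inserting a single edge into \emph{any} graph creates at most five new maximal paths, so re-inserting the $k$ feedback edges yields at most~$7k$ maximal paths in~$G'$. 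You instead count statically in~$G'$: you bound $|V^{\ge 3}_{G'}| \le |V^{\ge 3}_{G'-F}| + 2k = O(k)$, note that every maximal path has an endpoint in~$V^{\ge 3}_{G'}$ and that distinct maximal paths ending at a vertex~$v$ use distinct edges incident to~$v$, and so bound the number of maximal paths by $\sum_{v \in V^{\ge 3}_{G'}} \deg_{G'}(v)$. That this sum is~$O(k)$ is the one step you leave implicit (``together with the bound on~$|F|$''); it does close via handshaking: $G'$ has feedback edge number at most~$k$, so $m' \le n' + k$, hence $\sum_{v}(\deg_{G'}(v)-2) = 2m' - 2n' \le 2k$, giving $\sum_{v \in V^{\ge 3}_{G'}} \deg_{G'}(v) \le 2k + 2|V^{\ge 3}_{G'}| = O(k)$. (Note that this same computation yields $|V^{\ge 3}_{G'}| \le 2k$ directly, since each vertex of degree at least three contributes at least one to the left-hand sum, so your detour through the forest to bound $|V^{\ge 3}_{G'}|$ is not even needed.) What each approach buys: yours avoids the paper's delicate case analysis of how maximal paths split and merge when an edge is inserted, at the cost of the explicit degree-sum bookkeeping; the paper's incremental argument needs no degree counting but must verify the ``at most five new paths per insertion'' claim. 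Your treatment of isolated cycles (each consumes a feedback edge, so there are at most~$k$ of them, and the preprocessing inside \cref{thm:maximal-paths} removes them) matches the paper's and is correct.
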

{
\begin{proof}
	The first step of the algorithm is to reduce the input graph exhaustively with respect to \cref{rrule:degree1}.
	By \cref{lem:rrule-1-correct-lin-time} we can exhaustively apply \cref{rrule:degree1} in linear time.
	
	Denote by~$X \subseteq E$ a minimum feedback edge set for the reduced graph~$G = (V,E)$ and observe that $|X| = k$. We will show that the minimum maximal paths cover number of~$G$ is~$O(k)$. More precisely, we show the slightly stronger claim that the number of maximal paths in~$G$ is~$O(k)$. 

Observe that all vertices in~$G$ have degree at least two since~$G$ is reduced with respect to \cref{rrule:degree1}.    
	Thus, every leaf of~$G-X$ is incident with at least one feedback edge which implies that there are at most~$2k$ leaves in~$G-X$. 
	Moreover, since~$G-X$ is a forest, the number of vertices with degree at least three in~$G-X$ is at most the number of leaves in~$G-X$ and thus at most~$2k$. This implies that the number of maximal paths in~$G-X$ is at most~$2k$ (each maximal path corresponds to an edge in the forest obtained from~$G-X$ by contracting all degree-two vertices).
        
        We now show the bound for~$G$ by showing that an insertion of an edge into any graph~$H$ increases the number of maximal paths by at most five. Hence, consider a graph~$H$ and let~$\{u,v\}$ be an edge that is inserted into~$H$; denote the resulting graph by~$H'$. First, each edge can be part of at most one maximal path in any graph. Therefore, there is at most one maximal path~$P$ in~$H'$ that contains~$\{u,v\}$. The only vertices of~$P$ that can be in further in maximal paths of~$H'$ are the endpoints of~$P$. If an endpoint~$w$ of~$P$ has degree at least three, in~$H$ then each maximal path of~$H$ containing this endpoint is also maximal path in~$H'$. Otherwise, that is, if $w$ has degree two in~$H$, then there can be at most two new maximal paths containing~$w$, one for each edge that is incident with~$w$ in~$H$. Thus, the number of maximal paths containing~$w$ and different from~$P$ increases by at most two. Therefore, the insertion of the~$k$ edges of~$X$ in~$G-X$ increases the number of maximal paths by at most~$5k$. 
	Thus~$G$ contains at most~$7k$ maximal paths. The statement
        of the theorem now follows from \cref{thm:maximal-paths}.
\end{proof}
}

\paragraph{Number of vertices with degree at least three.}
We finally show a polynomial-linear time parameterized algorithm with respect to the number~$k$ of vertices with degree three or more.
To this end, we use the following data reduction rule to bound the number of maximal paths in the graph by~$O(k^2)$ (in order to make use of~\cref{thm:maximal-paths}).

\begin{rrule}\label{rrule:paths-between-high-degree-vertices}
	Let~$G=(V,E)$ be a graph, $u,v \in V^{\ge 3}_{G}$ be two vertices of degree at least three, and~$\mathcal{P}_{uv}$ be the set of maximal paths in~$G$ with endpoints~$u$ and~$v$.
	Let~$\mathcal{P}_{uv}^9 \subseteq \mathcal{P}_{uv}$ be the set containing the shortest path, the four longest even-length paths, and the four longest odd-length paths in~$\mathcal{P}_{uv}$.
	If~$\mathcal{P}_{uv} \setminus \mathcal{P}_{uv}^9 \neq \emptyset$, then delete in~$G$ all inner vertices of the paths in~$\mathcal{P}_{uv} \setminus \mathcal{P}_{uv}^9$.
\end{rrule}

\begin{lemma}%
  \label{lem:paths-rule-correct}
	\cref{rrule:paths-between-high-degree-vertices} is correct and can be exhaustively applied in linear time.
\end{lemma}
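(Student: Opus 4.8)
The plan is to establish the two assertions separately. Write $G'$ for the graph obtained from $G$ by deleting the inner vertices of the paths in $\mathcal{P}_{uv}\setminus\mathcal{P}_{uv}^9$; correctness means $\delta(G)=\delta(G')$. The easy inequality is $\delta(G')\le\delta(G)$: since $\mathcal{P}_{uv}^9$ retains a shortest $u$-$v$~path, $\dist{u}{v}$ is unchanged, and any shortest path of $G$ between two surviving vertices that used a deleted $u$-$v$~path can be rerouted along the retained shortest one without increasing its length. Hence all pairwise distances among surviving vertices are preserved, so every quadruple of $G'$ is a quadruple of $G$ with the same $\delta$-value.

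The crux is the reverse inequality $\delta(G)\le\delta(G')$. I would fix a quadruple $a,b,c,d$ with $\delta(a,b,c,d)=\delta(G)$ and reproduce it inside $G'$. The key structural observation is that the inner vertices of a path $P\in\mathcal{P}_{uv}$ have degree two, so every shortest path from such a vertex~$x$ to a vertex off~$P$ leaves $P$ through $u$ or $v$. Writing $s_x:=\dist{u}{x}+\dist{x}{v}$ and $\Delta_x:=\dist{u}{x}-\dist{x}{v}$ (with $|\Delta_x|\le\dist{u}{v}$), every distance from $x$ to a vertex not on~$P$ equals $s_x/2$ plus a term depending only on~$\Delta_x$; the same holds for a distance between two on-path vertices whenever their shortest connection runs through $u$ or $v$. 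Since $x$ occurs in exactly one summand of each of $D_1,D_2,D_3$, these $s_x/2$-contributions cancel in the differences $|D_i-D_j|$, \emph{except} in the single situation where two quadruple vertices lie on a common path and are joined by the direct sub-path, whose length grows with the length of the path. This is precisely why the rule must retain the longest paths (to reproduce large $s_x=\ell(P)$) and paths of both parities (to reproduce the correct parity of~$\Delta_x$).

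Concretely, the relocation proceeds as follows. The at most four quadruple vertices that are inner vertices of paths in $\mathcal{P}_{uv}$ lie on at most four distinct paths, hence on at most four \emph{used} paths of each parity. Any deleted path of a given parity is no longer than each of the four retained longest paths of that parity, so ordering the used paths of a parity by length and mapping the $i$-th longest used path to the $i$-th longest retained path of that parity yields a length-respecting injection into distinct retained paths. I would then move all quadruple vertices sitting on a used path~$P$ onto its target~$P'$, placing each at the position on~$P'$ preserving its value~$\Delta_x$ (possible because $P'$ has the same parity and is at least as long, so the required interior positions exist). This keeps the direct sub-path distances between co-located vertices unchanged, leaves every $\Delta$-dependent term unchanged, and can only enlarge the cancelling $s_x$-terms, while $\dist{u}{v}$ is preserved; together this shows the relocated quadruple realizes $\delta$-value at least $\delta(a,b,c,d)$ in~$G'$. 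The \textbf{main obstacle} is exactly this same-path case: verifying carefully that passing to an at-least-as-long path of matching parity never decreases $\delta(a,b,c,d)$, and checking the degenerate configurations (vertices near the ends of a path, paths of length close to $\dist{u}{v}$, and three quadruple vertices on one path).

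For the running time I would first contract all maximal degree-two paths, obtaining in $O(n+m)$ time, for every pair $\{u,v\}\subseteq V^{\ge 3}_{G}$, the multiset of lengths of the paths in $\mathcal{P}_{uv}$ grouped by their endpoint pair. For each group a single scan records a shortest path together with the four longest even-length and four longest odd-length paths; the inner vertices of all remaining paths are then deleted. The total work is linear in the number of maximal paths and their combined length, that is $O(n+m)$. A single pass already yields exhaustiveness: deleting inner vertices only removes degree-two vertices that belong to no other pair's paths, so no new parallel paths are created anywhere, and since the rule fires only when $|\mathcal{P}_{uv}|>9$ it retains at least three paths between $u$ and $v$, keeping every vertex of $V^{\ge 3}_{G}$ of degree at least three. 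Thus the rule is no longer applicable after one linear-time sweep.
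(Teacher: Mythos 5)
Your easy direction ($\delta(G')\le\delta(G)$) and your linear-time implementation are fine and essentially match the paper's. The gap is in the hard direction, and it sits exactly where you flagged it---but it is not a verification detail that can be patched, it is a failure of your relocation rule itself. Your rule keeps co-located quadruple vertices co-located: all quadruple vertices on a used deleted path $P$ are moved to the \emph{same} retained path $P'$ at $\Delta$-preserving positions, i.e., both path coordinates shifted by $(\ell(P')-\ell(P))/2$. Now suppose exactly two quadruple vertices $a,b$ are inner vertices of $P$, that $\dist{a}{b}=\distP{a}{b}{P}$ (the direct sub-path is their shortest connection), and that $D_1=\dist{a}{b}+\dist{c}{d}$ is the largest of the three sums. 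Your move leaves $\dist{a'}{b'}\le\distP{a'}{b'}{P'}=\distP{a}{b}{P}$ and $\dist{c}{d}$ unchanged, so $D_1'\le D_1$; but each cross distance $\dist{a}{c},\dist{a}{d},\dist{b}{c},\dist{b}{d}$ grows by $(\ell(P')-\ell(P))/2$, so $D_2'=D_2+(\ell(P')-\ell(P))$ and $D_3'=D_3+(\ell(P')-\ell(P))$. Whenever $\ell(P')>\ell(P)$ this can make the relocated quadruple's value strictly smaller than $\delta(a,b,c,d)$ (e.g., with $D_2=D_3$ one gets $\max\{0,\,\delta-(\ell(P')-\ell(P))\}$), so your witness quadruple does not certify $\delta(G')\ge\delta(G)$, and the ``never decreases'' claim you defer to later checking is simply false for this configuration. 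This is not a corner case: it is the case the paper isolates as Case~(II-2), and its fix is a genuinely different relocation---$a$ and $b$ are placed on two \emph{distinct} retained paths $P'$ and $P''$, which forces $\dist{a'}{b'}$ to be the around distance through $u$ or $v$, so that $D_1$ increases in lockstep with $D_2$ and $D_3$. This, and not only injectivity of your path assignment, is why \cref{rrule:paths-between-high-degree-vertices} retains \emph{four} longest paths of each parity: one needs two sufficiently long paths of matching parity that both avoid the remaining quadruple vertices.

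A second configuration your $s_x/\Delta_x$ decomposition does not cover is all four vertices lying on one deleted path with their shortest connections wrapping around the cycle formed by $P$ and a shortest $u$-$v$ path: then no pairwise distance has the form ``$s_x/2$ plus a $\Delta_x$-term,'' co-located relocation changes the relevant cycle length, and the cancellation argument breaks down. The paper handles this (its Case~(III)) not by relocation bookkeeping but by invoking the known characterization of the hyperbolicity of cycles due to Koolen and Moulton, showing that the longer cycle present in $G'$ already contains a quadruple of value greater than $\ell(P)/2\ge\delta(G)$ (the bound coming from \cref{lem:hyp-distance-bounded}). So your framework needs at least these two additional, case-specific arguments---splitting co-located vertices across distinct retained paths, and the cycle argument---before the inequality $\delta(G)\le\delta(G')$ goes through.
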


{
\begin{proof}
	We first prove the running time. 
	We compute in linear time the set~$V^{\ge 3}_{G}$ of all vertices with degree at least three.
	Then for each~$v \in V^{\ge 3}_{G}$ we do the following.
	Starting from~$v$, we perform a modified breadth-first search that stops at vertices in~$V^{\ge 3}_{G}$. 
	Let~$R(V^{\ge 3}_{G},v)$ denote the visited vertices and edges.
	Observe that~$R(V^{\ge 3}_{G},v)$ consists of~$v$, some degree-two vertices, and all vertices of~$V^{\ge 3}_{G}$ that can be reached from~$v$ via maximal paths in~$G$. 
	Furthermore, with the breadth-first search approach we can also compute for all~$u \in R(V^{\ge 3}_{G},v) \cap V^{\ge 3}_{G}$ with~$u \neq v$ the number of maximal paths between~$u$ and~$v$ and their respective lengths.
	Then, in time linear in~$|R(V^{\ge 3}_{G},v)|$, we remove the paths in~$\mathcal{P}_{uv} \setminus \mathcal{P}_{uv}^9$ for all~$u \in R(V^{\ge 3}_{G},v) \cap V^{\ge 3}_{G}$.
	Thus, we can apply \cref{rrule:paths-between-high-degree-vertices} for each~$v \in V^{\ge 3}_{G}$ in~$O(|R(V^{\ge 3}_{G},v)|)$ time.
	Altogether, the running time is
	$$O(\sum_{v\in V^{\ge 3}_{G}} |R(V^{\ge 3}_{G},v)|) = O(n+m)$$
	where the equality follows from the fact each edge and each maximal path in~$G$ is visited twice by the modified breadth-first search.
	
	We now prove the correctness of the data reduction rule.
	To this end, let~$G = (V,E)$ be the input graph, let~$P \in \mathcal{P}_{uv} \setminus \mathcal{P}_{uv}^9$ be a maximal path from~$u$ to~$v$ whose inner vertices are removed by the application of the data reduction rule, and let~$G' = (V', E')$ be the resulting graph.
	We show that~$\delta(G) = \delta(G')$.
	The correctness of \cref{rrule:paths-between-high-degree-vertices} follows then from iteratively applying this argument.
	First, observe that since~$\mathcal{P}_{uv}^9$ contains the shortest maximal path of~$\mathcal{P}_{uv}$, it follows that~$u$ and~$v$ have the same distance in~$G$ and~$G'$.
	Furthermore, it is easy to see that each pair of vertices~$w,w' \in V'$ has the same distance in~$G$ and~$G'$ (\cref{rrule:paths-between-high-degree-vertices} removes only paths and does not introduce degree-one vertices).
	Hence, we have that~$\delta(G) \ge \delta(G')$ and it remains to show that~$\delta(G) \le \delta(G')$
	
	Towards showing that~$\delta(G) \le \delta(G')$, let~$a,b,c,d \in V$ be the four vertices defining the hyperbolicity of~$G$, that is, $\delta(G) = \delta(a,b,c,d)$.
	If~$P$ does not contain any of these four vertices, then we are done.
	Thus, assume that~$P$ contains at least one vertex from~$\{a,b,c,d\}$. 
	(For convenience, we say in this proof that a path~$Q$ contains a vertex~$v$ if~$v$ is an inner vertex of~$Q$ because \cref{rrule:paths-between-high-degree-vertices} does neither delete~$u$ nor~$v$.) 
	We next make a case distinction on the number of vertices of~$\{a,b,c,d\}$ that are contained in~$P$.

	\emph{Case (I): $P$ contains one vertex of~$\{a,b,c,d\}$.}
	Without loss of generality assume~$P$ contains~$a$. We show that we can replace~$a$ by another vertex~$a'$ in a path~$P' \in \mathcal{P}_{uv}^9$ such that~$\delta(a,b,c,d) = \delta(a',b,c,d)$.
	Since~$P$ contains~$a$, we can chose~$P'$ as one of the four (odd/even)-length longest paths in $\mathcal{P}_{uv}^9$ such that 
	\begin{itemize}
		\item $\ell(P') - \ell(P)$ is nonnegative and even (either both lengths are even or both are odd) and 
		\item $P'$ contains no vertex of~$\{b,c,d\}$. 
	\end{itemize}
	Since~$P$ is removed by \cref{rrule:paths-between-high-degree-vertices}, it follows that~$\ell(P) \le \ell(P')$.
	We chose~$a'$ on~$P'$ such that~$\distP{u}{a'}{P'} = \distP{u}{a}{P} + (\ell(P') - \ell(P))/2$.
	Observe that this implies that~$\distP{a'}{v}{P'} = \distP{a}{v}{P} + (\ell(P') - \ell(P))/2$ and thus
	$$ \distP{u}{a}{P} - \distP{a}{v}{P} = \distP{u}{a'}{P'} - \distP{a'}{v}{P'}.$$
	Recall that
	\begin{align*}
		D_1 :=\dist{a}{b}+\dist{c}{d}, && D_2 :=\dist{a}{c}+\dist{b}{d}, \text{ and } && D_3 :=\dist{a}{d}+\dist{b}{c}.
	\end{align*}
	Denote with~$D'_1$, $D'_2$, and~$D'_3$ the respective distance sums resulting from replacing~$a$ with~$a'$, for example $D'_1 = \dist{a'}{b} + \dist{c}{d}$.
	Observe that by the choice of~$a'$ we increased all distance sums by the same amount, that is, for all~$i\in\{1,2,3\}$ we have~$D'_i = D_i + (\ell(P') - \ell(P))/2$. 
	Since~$\delta(a,b,c,d) = D_i - D_j$ for some~$i,j \in \{1,2,3\}$, we have that
	$$\delta(G') = \delta(a',b,c,d) = D'_i - D'_j = \delta(a,b,c,d) = \delta(G).$$
	
	\emph{Case (II): $P$ contains two vertices of~$\{a,b,c,d\}$.}
	Without loss of generality, assume that~$P$ contains~$a$ and~$b$ but not~$c$ and~$d$.
	We follow a similar pattern as in the previous case and again use the same notation.
	Let~$P', P'' \in \mathcal{P}_{uv}^9$ be the two longest paths such that both~$P'$ and~$P''$ do neither contain~$c$ nor~$d$ and both~$\ell(P') - \ell(P)$ and~$\ell(P'') - \ell(P)$ are even.
	We distinguish two subcases:
	
        \emph{Case (II-1):~$D_1$ is not the largest sum ($D_1 < D_2$ or~$D_1 < D_3$).}  We replace~$a$ and~$b$ with~$a'$ and~$b'$ on~$P'$ such that~$\distP{u}{a'}{P'} = \distP{u}{a}{P} + (\ell(P') - \ell(P))/2$ and~$\distP{u}{b'}{P'} = \distP{u}{b}{P} + (\ell(P') - \ell(P))/2$.
	Thus, $D'_1 = D_1$ since~$\dist{a}{b} = \dist{a'}{b'}$.
	However, for~$i \in \{2,3\}$ we have~$D'_i = D_i + (\ell(P') - \ell(P))/2$. 
	Since either~$D_2$ or~$D_3$ was the largest distance sum, we obtain
	$$ \delta(G) = \delta(a,b,c,d) = D_i - D_j \le D'_i - D'_{j'} = \delta(a',b',c,d) = \delta(G')$$
	for some~$i \in \{2,3\}$, $j,j' \in \{1,2,3\}$, $i \neq j$, and~$i \neq j'$.
	
		\emph{Case (II-2):~$D_1$ is the largest sum ($D_1 \ge D_2$ and~$D_1 \ge D_3$).} We need another replacement strategy since we did not increase~$D_1$ in case (II-1).
	In fact, we replace~$a$ and~$b$ with two vertices on different paths~$P'$ and~$P''$.
	We replace~$a$ with~$a'$ on~$P'$ and~$b$ with~$b'$ on~$P''$ such that~$\distP{u}{a'}{P'} = \distP{u}{a}{P} - (\ell(P') - \ell(P))/2$ and~$\distP{u}{b'}{P''} = \distP{u}{b}{P} - (\ell(P'') - \ell(P))/2$.
	Observe that for~$i \in \{2,3\}$ it holds that
	$$D'_i = D_i + (\ell(P') - \ell(P))/2 + (\ell(P'') - \ell(P))/2.$$
	Moreover, since~$a'$ and~$b'$ are on different maximal paths, we also have
	\begin{align*}
		\dist{a}{b} & \le \min_{x \in \{u,v\}}\{\distP{x}{a}{P}+\distP{x}{b}{P}\} \\
					& = \min_{x \in \{u,v\}}\{\distP{x}{a'}{P'}+\distP{x}{b'}{P''}\} - \frac{\ell(P') - \ell(P)}{2} - \frac{\ell(P'') - \ell(P)}{2} = \dist{a'}{b'} 
	\end{align*}
	and thus~$D'_1 \ge D_1 + (\ell(P') - \ell(P))/2 + (\ell(P'') - \ell(P))/2$.
	Hence, we have
	$$ \delta(G) = \delta(a,b,c,d) = D_1 - D_j \le D'_1 - D'_{j} = \delta(a,b,c,d) = \delta(G') $$
	for some~$j \in \{2,3\}$. 
 	
	\emph{Case (III): $P$ contains all four vertices of~$\{a,b,c,d\}$.} 
	We consider two subcases.
	
        \emph{Case (III-1): the union of the shortest paths between these four vertices  induces a path.} In this case, we have~$\delta(G) = 0$ and thus trivially~$\delta(G) \le \delta(G')$.

        \emph{Case (III-2): the union of the shortest paths between these four vertices  induces a cycle.} 
               From \cref{lem:hyp-distance-bounded} we derive~$\delta(G) \le \ell(P)/2$ since at least two of the four vertices~$a,b,c,d$ have distance at most~$\ell(P)/4$. 
	We can replace the four vertices with four vertices on a path~$P' \in \mathcal{P}_{uv}^9$ such that~$\ell(P') - \ell(P)$ is nonnegative and even. Observe that if~$\ell(P') = \ell(P)$, then taking the vertices on the same positions as~$a,b,c,d$ gives a 4-tuple with the same distances. Hence, assume $\ell(P')>\ell(P)$.  
        Consider the union of the vertices on~$P'$ and the shortest path between~$u$ and~$v$. 
	The union of the shortest paths of all vertices in this set is a cycle of length at least~$\ell(P')+1\ge \ell(P)+2$. 
	By known results of \citet{KM02} there is a 4-tuple of cycle vertices~$A',b',c',d'$ such that~$\delta(a,b,c,d)\ge \lfloor (\ell(P')+1)/2\rfloor > \ell(P)/2$. Thus, we have
	$$\delta(G') \ge \delta(a',b',c',d') > \ell(P)/2 \ge \delta(G).$$
	
	\emph{Case (IV): $P$ contains three vertices of~$\{a,b,c,d\}$.}
	Without loss of generality, assume that~$P$ contains~$a$,~$b$, and~$c$ but not~$d$ and that~$a$ is the closest vertex to~$u$ on~$P$ and~$c$ is the closest vertex to~$v$ on~$P$ (that is, $a,b,c$  appear in this order on~$P$). 
	We distinguish two subcases.
	
	\emph{Case (IV-1): $\distP{a}{c}{P} = \dist{a}{c}$.} We follow a similar pattern as in case (I) and use the same notation. 
	Again, there is a $P' \in \mathcal{P}_{uv}^9$ such that $\ell(P') - \ell(P)$ is even (either both lengths are even or both are odd) and $P'$ does not contain~$d$.
	We replace each vertex~$a,b,c$ as in case (I), that is, for each~$x \in \{a,b,c\}$ we chose~$x'$ on~$P'$ such that~$\distP{u}{x'}{P'} = \distP{u}{x}{P} + (\ell(P') - \ell(P))/2$.
	Observe that only the distances between~$d$ and the other three vertices change.
	Thus, we have again for all~$i\in\{1,2,3\}$ that~$D'_i = D_i + (\ell(P') - \ell(P))/2$ and hence~$\delta(G) = \delta(G')$.

	\emph{Case (IV-2):~$\distP{a}{c}{P} > \dist{a}{c}$.}  We use again a similar strategy as in case (I) and use the same notation. 
	Again, there is a $P' \in \mathcal{P}_{uv}^9$ such that $\ell(P') - \ell(P)$ is even (either both lengths are even or both are odd) and $P'$ does not contain~$d$.
	We replace the vertices~$a,b,c$ with~$a',b',c'$ on~$P'$ such that
	\begin{itemize}
		\item $\dist{a}{u} = \distP{a}{u}{P} = \distP{a'}{u}{P'} = \dist{a'}{u}$,
		\item $\dist{c}{v} = \distP{c}{v}{P} = \distP{c'}{v}{P'} = \dist{c'}{v}$, 
		\item $\distP{b}{u}{P} = \distP{b'}{u}{P'} - (\ell(P') - \ell(P))/2$, and
		\item $\distP{b}{v}{P} = \distP{b'}{v}{P'} - (\ell(P') - \ell(P))/2$.
	\end{itemize}
	Note that since~$\distP{a}{c}{P} > \dist{a}{c}$, it follows that the distances not involving~$b$ remain unchanged, that is, $\dist{a}{b} = \dist{a'}{b'}$, and for~$x \in \{a,b\}$ we have~$\dist{x}{d} = \dist{x'}{d}$.
	Furthermore, all distances involving~$b$ increase by~$(\ell(P') - \ell(P))/2$, that is, $\dist{b}{d} = \dist{b'}{d} - (\ell(P') - \ell(P))/2$ and for~$x \in \{a,b\}$ we have~$\dist{b}{x} = \dist{b'}{x'}$.
	Thus, we have again for all~$i\in\{1,2,3\}$ that~$D'_i = D_i + (\ell(P') - \ell(P))/2$ and hence~$\delta(G) = \delta(G')$.
\end{proof}
}
Observe that if the graph~$G$ is reduced with respect to \cref{rrule:paths-between-high-degree-vertices}, then there exist for each pair~$u,v \in V^{\ge 3}_G$ at most nine maximal paths with endpoints~$u$ and~$v$.
Thus, $G$ contains at most~$O(k^2)$ maximal paths and using \cref{thm:maximal-paths} we arrive at the following.

\begin{theorem}
\label{thm:degree3vertices}
	\hyp can be solved in~$O(k^8 (n+m))$ time, where~$k$ is the number of vertices with degree at least three.
\end{theorem}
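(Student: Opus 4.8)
The plan is to make the instance small enough to feed into \cref{thm:maximal-paths}: I would reduce the graph so that its minimum maximal paths cover number is $O(k^2)$ and then invoke \cref{thm:maximal-paths}, whose running time is $O((k')^4(n+m))$ when $k'$ is the cover number. Substituting $k' = O(k^2)$ then gives $O((k^2)^4(n+m)) = O(k^8(n+m))$, which is exactly the claimed bound, so the whole theorem reduces to establishing the $O(k^2)$ bound on the cover number after preprocessing.

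First I would preprocess. Apply \cref{rrule:degree1} exhaustively (correct and linear-time by \cref{lem:rrule-1-correct-lin-time}) so that the graph has minimum degree two, and then apply \cref{rrule:paths-between-high-degree-vertices} exhaustively (correct and linear-time by \cref{lem:paths-rule-correct}). Both rules only delete vertices and never raise a degree, so no new vertex of degree at least three is created and $|V^{\ge 3}_{G}|$ does not grow; it stays at most $k$ throughout. Since neither rule changes the hyperbolicity, it suffices to solve the problem on the reduced graph, and any remaining connected components that are cycles of degree-two vertices are handled by the isolated-cycle preprocessing already built into \cref{thm:maximal-paths}.

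The heart of the argument is to bound the number of maximal paths in the reduced graph by $O(k^2)$, after which the minimum maximal paths cover number is bounded by the same quantity (taking all maximal paths covers every vertex, so the cover number is at most the total count). I would split maximal paths by their endpoints. For a maximal path whose \emph{both} endpoints lie in $V^{\ge 3}_{G}$, exhaustiveness of \cref{rrule:paths-between-high-degree-vertices} leaves at most nine such paths for each of the $\binom{k}{2} = O(k^2)$ pairs, contributing $O(k^2)$ paths. The remaining maximal paths are those with exactly one endpoint in $V^{\ge 3}_{G}$ and the other of degree two; a short structural argument shows each such path is a cycle hanging off a single high-degree vertex $u$ (the second neighbour of its degree-two endpoint must be $u$, since otherwise the path would extend or meet a degree-three vertex). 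Feeding the resulting $O(k^2)$ bound into \cref{thm:maximal-paths} finishes the calculation.

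The step I expect to be the main obstacle is precisely controlling this last type of maximal path. \cref{rrule:paths-between-high-degree-vertices} governs only paths between two \emph{distinct} high-degree vertices and says nothing about a ``bouquet'' of many cycles sharing one high-degree vertex $u$, whose count can exceed any function of $k$ (e.g.\ a single vertex with $p$ attached triangles has $k=1$ but $p$ maximal paths). I would therefore add a companion reduction rule that, for each $u \in V^{\ge 3}_{G}$, retains the shortest and a constant number of longest even- and odd-length such cycles through $u$ and deletes the inner vertices of the rest. Its hyperbolicity-invariance would be proved by the same relocation technique as in \cref{lem:paths-rule-correct}: move any offending quadruple vertices lying on a deleted cycle onto a retained cycle of matching parity so that all three distance sums $D_1,D_2,D_3$ shift by the same amount and $\delta(a,b,c,d)$ is preserved. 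This caps the degree-two-endpoint paths at $O(k)$, restores the overall $O(k^2)$ bound on the cover number, and completes the proof.
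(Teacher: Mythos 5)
Your proposal follows exactly the paper's route: reduce with \cref{rrule:degree1} and \cref{rrule:paths-between-high-degree-vertices}, bound the number of maximal paths in the reduced graph by $O(k^2)$, and feed that bound into \cref{thm:maximal-paths} to obtain $O((k^2)^4(n+m)) = O(k^8(n+m))$. In fact, the paper's entire proof is just this observation: it notes that after \cref{rrule:paths-between-high-degree-vertices} every pair $u,v\in V^{\ge 3}_{G}$ is joined by at most nine maximal paths, and concludes ``thus $G$ contains at most $O(k^2)$ maximal paths.''

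What you add---and what the paper does not address---is the case of maximal paths having one endpoint of degree two, which the paper's own definition of maximal path explicitly permits. Your structural claim is correct: such a path arises exactly from a cycle attached to a single vertex $u\in V^{\ge 3}_{G}$ all of whose other vertices have degree two, and \cref{rrule:paths-between-high-degree-vertices}, which concerns paths between two \emph{distinct} high-degree endpoints, never touches these. Your example of one vertex with $p$ attached triangles ($k=1$, yet $p$ maximal paths and minimum maximal paths cover number $p$) shows that the paper's ``thus'' is a genuine gap rather than mere terseness: as written, the inference from ``nine paths per pair'' to ``$O(k^2)$ maximal paths overall'' is invalid. Your companion reduction rule (retain the shortest plus constantly many longest even- and odd-length pendant cycles per vertex $u$, and relocate quadruple vertices from deleted cycles onto retained ones) is the right kind of repair and mirrors the proof of \cref{lem:paths-rule-correct}; one detail to watch is that when several quadruple vertices lie on one deleted cycle, relocation onto a longer cycle can change their mutual around-the-cycle distances, so as in Case~(III) of that proof you will want the cycle lower bound of \cite{KM02} together with \cref{lem:hyp-distance-bounded}, or alternatively a \cref{lemma:1seps}-style cut-vertex argument, since $u$ separates each pendant cycle from the rest of the graph. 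With that filled in, your argument is more complete than the one the paper gives for this theorem.
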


\section{Parameter Vertex Cover}

A \emph{vertex cover} of a graph~$G = (V, E)$ is a subset~$W \subseteq V$ of vertices of~$G$ such that each edge in~$G$ is incident to at least one vertex in~$W$.
Deciding whether a graph~$G$ has a vertex cover of size at most~$k$ is \NP-complete in general~\cite{GJ79}.
There is, however, a simple linear-time factor-$2$ approximation (see, e.g., \cite{PapadimitriouS82}). %
In this section, we consider the size~$k$ of a vertex cover as the parameter.
We show 
that we can solve \hyp in time linear in~$|G|$, but exponential in~$k$; further, we show 
that, unless SETH fails, we cannot do asymptotically better.

\paragraph{A Linear-Time Algorithm Parameterized by the Vertex Cover Number.}\label{section:vcub}

We prove that \hyp can be solved in time linear in the size of the graph and exponential in the size~$k$ of a vertex cover.
This result is based on a linear-time computable kernel of size~$O(2^k)$, that can be obtained by exhaustively applying the following reduction rule.

\begin{rrule}\label{rrule:twins}
	If there are at least five vertices~$v_1, v_2, \ldots, v_\ell \in V$, $\ell > 4$, with the same (open) neighborhood~$N(v_1) = N(v_2) = \ldots = N(v_\ell)$, then delete~$v_5, \ldots, v_\ell$.
\end{rrule}
We next show that the above rule is correct, can be applied in linear time, and leads to a kernel for the parameter vertex cover number.
\begin{lemma}%
  \label{lemma:vcredrules}
	\cref{rrule:twins} is correct and can be applied exhaustively in linear time.
	Furthermore, if \cref{rrule:twins} is not applicable, then the graph contains at most $k + 4 \cdot 2^k$ vertices and~$O(k \cdot 2^k)$ edges, where~$k$ is the vertex cover number.
\end{lemma}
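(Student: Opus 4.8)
The plan is to establish three things separately: that \cref{rrule:twins} preserves the hyperbolicity, that it can be exhaustively applied in one linear-time pass, and that the reduced graph obeys the stated size bounds.

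For \textbf{correctness}, fix a twin class $\{v_1,\dots,v_\ell\}$ with common open neighborhood $N$, and let $G'$ arise by deleting $v_5,\dots,v_\ell$. The key structural facts are that these vertices are pairwise non-adjacent (in a simple graph $v_i\in N(v_j)=N(v_i)$ is impossible), that $\distance_G(v_i,v_j)=2$ for $i\neq j$ whenever $N\neq\emptyset$, and that for every $w\notin\{v_1,\dots,v_\ell\}$ the distance from $w$ to each $v_i$ is the same, since any shortest $w$--$v_i$ path reaches $v_i$ through $N$. I would first argue that deletion preserves all pairwise distances among surviving vertices: any shortest path using a deleted $v_j$ enters and leaves it through two vertices of $N$, so rerouting through a surviving twin yields a path of the same length. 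This gives $\delta(G')\le\delta(G)$ immediately, because every quadruple of $G'$ is a quadruple of $G$ with identical distance sums $D_1,D_2,D_3$. For the reverse inequality I would take an optimal quadruple $a,b,c,d$ with $\delta(a,b,c,d)=\delta(G)$ and replace each member that is a deleted twin by a surviving twin of its own class. Since a quadruple contains at most four vertices, each class contributes at most four of $a,b,c,d$ while four representatives $v_1,\dots,v_4$ survive, so an injective reassignment into the survivors always exists; as each replaced vertex keeps the identical neighborhood, every pairwise distance is unchanged, whence the transported quadruple has the same $\delta$ and $\delta(G)\le\delta(G')$.

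For \textbf{linear-time exhaustive application}, I would compute the false-twin classes of $G$ in $O(n+m)$ time by partition refinement: starting from the one-block partition, for each vertex $w$ split every current block into its members adjacent and non-adjacent to $w$; processing $w$ touches only $N(w)$, so the total work is $O(n+m)$, and two vertices end in the same block iff they have the same open neighborhood. I then delete all but four vertices from each block. A single pass suffices because this creates no new twins: if surviving $x,y$ satisfied $N_{G'}(x)=N_{G'}(y)$ but $N_G(x)\neq N_G(y)$, then picking $z$ in the symmetric difference, say $z\sim x$ and $z\not\sim y$, forces $z$ to have been deleted; but $z$ retains at least four surviving twins, each adjacent to $x$ and non-adjacent to $y$, so $x$ and $y$ stay distinguished in $G'$, a contradiction. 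Thus \cref{rrule:twins} is inapplicable after this one $O(n+m)$-time pass.

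For the \textbf{size bound}, fix a minimum vertex cover $W^\ast$ with $|W^\ast|=k$; then $I:=V\setminus W^\ast$ is independent and every $v\in I$ has $N(v)\subseteq W^\ast$. Two vertices of $I$ with equal neighborhood are false twins, so after exhaustive reduction each of the at most $2^k$ possible neighborhoods $S\subseteq W^\ast$ is shared by at most four vertices of $I$; hence $|I|\le 4\cdot 2^k$ and $|V|\le k+4\cdot 2^k$. Since each vertex of $I$ has degree at most $k$ and $W^\ast$ spans at most $\binom{k}{2}$ internal edges, the number of edges is at most $k\cdot|I|+\binom{k}{2}=O(k\cdot 2^k)$. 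I expect the \textbf{main obstacle} to be the backward direction of correctness, namely transporting an optimal quadruple into the survivors without disturbing any of the three distance sums: the distance-preservation claim must be phrased carefully, including the degenerate case $N=\emptyset$ of isolated twins (where the relevant distances are all $\infty$ and such a quadruple contributes $\delta=0$), and one must confirm that the injective reassignment is always available, which is precisely why the rule retains four rather than fewer representatives.
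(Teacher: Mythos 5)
Your proof is correct and takes essentially the same route as the paper's: correctness via the interchangeability of vertices with equal open neighborhoods, linear-time exhaustive application via partition refinement into twin classes followed by keeping four representatives per class, and the size bound via the at most $2^k$ distinct neighborhoods of vertices outside a vertex cover~$W$. Your write-up is somewhat more detailed than the paper's (explicit rerouting argument for distance preservation, the check that a single pass creates no new twins, and the explicit edge count), but these are verifications the paper leaves implicit rather than genuinely different ideas.
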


{
  \begin{proof}
	  Let $G=(V,E)$ be the input graph with a vertex cover~$W \subseteq V$ of size~$k$ and let~$v_1, v_2, \ldots, v_\ell \in V$, $\ell > 4$, be vertices with the same open neighborhood.

	  First, we show that \cref{rrule:twins} is correct, that is, $\delta(G[V \setminus \{v_5, \ldots, v_\ell\}])=\delta(G)$.
	  To see this, consider two vertices $v_i$, $v_j$ with the same open neighborhood, and consider any other vertex $u$.
	  The crucial observation is that $\dist{u}{v_i} = \dist{u}{v_j}$.
	  This means that the two vertices are interchangeable with respect to the hyperbolicity. 
	  In particular, if $v_i,v_j \in V$ have the same open neighborhood, then $\delta(v_i,x,y,z)=\delta(v_j,x,y,z)$ for every $x,y,z\in V\setminus \{v_i,v_j\}$.
	  As the hyperbolicity is obtained from a quadruple, it is sufficient to consider at most four vertices with the same open neighborhood.
	  We conclude that $\delta(G[V \setminus \{v_5, \ldots, v_\ell\}])=\delta(G)$. 
	  
	  Next we show how to exhaustively apply \cref{rrule:twins} in linear time.
	  To this end, we apply in linear time a \emph{partition refinement}~\cite{HabibP10} to compute a partition of the vertices into twin classes. 
	  Then, for each twin class we remove all but 4 (arbitrary) vertices.
	  Overall, this can be done in linear time.

	  Since $|W|\leq k$, it follows that there are at most $2^{k}$ pairwise-different neighborhoods (and thus twin classes) in~$V \setminus W$.
	  Thus, if \cref{rrule:twins} is not applicable, then the graph consists of the vertex cover~$W$ of size~$k$ plus at most~$4 \cdot 2^{k}$ vertices in~$V \setminus W$.
	  Furthermore, since~$W$ is a vertex cover, it follows that the graph contains at most~$ 4k \cdot 2^k$ edges. 
  \end{proof}
}

With \cref{rrule:degree1} we can compute in linear time an equivalent instance having a bounded number of vertices. 
Applying on this instance the trivial~$O(n^4)$-time algorithm yields the following.

\begin{theorem}
\label{theorem:vc}
	\hyp can be computed in $O(2^{4k} + n + m)$ time, where~$k$ denotes the size of a vertex cover of the input graph. 
\end{theorem}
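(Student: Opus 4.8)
The plan is to shrink the instance to size $O(2^k)$ using the twin-removal rule and then run the straightforward four-point brute force on the reduced graph. First I would apply \cref{rrule:twins} exhaustively; by \cref{lemma:vcredrules} this can be done in linear time (compute the twin classes via partition refinement and keep four representatives per class) and yields an equivalent graph~$G'$ with $\delta(G') = \delta(G)$. The key observation for the running time is that the algorithm never needs to know or compute a vertex cover: the twin classes are found directly from the graph, and the parameter~$k$ enters only in the \emph{analysis}. Indeed, \cref{lemma:vcredrules} guarantees that once \cref{rrule:twins} is no longer applicable, $G'$ has at most $k + 4\cdot 2^k$ vertices and $O(k\cdot 2^k)$ edges. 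Since $k \le 2^k$, the number of vertices is $n' \le 5\cdot 2^k = O(2^k)$, so the additive $k$ term is absorbed rather than multiplied.

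Next I would compute all pairwise distances in~$G'$ by running a breadth-first search from each of its $n' = O(2^k)$ vertices, storing them in a table. This costs $O(n'(n'+m')) = O(2^k\cdot k\,2^k) = O(k\,2^{2k})$ time. Then I would enumerate all $\binom{n'}{4} = O(n'^4) = O(2^{4k})$ quadruples $\{a,b,c,d\}$ of vertices of~$G'$; for each, I read off the six relevant distances from the table, form the three distance sums $D_1, D_2, D_3$, evaluate $\delta(a,b,c,d)$ in constant time, and keep the maximum. The resulting maximum is exactly $\delta(G') = \delta(G)$, and comparing it against~$\delta$ answers the decision question. The quadruple enumeration dominates the distance precomputation, so the whole post-reduction phase runs in $O(2^{4k})$ time, and together with the linear-time reduction the total is $O(n+m) + O(2^{4k}) = O(2^{4k} + n + m)$, as claimed.

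Correctness is immediate from two facts already available: $\delta(G')=\delta(G)$ by \cref{lemma:vcredrules}, and the brute force is faithful to the four-point definition of hyperbolicity because it inspects every quadruple of~$G'$. I expect the only genuinely delicate point to be the running-time accounting rather than any combinatorial difficulty: one must verify that the brute force really collapses to $O(2^{4k})$ and not to something like $O(k^4 2^{4k})$. This is exactly where the bound $n' = O(2^k)$ matters, since it shows $n'^4 = O(2^{4k})$ with the additive $k$ from \cref{lemma:vcredrules} swallowed by the exponential term, and it also ensures the all-pairs BFS step stays within budget.
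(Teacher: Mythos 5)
Your proposal is correct and follows essentially the same route as the paper: exhaustively apply the twin-removal rule (\cref{rrule:twins}), invoke \cref{lemma:vcredrules} to get an equivalent kernel with $O(2^k)$ vertices and $O(k\cdot 2^k)$ edges, and then run the trivial brute-force quadruple enumeration on the kernel. Your write-up merely makes explicit what the paper leaves implicit (the all-pairs BFS, the absorption of the additive $k$ term, and the fact that no vertex cover ever needs to be computed), all of which is accurate.
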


\paragraph{SETH-based Lower bounds.}\label{ssec:vclowerbound}

We show that, unless SETH breaks, the~$2^{O(k)} + O(n+m)$-time algorithm obtained in the previous subsection cannot be improved to an algorithm even with running time $2^{o(k)}\cdot (n^{2-\epsilon})$.
This also implies, that, assuming SETH, there is no kernel with~$2^{o(k)}$ vertices computable in~$O(n^{2-\epsilon})$ time, i.\,e.\ the kernel obtained by applying \cref{rrule:twins} cannot be improved significantly.
The proof follows by a reduction from the following problem.

\problemdef
  {Orthogonal Vectors}
  {Two sets~$\vec{A}$ and~$\vec{B}$ each containing~$n$ binary vectors of length~$\ell=O(\log n)$.}
  {Are there two vectors~$\vec{a}\in \vec{A}$ and~$\vec{b}\in \vec{B}$ such that~$\vec{a}$ and~$\vec{b}$ are orthogonal,
  that is,
  such that there is no position~$i$ for which~$\vec{a}[i]=\vec{b}[i]=1$?}  

Williams and Yu~\cite{WY14} proved that,
if \textsc{Orthogonal Vectors} can be solved in~$O(n^{2-\epsilon})$ time,
then SETH breaks. 
We provide a linear-time reduction from~\textsc{Orthogonal Vectors} to \hyp
where the graph~$G$ constructed in the reduction contains~$O(n)$ vertices and admits a vertex cover of size~$O(\log(n))$
(and thus contains~$O(n\cdot \log n)$ edges). 
The reduction then implies that,
unless SETH breaks,
there is no algorithm solving \hyp in time polynomial in the size of the vertex cover and linear in the size of the graph.
We mention that Borassi et al.~\cite{BCH16} showed that under the SETH \hyp cannot be solved in~$O(n^{2-\epsilon})$.
However, the instances constructed in their reduction have a minimum vertex cover of size~$\Omega(n)$.
Note that our reduction is based on ideas from the reduction of Abboud et al.~\cite{AWW16} for the \textsc{Diameter} problem.

\begin{theorem}%
  \label{thm:SETH-lowerbound}
  Assuming SETH, \hyp cannot be solved in~$2^{o(k)}\cdot (n^{2-\epsilon})$ time, even on graphs with~$O(n \log n)$ edges, diameter four, and domination number three. 
  Here,~$k$ denotes the vertex cover number of the input graph.
\end{theorem}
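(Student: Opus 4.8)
The plan is to give a linear-time many-one reduction from \textsc{Orthogonal Vectors} to \hyp. Given two sets $\vec{A}$ and $\vec{B}$ of $n$ binary vectors each of length $\ell = O(\log n)$, I would build a graph $G$ whose vertex set consists of: one vertex per vector in $\vec{A}$, one vertex per vector in $\vec{B}$, and $\ell$ ``coordinate'' vertices $c_1,\ldots,c_\ell$, one per position. The coordinate vertices (plus perhaps one or two constant-number gadget vertices) will form the vertex cover, giving $k = O(\ell) = O(\log n)$ as required. The adjacency is chosen so that distances encode orthogonality: an $\vec{A}$-vertex is joined to exactly the coordinate vertices where its vector has a $1$, and symmetrically for $\vec{B}$-vertices. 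To keep the diameter bounded (the theorem claims diameter four) and to control distances between same-side vectors and between coordinate vertices, I would add a small constant-size gadget of dominating/apex vertices wired to force all ``uninteresting'' distances to small fixed values; three such dominating vertices account for the advertised domination number three.

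The technical core is a distance-encoding lemma. After fixing the gadget, I would verify that for an $\vec{a}\in\vec{A}$ and $\vec{b}\in\vec{B}$, the distance $\dist{\vec{a}}{\vec{b}}$ takes one value (say $2$) precisely when $\vec{a},\vec{b}$ share a common $1$-coordinate, and a larger value (say $3$) exactly when they are orthogonal, because in the orthogonal case no length-two path through a shared coordinate vertex exists and one is forced to detour through the gadget. All other pairwise distances among the vertices of $G$ (two $\vec{A}$-vertices, two $\vec{B}$-vertices, a vector vertex and a coordinate vertex, two coordinate vertices, etc.) are pinned to fixed small constants by the gadget. Then I would exhibit a specific quadruple whose hyperbolicity value jumps to the target threshold exactly when an orthogonal pair exists: choosing $a=\vec{a}$, $b=\vec{b}$ together with two gadget ``anchor'' vertices, the three distance sums $D_1,D_2,D_3$ can be computed from the constants plus the one variable distance $\dist{\vec{a}}{\vec{b}}$, so the resulting $\delta(a,b,c,d)$ is strictly larger in the orthogonal case. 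Conversely, I would use the bound $\delta(a,b,c,d)\le 2\min_{u\neq v}\dist{u}{v}$ from \cref{lem:hyp-distance-bounded}, together with the fact that all distances are in $\{1,2,3,4\}$, to argue that no quadruple can reach the threshold unless some $\vec{A}$-$\vec{B}$ distance equals $3$, i.e.\ unless an orthogonal pair is present.

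Putting the pieces together: $G$ has $O(n)$ vertices, a vertex cover of size $O(\log n)$ (hence $O(n\log n)$ edges), diameter four, and domination number three, and $G$ is $(\delta^\ast-1)$-hyperbolic iff there is \emph{no} orthogonal pair, for the threshold value $\delta^\ast$ coming from the quadruple analysis. The reduction is clearly computable in time linear in the input (which has size $\Theta(n\log n)$). If \hyp admitted a $2^{o(k)}\cdot n^{2-\epsilon}$-time algorithm, then substituting $k=O(\log n)$ would make $2^{o(k)}=n^{o(1)}$, yielding an $n^{2-\epsilon'}$-time algorithm for \textsc{Orthogonal Vectors}, contradicting the result of Williams and Yu~\cite{WY14} and hence SETH.

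The step I expect to be the main obstacle is the gadget design that simultaneously (i)~keeps the vertex cover at $O(\log n)$, (ii)~caps the diameter at exactly four, (iii)~forces the domination number down to three, and (iv)~leaves the single $\vec{A}$-$\vec{B}$ distance as the only ``free'' quantity driving the hyperbolicity of the chosen quadruple. Competing against each other, these constraints make the adjacency of the constant-size gadget delicate: every added edge that helps bound the diameter risks collapsing the very distance gap that encodes orthogonality, so the bulk of the work is the careful case analysis verifying that exactly one family of distances remains sensitive to the vector structure while all others are frozen.
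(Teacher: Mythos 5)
Your overall skeleton---a linear-time reduction from \textsc{Orthogonal Vectors}, vector vertices plus coordinate vertices forming an $O(\log n)$-size vertex cover, one distinguished quadruple whose hyperbolicity jumps exactly when an orthogonal pair exists, \cref{lem:hyp-distance-bounded} for the converse, and the Williams--Yu result~\cite{WY14} to finish---is exactly the paper's. But there is a genuine gap: the construction itself is never given. You write that you ``would add a small constant-size gadget of dominating/apex vertices'' satisfying four competing constraints, and you yourself identify this gadget design and its case analysis as ``the main obstacle'' and ``the bulk of the work.'' That obstacle \emph{is} the theorem; what you have is a plan for a proof, not a proof. Two things in particular are unresolved. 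First, your encoding uses a single shared set of coordinate vertices, so the $A$--$B$ distance is $2$ (non-orthogonal) versus $3$ (orthogonal), forcing a detection threshold of $\delta = 3$. With threshold $3$ the converse must exclude the sum patterns $8-4$, $8-5$, and $7-4$; the last is realized by \emph{any} distance-4 pair together with \emph{any} disjoint distance-3 pair whose cross distances are all $2$, so you must determine every distance-3 and distance-4 pair in the graph---including pairs involving gadget vertices---before you can claim that only orthogonal $A$--$B$ pairs qualify. Second, the gadget cannot contain genuine apexes: a vertex adjacent to everything collapses the diameter to $2$ and destroys the $2$-versus-$3$ encoding, so each dominating vertex must be one-sided (seeing $A$ and possibly the coordinates, or $B$, but never both sides), and deciding which side each one sees is precisely the delicate wiring you left open.

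For comparison, the paper resolves this difficulty with a concrete design: two coordinate cliques $C$ and $D$ (one per side) joined by a perfect matching $\{\{c_i,d_i\} \mid i \in [\ell]\}$, plus two paths $(u_A,u,u_B)$ and $(v_A,v,v_B)$, where $u_A,v_A$ are adjacent to all of $A\cup C$ and $u_B,v_B$ to all of $B\cup D$. Then non-orthogonal pairs lie at distance $3$ and orthogonal pairs at distance $4$, which equals both the diameter and $\dist{u}{v}$, so the threshold is $\delta = 4$. The payoff is rigidity in the converse: if $\delta(a,b,c,d)\ge 4$, then \cref{lem:hyp-distance-bounded} gives pairwise distances at least $2$, so all three sums lie in $[4,8]$, forcing the largest sum to equal $8$ and the second largest to equal $4$---that is, two disjoint distance-4 pairs with all cross distances exactly $2$. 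A short check shows the only distance-4 pairs are $(u,v)$ and orthogonal $A$--$B$ pairs, and the proof closes. Your threshold-3 variant may well be completable along the same lines, but the extra sum patterns make the required case analysis strictly longer, and as submitted both directions of your equivalence rest on a graph that has not been specified.
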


{
  \begin{proof}
    We reduce any instance~$(\vec{A},\vec{B})$ of \textsc{Orthogonal Vectors} to an instance~$(G,\delta)$ of \hyp,
    where we construct the graph~$G$ as follows
    (we refer to \cref{fig:orthogvec} for a sketch of the construction). 
    
    Make each~$\vec{a}\in \vec{A}$ a vertex~$a$ and each~$\vec{b}\in \vec{B}$ a vertex~$b$ of~$G$,
    and denote these vertex sets by~$A$ and~$B$, respectively. 
    Add two vertices for each of the $\ell$~dimensions, that is, add the vertex set~$C:=\{c_1,\ldots, c_\ell\}$ and the vertex set~$D=\{d_1,\ldots, d_\ell\}$ to~$G$ and make each of~$C$ and~$D$ a clique. 
    Next, connect each~$a\in A$ to the vertices of~$C$ in the natural way, that is, add an edge between~$a$ and~$c_i$ if and only if~$\vec{a}[i]=1$. 
    Similarly, add an edge between~$b\in B$ and~$d_i\in D$ if and only if~$\vec{b}[i]=1$. 
    Moreover, add the edge set~$\{\{c_i,d_i\}\mid i\in [\ell]\}$. 
    This part will constitute the central gadget of our construction. 
    
    Our aim is to ensure that the maximum hyperbolicity is reached for 4-tuples~$(a,b,c,d)$ such that~$a\in A$,~$b\in B$, and~$a$ and~$b$ are orthogonal vectors. 
    The construction of~$G$ is completed by adding two paths~$(u_A,u,u_B)$ and~$(v_A,v,v_B)$, and making~$u_A$ and~$v_A$ adjacent to all vertices in~$A\cup C$ and~$u_B$ and~$v_B$ adjacent to all vertices in~$B\cup D$.
    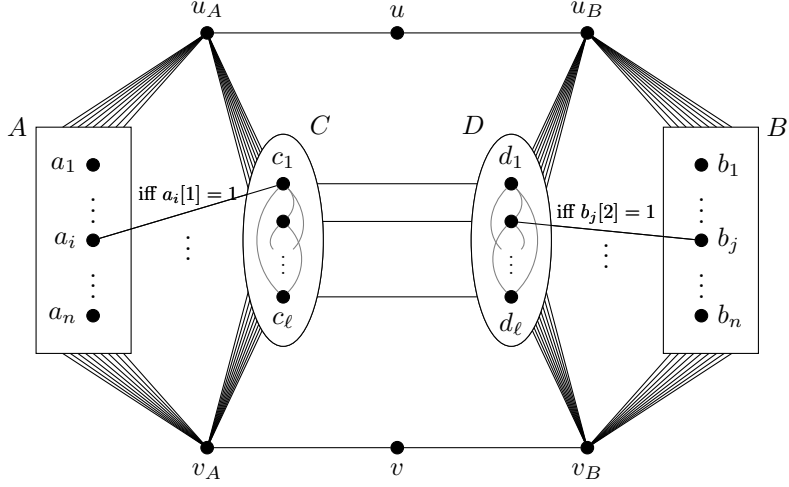
\begin{figure}[t]
    \centering
    \begin{tikzpicture}

      \tikzstyle{node}=[circle, fill, scale=1/2, draw]
      \def\ysh{0.25}
      \node (c1) at (-1.5,0-\ysh)[node]{};
      \node (c2) at (-1.5,-0.5-\ysh)[node]{};
      \node (ci) at (-1.5,-1-\ysh)[]{$\vdots$};
      \node (cell) at (-1.5,-1.5-\ysh)[node]{};

      \node (d1) at (1.5,0-\ysh)[node]{};
      \node (d2) at (1.5,-0.5-\ysh)[node]{};
      \node (ci) at (1.5,-1-\ysh)[]{$\vdots$};
      \node (dell) at (1.5,-1.5-\ysh)[node]{};

      \node (c) at (-1.5,-0.75-\ysh)[ellipse, minimum width=30pt, minimum height=80pt,draw,label=80:{$C$}]{};
      \node (d) at (1.5,-0.75-\ysh)[ellipse, minimum width=30pt, minimum height=80pt,draw,label=100:{$D$}]{};

      \draw (c1) -- (d1);
      \draw (c2) -- (d2);
      \draw (cell) -- (dell);

      \node (a1) at (-4,0)[node,label=180:{$a_1$}]{};
      \node (adots1) at (-4,-0.5)[]{$\vdots$};
      \node (ai) at (-4,-1)[node,label=180:{$a_i$}]{};
      \node (adots2) at (-4,-1.5)[]{$\vdots$};
      \node (an) at (-4,-2)[node,label=180:{$a_n$}]{};

      \node (b1) at (4,0)[node,label=0:{$b_1$}]{};
      \node (bdots1) at (4,-0.5)[]{$\vdots$};
      \node (bi) at (4,-1)[node,label=0:{$b_j$}]{};
      \node (bdots2) at (4,-1.5)[]{$\vdots$};
      \node (bn) at (4,-2)[node,label=0:{$b_n$}]{};

      \draw (-4.75,-2.5) rectangle (-3.5,0.5);
      \node at (-5,0.5)[]{$A$};
      \draw (4.75,-2.5) rectangle (3.5,0.5);
      \node at (5,0.5)[]{$B$};

      \node at (-2.75,-1)[]{$\vdots$};
      \node at (2.75,-1.1)[]{$\vdots$};

      \draw (ai) -- node[above,midway,scale=3/4]{iff $a_i[1]=1$}(c1);
      \draw (bi) -- node[above,midway,scale=3/4]{iff $b_j[2]=1$}(d2);

      \node (ua) at (-2.5,1.75)[node, label=90:{$u_A$}]{};
      \node (u) at (0,1.75)[node, label=90:{$u$}]{};
      \node (ub) at (2.5,1.75)[node, label=90:{$u_B$}]{};

      \node (va) at (-2.5,-3.75)[node, label=-90:{$v_A$}]{};
      \node (v) at (0,-3.75)[node, label=-90:{$v$}]{};
      \node (vb) at (2.5,-3.75)[node, label=-90:{$v_B$}]{};

      \draw (va) -- (v) -- (vb);
      \draw (ua) -- (u) -- (ub);

      \foreach \x in {1,2,...,9}{
	\draw (ua) --(-3.5-0.1*\x,0.5);
	\draw (ua) --([xshift=2*\x-10 pt] c.center);
      }

      \foreach \x in {1,2,...,9}{
	\draw (va) --(-3.5-0.1*\x,-2.5);
	\draw (va) -- ([xshift=2*\x-10 pt] c.center);
      }

      \foreach \x in {1,2,...,9}{
	\draw (ub) --(3.5+0.1*\x,0.5);
	\draw (ub) --([xshift=2*\x-10 pt] d.center);
      }

      \foreach \x in {1,2,...,9}{
	\draw (vb) --(3.5+0.1*\x,-2.5);
	\draw (vb) --([xshift=2*\x-10 pt] d.center);
      }

      \node (c) at (-1.5,-0.75-\ysh)[ellipse, minimum width=30pt, minimum height=80pt,draw,fill=white]{};
      \node (d) at (1.5,-0.75-\ysh)[ellipse, minimum width=30pt, minimum height=80pt,draw,fill=white]{};

      \node (c1) at (-1.5,0-\ysh)[node,label=90:{$c_1$}]{};
      \node (c2) at (-1.5,-0.5-\ysh)[node]{};
      \node (ci) at (-1.5,-1-\ysh)[scale=3/4]{$\vdots$};
      \node (cell) at (-1.5,-1.5-\ysh)[node,label=-90:{$c_\ell$}]{};
      \draw[-,gray] (c1) to [out=-45,in=45](c2);
      \draw[-,gray] (c1) to [out=-45,in=45](ci);
      \draw[-,gray] (c2) to [out=-45,in=45](cell);
      \draw[-,gray] (c2) to [out=-135,in=135](ci);
      \draw[-,gray] (c1) to [out=-135,in=135](cell);

      \node (d1) at (1.5,0-\ysh)[node,label=90:{$d_1$}]{};
      \node (d2) at (1.5,-0.5-\ysh)[node]{};
      \node (di) at (1.5,-1-\ysh)[scale=3/4]{$\vdots$};
      \node (dell) at (1.5,-1.5-\ysh)[node,label=-90:{$d_\ell$}]{};
      \draw[-,gray] (d1) to [out=-45,in=45](d2);
      \draw[-,gray] (d1) to [out=-135,in=135](di);
      \draw[-,gray] (d2) to [out=-135,in=135](dell);
      \draw[-,gray] (d2) to [out=-45,in=45](di);
      \draw[-,gray] (d1) to [out=-45,in=45](dell);
      
      \draw (ai) -- node[above,midway,scale=3/4]{iff $a_i[1]=1$}(c1);
      \draw (bi) -- node[above,midway,scale=3/4]{iff $b_j[2]=1$}(d2);

      \end{tikzpicture}
      \caption{Sketch of the construction described in the proof of \cref{thm:SETH-lowerbound}. 
      Ellipses indicate cliques, rectangles indicate independent sets. 
      Multiple edges to an object indicate that the corresponding vertex is incident to each vertex enclosed within that object.}
      \label{fig:orthogvec}
    \end{figure}

    Observe that $G$~contains $O(n)$~vertices, $O(n \cdot \log n)$ edges,
    and that the set~$V\setminus (A\cup B)$ forms a vertex cover in~$G$ of size~$O(\log n)$.
    Moreover, observe that~$G$ has diameter four.
    Note that each vertex in~$A\cup B\cup C\cup D$ is at distance two to each of~$u$ and~$v$.
    Moreover, $v_A$~and $v_B$~are at distance three to~$u$. 
    Analogously, $u_A$, $u_B$~are at distance three to~$v$.
    Furthermore~$u$ and~$v$ are at distance four.
    Finally, observe that $\{u_A,u_B,v\}$ forms a dominating set in~$G$.
    
    We complete the proof by showing that $(\vec{A},\vec{B})$~is a yes-instance of \textsc{Orthogonal Vectors} if and only if $G$~has hyperbolicity at least~$\delta=4$.

    \raproof{} Let~$(\vec{A},\vec{B})$ be a yes-instance, and let~$\vec{a}\in \vec{A}$ and~$\vec{b}\in \vec{B}$ be a pair of orthogonal vectors. 
    We claim that $\delta(a,b,u,v)=4$. 
    Since~$\vec{a}$ and~$\vec{b}$ are orthogonal, there is no $i\in[\ell]$ with $\vec{a}[i]=\vec{b}[i]=1$ and, hence, there is no path connecting $a$ and $b$ only containing two vertices in $C\cup D$, and it holds that~$\dist{a}{b}=4$. 
    Moreover, we know that $\dist{u}{v}=4$ as that~$\dist{a}{u}=\dist{b}{u}=\dist{a}{v}=\dist{a}{v}=2$. 
    Thus,~$\delta(a,b,u,v)=8-4=4$, and $G$ is 4-hyperbolic.

    \laproof{} Let~$S=\{a,b,c,d\}$ be a set of vertices such that~$\delta(a,b,c,d)\ge4$. 
    By \cref{lem:hyp-distance-bounded}, it follows that no two vertices of~$S$ are adjacent. 
    Hence, we assume without loss of generality that~$\dist{a}{b}=\dist{c}{d}=4$. 
    Observe that all vertices of~$C$ and~$D$ have distance at most three to all other vertices. 
    Similarly, each vertex of~$\{u_A,v_A,u_B,v_B\}$ has distance at most three to all other vertices.
    (Consider for example~$u_A$. 
    By construction, $u_A$ is a neighbor of all vertices in~$A\cup C\cup \{u\}$ and, hence, $u_A$~has distance at most two to~$v_A$ and to all vertices in~$D$. 
    Thus, $u_A$~has distance at most three to~$v$, $B$, $u_B$ and~$v_B$ and therefore to all vertices of~$G$. 
    The arguments for~$v_A$,~$u_B$, and~$v_B$ are symmetric).

    It follows that~$S\subseteq A\cup B\cup \{u,v\}$,
    and therefore at least two vertices in~$S$ are from~$A\cup B$. 
    Thus, assume without loss of generality that~$a$ is contained in~$A$. 
    By the previous assumption, we have that~$\dist{a}{b}=4$. 
    This implies that~$b\in B$ and~$\vec{a}$ and~$\vec{b}$ are orthogonal vectors, as every other vertex in~$V\setminus B$ is at distance three to~$a$ and each~$b'\in B$ with $\vec{b'}$ being non-orthogonal to~$\vec{a}$ is at distance three to~$a$. 
    Hence, $(\vec{A},\vec{B})$ is a yes-instance.
  \end{proof}
}

We remark that, with the above reduction, the hardness also holds for the variants in which we fix one vertex~($u$) or two vertices~($u$ and~$w$).
The reduction also shows that approximating the hyperbolicity of a graph within a factor of~$4/3-\epsilon$ cannot be done in strongly subquadratic time or with a PL-FPT running time. 

Next, we adapt the above reduction to obtain the following hardness result on graphs of bounded maximum degree.

\begin{theorem}%
  \label{thm:SETH-lowerbound2}
  Assuming SETH, \hyp cannot be solved in~$f(\Delta)\cdot (n^{2-\epsilon})$ time, where~$\Delta$ denotes the maximum degree of the input graph.
\end{theorem}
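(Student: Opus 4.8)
The plan is to reuse the reduction from \textsc{Orthogonal Vectors} of \cref{thm:SETH-lowerbound} and to post-process the constructed graph~$G$ so that its maximum degree becomes a constant, while paying only a polylogarithmic factor in the number of vertices. The only obstruction to bounded degree in that construction are the ``hub'' vertices: each of $u_A,v_A,u_B,v_B$ is adjacent to a linear-size set, each $c_i,d_i$ may be adjacent to many $a\in A$ or $b\in B$, and even the vertices $a,b$ have degree up to $\ell=\Theta(\log n)$. Note that bounding the degree only by $O(\log n)$ would not suffice: for an arbitrary function $f$ the term $f(\log n)$ could dominate, so the degree genuinely has to be brought down to a constant, which forces every vertex of superconstant degree to be expanded.

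Concretely, I would replace every such vertex $x$ by a balanced binary tree $T_x$ whose root $\hat x$ represents $x$ and whose leaves, all placed at one common depth, host the former incidences of $x$; each former edge is then realized as a single edge between two leaves of the corresponding gadgets, except that the ``coordinate bridges'' $\{c_i,d_i\}$ are kept as edges between the two roots. The point of the uniform leaf depth is that the distance from any leaf to its root is a fixed constant inside each gadget, so any shortest path entering a gadget at a leaf and leaving it at the root is charged a fixed amount; a path between two terminals then decomposes into such uniform segments plus unit crossing edges. In particular, for a shared coordinate~$i$ the route $\hat a\to c_i\to d_i\to \hat b$ has a length independent of which leaves $a$ and $b$ attach to, whereas an orthogonal pair is still forced onto a strictly longer detour through the $u$-/$v$-hubs. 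By tuning the common depth and the length of the central $u$--$v$ connection I would arrange that the quadruple $(\hat a,\hat b,\hat u,\hat v)$ arising from an orthogonal pair realizes a prescribed hyperbolicity value $\delta$, exactly as $(a,b,u,v)$ did before. Since the four hub gadgets and the $O(\log n)$ coordinate gadgets have $O(n)$ vertices each, while the $n$ vector gadgets have $O(\log n)$ vertices each, the resulting graph $G'$ has $N=O(n\log n)$ vertices and maximum degree at most three.

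The correctness proof would then mirror the two directions of \cref{thm:SETH-lowerbound}. For completeness I exhibit the orthogonal quadruple as above. For soundness I must show that no quadruple of $G'$ reaches hyperbolicity $\delta$ unless an orthogonal pair exists, and it is precisely the uniform-depth/route-through-roots design that prevents the new tree internals from creating unintended shortcuts. Two features have to be checked: (i) that the only $A$--$B$ connections short enough to matter are the coordinate bridges, so that a large $A$--$B$ distance still certifies orthogonality; and (ii) that the many newly created small distances (for instance leaf-to-leaf shortcuts inside a gadget, which only ever link two vertices on the same side) cannot be assembled into a competing high-hyperbolicity quadruple. I expect this soundness case distinction to be the main obstacle, being a more delicate version of the argument already carried out in \cref{thm:SETH-lowerbound}, and it is the place where the gadget design must be verified not to leak.

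Finally, the running-time transfer is immediate: an algorithm for \hyp running in $f(\Delta)\cdot N^{2-\epsilon}$ time would solve $G'$ in $f(3)\cdot (n\log n)^{2-\epsilon}=O(n^{2-\epsilon/2})$ time, since $\Delta\le 3$ is constant and $(\log n)^{2-\epsilon}=n^{o(1)}$. Composed with the linear-time construction of $G'$, this would decide \textsc{Orthogonal Vectors} in strongly subquadratic time and hence refute SETH via the result of Williams and Yu~\cite{WY14}.
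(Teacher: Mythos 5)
Your high-level plan---reuse the \textsc{Orthogonal Vectors} reduction of \cref{thm:SETH-lowerbound} and expand every high-degree vertex into a bounded-degree binary-tree gadget, paying a polylogarithmic blow-up---is indeed the route the paper takes, and your final running-time transfer is fine. But the concrete gadget design you propose does not work, and the missing piece is not merely the ``delicate soundness case distinction'' you defer; it is a quantitative \emph{length-equalization} mechanism without which even the completeness direction fails. The problem is leaf-to-leaf \emph{tunneling}: your uniform leaf depth only controls paths that enter a gadget at a leaf and leave at its \emph{root}, but a shortest path may enter at one leaf and leave at another, and the distance between two leaves of a depth-$D$ tree ranges from $2$ to $2D$ depending on where the (arbitrary) assignment of incidences to leaves happens to place them. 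With $D=\Theta(\log n)$ this has two consequences. First, $\dist{\hat a}{u}$ depends on how close $a$'s leaf sits to the $u$-leaf inside $T_{u_A}$, so it varies by $\Theta(\log n)$ over the choice of $a$, and the quadruple $(\hat a,\hat b,\hat u,\hat v)$ does not realize any prescribed value. Second, and fatally, for an orthogonal pair $(\vec a,\vec b)$ one can tunnel through a coordinate gadget: from the $a$-leaf of $T_{c_j}$ (where $\vec a[j]=1$) in as few as two steps to the leaf of $T_{c_j}$ hosting the clique edge to $c_i$ (where $\vec b[i]=1$), then up to $\hat c_i$, across the unit root-root bridge to $\hat d_i$, and down to $b$. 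This route costs only an additive $O(1)$ more than the intended non-orthogonal route $\hat a\to\hat c_i\to\hat d_i\to\hat b$, so the orthogonality signal shrinks to $O(1)$ while the leaf-placement noise in all other distances is $\Theta(\log n)$; no fixed threshold can separate yes- from no-instances. Keeping the bridges $\{c_i,d_i\}$ as unit edges while every other edge becomes a path of length $\Theta(\log n)$ also distorts the metric non-uniformly, which by itself changes which quadruples are extremal.

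The paper's construction is built precisely to exclude this. It replaces \emph{every} original edge by a path of one common length $h:=2(\lceil\log n\rceil+1)+1$, realized as binary trees plus \emph{matching paths of compensating lengths}: the path attached to a leaf at depth $d$ is padded so that each simulated edge has length exactly $h$, and---crucially---$h$ exceeds twice the maximum tree depth, so any tunnel through a gadget (entering and leaving at leaves) costs $2h-2-2\cdot\mathrm{depth}(\mathrm{LCA})>h$, i.e., strictly more than one simulated edge. The new metric is then an exact $h$-scaling of the old one on the relevant pairs; the paper re-proves the distance claims (in particular $\dist{a}{b}=4h$ if and only if $\vec a,\vec b$ are orthogonal), sets the threshold to $4h$, and handles the genuinely new difficulty---quadruples involving the subdivision vertices of the $u$--$u_A$-type paths---via \cref{lem:diamequalshyp}. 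If you add this padding and the accompanying analysis, your outline becomes the paper's proof; without it, the reduction leaks exactly at the place you flagged as an expected obstacle, and also at the completeness step you treated as routine.
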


{
  \begin{proof}
    We reduce any instance~$(\vec{A},\vec{B})$ of \textsc{Orthogonal Vectors} to an instance~$(G,\delta)$ of \hyp as follows.
    
    We use the following notation. 
    For two sets of vertices~$X$ and~$Y$ with~$|X|=|Y|$, we say that we introduce \emph{matching paths} if we connect the vertices in~$X$ with the vertices in~$Y$ with
    paths with no inner vertices from~$X\cup Y$ such that for each $x\in X$, $x$ is connected to exactly one $y\in Y$ via one path and for each~$y\in Y$, $y$ is connected to exactly one $x\in X$ via one path.
    
    Let $G'$ be the graph obtained from the graph constructed in the proof of~\cref{thm:SETH-lowerbound} after deleting all edges.
    For each~$x_A$, $x\in \{u,v\}$, add two binary trees, $T_{x_A}^A$ with $n$~leaves and height at most~$\ceil{\log n}$, and $T_{x_A}^C$ with $\ell$~leaves and height at most~$\ceil{\log \ell}$.
    Connect each tree root by an edge with~$x_A$. 
    Next introduce matching paths between~$A$ and the leaves of~$T_{x_A}^A$ such that each shortest path connecting a vertex in~$A$ with $x_A$ is of length $h:=2(\ceil{\log(n)}+1)+1$.
    Similarly, introduce matching paths between~$C$ and the leaves of~$T_{x_A}^C$ such that each shortest path connecting a vertex in~$C$ with $x_A$ is of length $h$.
    Apply the same construction for $x_B$, $x\in \{u,v\}$, $B$, and $D$.
    
    For $x\in A\cup B$, we denote by~$|x|_1$ the number of 1's in the corresponding binary vector~$\vec{x}$.
    Moreover, for $c_i\in C$, we denote by~$|c_i|$ the number of vectors in~$A$ with a 1 as its $i$th entry.
    For $d_i\in D$, we denote by~$|d_i|$ the number of vectors in~$B$ with a 1 as its $i$th entry.
    
    For each vertex~$a\in A$, add a binary tree with $|a|_1$ leaves and height at most~$\ceil{\log |a|_1}$ and connect its root by an edge with $a$.
    For each~$i\in[\ell]$, add a binary tree with~$|c_i|$ leaves and height at most~$\ceil{\log |c_i|}$ and connect its root by an edge with~$c_i$.
    Next, construct matching paths between the leaves of all binary trees introduced for the vertices in~$A$ on the one hand, and the leaves of all binary trees introduced for the vertices in~$C$ on the other hand, such that the following holds:
    (i) for each~$a\in A$ and~$c_i\in C$, there is a path only containing the vertices of the corresponding binary trees if and only if $\vec{a}[i]=1$, and
    (ii) each of these paths is of length exactly~$h$.
    Apply the same construction for~$B$ and~$D$.
    
    Next, for each~$i\in[\ell]$, add a binary tree with~$\ell-1$ leaves and height at most~$\ceil{\log (\ell-1)}$ and connect its root by an edge with~$c_i$.
    Finally, add paths between the leaves of all binary trees introduced in this step such that
    (i) each leaf is incident to exactly one path,
    (ii) for each~$i,j\in[\ell]$, $i\neq j$, there is a path only containing the vertices of the corresponding binary trees, and
    (iii) each of these paths is of length exactly~$h$.
    Apply the same construction for~$D$.
    
    Finally, for each~$i\in[\ell]$, connect~$c_i$ with~$d_i$ via a path of length~$h$.
    Moreover, for~$x\in \{u,v\}$, connect~$x_A$ with~$x$ and~$x$ with~$x_B$ each via a path of length~$h$.
    This completes the construction of~$G$.
    Observe that the number of vertices in~$G$ is at most the number of vertices in the graph obtained from~$G'$ by replacing each edge with paths of length~$h$.
    As~$G'$ contains $O(n\log n)$ edges, the number of vertices in~$G$ is in~$O(n\log^2 n)$.
    Finally, observe that the vertices in~$C\cup D$ are the vertices of maximum degree which is five.
    
    Next, we discuss the distances of several vertices in the constructed graph.
    Observe that $u$ and $v$ are at distance $4h$.
    For~$x\in \{u,v\}$, the distance between~$x$ and~$x_A$ or~$x_B$ is~$h$, and the
    distance between~$x_A$ and~$x_B$ is~$2h$.
    The distance from any $c\in C$ to any $d\in D$ is at least $h$ and at most $2h$.
    Moreover, the distance between any $a\in A$ and $b\in B$ is at least $3h$ and at most~$4h$.
    
    \begin{claim}\label{claim:orthog4h}
    For any $a\in A$ and $b\in B$, $\dist{a}{b}=4h$ if and only if $\vec{a}$ and $\vec{b}$ are orthogonal.
    \end{claim}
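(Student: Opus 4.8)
The plan is to reduce the distance computation to an analysis of how a shortest $a$–$b$ path crosses between the two sides of the construction, exploiting the fact that every edge of the \cref{thm:SETH-lowerbound} graph has been blown up into a length-$h$ gadget. First I would record the ``unit distances'' that follow from the construction: since every matching path has length $h$ and the gadgets (binary trees and matching paths) are internally vertex-disjoint and meet only in the original vertices $A\cup B\cup C\cup D\cup\{u,v,u_A,v_A,u_B,v_B\}$, any traversal between two distinct such vertices costs at least $h$. Concretely, $\dist{a}{c_i}=h$ if $\vec{a}[i]=1$ and $\dist{a}{c_i}=2h$ otherwise; symmetrically $\dist{b}{d_i}=h$ if $\vec{b}[i]=1$ and $=2h$ otherwise; $\dist{c_i}{d_i}=h$; and $\dist{a}{u_A}=\dist{c_i}{u_A}=h$ (and symmetrically for $v_A,u_B,v_B$).

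For the ``only if'' direction I argue the contrapositive. If $\vec{a}$ and $\vec{b}$ are not orthogonal, pick $i$ with $\vec{a}[i]=\vec{b}[i]=1$; then the concatenation of the length-$h$ paths $a$-to-$c_i$, $c_i$-to-$d_i$, and $d_i$-to-$b$ has length $3h$, so $\dist{a}{b}\le 3h$. Together with the already established lower bound $\dist{a}{b}\ge 3h$ this gives $\dist{a}{b}=3h\neq 4h$.

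For the ``if'' direction I would use the upper bound $\dist{a}{b}\le 4h$ and prove the matching lower bound $\dist{a}{b}\ge 4h$ whenever $\vec{a},\vec{b}$ are orthogonal. I partition $V(G)$ into a left part $L$ ($A$, $C$, $u_A$, $v_A$ with all gadget vertices strictly between them), a right part $R$ (the symmetric set on the $B,D,u_B,v_B$ side), and a bridge part $M$ consisting of $u$, $v$ and the interiors of the $c_i$–$d_i$ paths and of the $u_A$–$u$–$u_B$ and $v_A$–$v$–$v_B$ paths. The only $L$–$R$ links are the $c_i$–$d_i$ bridges and the $u$- and $v$-bridges. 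Since $a\in L$ and $b\in R$, following a shortest $a$–$b$ path until it first enters $R$ shows its length is bounded below by the cost of one $L$-to-$R$ crossing (further crossings only add length). A crossing through the $u$-bridge costs at least $\dist{a}{u_A}+2h+\dist{u_B}{b}=4h$, and likewise for $v$. A crossing through a $c_i$–$d_i$ bridge costs at least $\dist{a}{c_i}+h+\dist{d_i}{b}$; orthogonality forbids $\vec{a}[i]=\vec{b}[i]=1$, so if $\vec{a}[i]=0$ then $\dist{a}{c_i}=2h$, and if $\vec{a}[i]=1$ then $\vec{b}[i]=0$ and hence $\dist{d_i}{b}=2h$, so either way the cost is at least $4h$. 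Thus $\dist{a}{b}\ge 4h$, and with the upper bound $\dist{a}{b}=4h$.

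The step I expect to be the main obstacle is justifying the unit distances rigorously, that is, ruling out shortcuts created by the binary trees. A tree such as $T_{u_A}^A$ does make two vertices of $A$ closer than $2h$ via a low common ancestor, but such shortcuts only shorten distances between vertices on the \emph{same} side and therefore never reduce the cost of an $L$–$R$ crossing; isolating this is exactly the purpose of the partition into $L,M,R$. The one case needing care is $\dist{a}{c_i}$ when $\vec{a}[i]=0$: here $a$ and $c_i$ attach to the two \emph{different} subtrees $T_{u_A}^A$ and $T_{u_A}^C$ of the $u_A$-gadget, which meet only at $u_A$, so no common-ancestor shortcut applies and every route from $a$ to $c_i$ passes through a distinct intermediate hub, giving $\dist{a}{c_i}=2h$ exactly rather than anything smaller.
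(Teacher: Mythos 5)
Your proposal follows the same route as the paper's own proof: the non-orthogonal direction is the identical exhibition of the walk $(a,c_i,d_i,b)$ of length $3h$, and for the orthogonal direction the paper likewise argues that any $a$-$b$ path of length smaller than $4h$ avoids $u$ and $v$, hence must cross the middle through some $c_i$-$d_i$ connection, which orthogonality forces to cost at least $4h$; your explicit $L$/$M$/$R$ cut is a cleaner packaging of that assertion. The gap is in the step you yourself flag as the main obstacle, and your resolution of it is wrong. You argue that $\dist{a}{c_i}=2h$ when $\vec{a}[i]=0$ because $a$ and $c_i$ attach only to the two different subtrees $T_{u_A}^A$ and $T_{u_A}^C$, so every route between them must pass through an intermediate hub. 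This overlooks the binary trees that the construction attaches to each vertex of $A$ and each vertex of $C$ in order to realize the $A$-$C$ matching paths: these gadgets sit entirely inside your set $L$ and admit exactly the common-ancestor shortcuts you dismiss as harmless.

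Concretely, suppose $\vec{a}[i]=0$ but there exist $j$ and $a'\in A$ with $\vec{a}[j]=\vec{a'}[j]=\vec{a'}[i]=1$. Consider the walk that goes from $a$ through $a$'s own tree and its matching path into the tree of $c_j$, turns around at the common ancestor of the leaves matched to $a$ and to $a'$ (never touching $c_j$), continues through the matching path into the tree of $a'$, turns around at the common ancestor of the leaves matched to $c_j$ and to $c_i$ (never touching $a'$), and finally runs into the tree of $c_i$ and up to $c_i$. This walk visits no intermediate original vertex, so your ``hub'' argument does not apply to it. Because the matching-path lengths are calibrated so that each designated path has total length exactly $h$ (they absorb the leaf depths), the walk's length works out to $3h-4-2e-2e'$, where $e,e'$ are the depths of the two common ancestors used; if the relevant leaves are deep near-siblings (nothing in the construction forbids this, since the leaf-to-partner assignment is left arbitrary), this is as small as $3h-2\lceil\log n\rceil-2\lceil\log \ell\rceil=2h+3-2\lceil\log \ell\rceil$, which is strictly below $2h$ once $\ell\ge 3$. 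Then the $c_i$-$d_i$ crossing costs less than $4h$, and indeed $\dist{a}{b}<4h$ for an orthogonal pair with $\vec{b}[i]=1$, so your lower bound breaks exactly where you predicted trouble. Be aware that the paper's own proof is silent on this very point (it simply asserts that any sub-$4h$ path through $C\cup D$ must consist of three designated length-$h$ segments), so closing the gap requires either constraining the leaf/matching assignment in the construction or a genuinely stronger distance argument, not merely the cut decomposition.
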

    
    \begin{proof}[of \cref{claim:orthog4h}]
      \laproof{}
      Let $\vec{a}$ and $\vec{b}$ be orthogonal.
      Suppose that there is a shortest path~$P$ between $a$ and $b$ of length smaller than~$4h$.
      Observe that any shortest path between $a$ and $b$ containing $u$ or $v$ is of length~$4h$.
      Hence, $P$ contains vertices in~$C\cup D$.
      As the shortest paths from~$a$ to~$C$, $C$ to $B$, and $B$ to $b$ are each of length~$h$, the only shortest path containing vertices in $C\cup B$ of length smaller than $4h$ is of the form $(a,c_i,d_i,b)$ for some $c_i\in C$ and $d_i\in D$ (recall that the shortest path between any two vertices in~$C$ or~$D$ is of length~$h$). 
      Hence, $\vec{a}$ and~$\vec{b}$ have both a 1 as their $i$th entry, and thus are not orthogonal.
      This contradicts the fact that $\vec{a}$ and $\vec{b}$ form a solution.
      It follows that~$\dist{a}{b}=4h$.
      
      \raproof{}
      Let~$\vec{a}$ and~$\vec{b}$ be not orthogonal.
      Then there is an~$i\in[\ell]$ such that~$a[i]=b[i]=1$.
      Hence, there is a path~$(a,c_i,d_i,b)$ of length~$3h<4h$.
    \end{proof}
    
    Let~$M:=A\cup B\cup C\cup D\cup \{x,x_A,x_B\mid x\in\{u,v\}\}$.
    So far, we know that the only vertices that can be at distance~$4h$ are those in~$A\cup B\cup \{u,v\}$. 

    Consider any vertex~$p\in V(G)\setminus M$.
    Then~$p$ is contained in a shortest between two vertices~$x$ and~$y$ in~$M$ at distance~$h$.
    Moreover, $\max\{\dist{p}{x},\dist{p}{y}\}=:h'<h$.
    Let~$P_x^Y$ denote the set of inner vertices of the shortest path connecting~$x$ and~$x_A$, for $x\in\{u,v\}$, $Y\in\{A,B\}$.
    Moreover, let $M^*:=\{p\in P_x^Y\mid x\in\{u,v\}, Y\in\{A,B\}\}$.
    We first discuss the case where~$p\in M^*$.
    By symmetry, let $p\in P_u^A$. 
    Observe that for~$q\in P_v^B$ with $\dist{v}{q}=\dist{u}{p}$ holds $\dist{p}{q}=4h$. 
    
    Let $p\not\in M\cup M^*$.
    Then, we claim that for all vertices~$q\in V(G)$ it holds that~$\dist{p}{q}<4h$.
    Suppose not, so that there is some $q\in V(G)$ with $\dist{p}{q}\geq 4h$.
    Observe that~$q$ is not contained in a shortest path between~$x$ and~$y$.
    It follows that $\dist{x}{q}\geq 4h-h'>3h$ or $\dist{y}{q}\geq 4h-h'>3h$.
    Let $z\in\{x,y\}$ denote the vertex of minimal distance among the two, and let $\bar{z}$ denote the other one.
    Note that since $h$~is odd, the distances to~$z$ and~$\bar{z}$ are different.
    
    \emph{Case 1}: $q\in M$.
    Then $z,q\in A\cup B$, where $z$ and $q$ are not both contained in~$A$ or $B$.
    Recall that $p\not\in M\cup M^*$ and, hence, the case $z,q\in \{u,v\}$ is not possible.
    By symmetry, assume $z\in A$ and $q\in B$.
    As $\dist{z}{q}>3h$, it follows that $\bar{z}=c_i\in C$ for some~$i\in[\ell]$ with~$1=\vec{z}[i]\neq \vec{q}[i]$, or $\bar{z}\in\{u_A,v_A\}$.
    Hence, the distance of~$\bar{z}$ to~$q$ is at most the distance of~$z$ to~$q$, contradicting the choice of~$z$.
    
    \emph{Case 2}: $q\not \in M$.
    Then $q$ is contained in a shortest path between two vertices~$x',y'\in M$ of length~$h$.
    Moreover, $\max\{\dist{q}{x'},\dist{q}{y'}\}=:h''<h$.
    Consider a shortest path between $p$ and $q$ 
    and notice that it must contain $z$ and $z'\in\{x',y'\}$.
    It holds that~$\dist{z}{z'}\geq 4h-h'-h''>2h$.
    By symmetry, assume~$z\in A$, and~$z'\in D\cup\{u_B,v_B\}$ (recall that $p\not\in M\cup M^*$).
    Then $\bar{z}$ is in $C\cup \{u_A,v_A\}$, and hence of shorter distance to~$q$, contradicting the choice of~$z$.

    We proved that $\dist{p}{q}<4h$ for all $p\in V(G)\setminus (M\cup M^*)$, $q\in V(G)$.
    We conclude that the vertex set~$A\cup B\cup \{u,v\}\cup M^*$ is the only set containing vertices at distance~$4h$.
    Moreover, $G$~is of diameter~$4h$.
    
    We claim that $(\vec{A},\vec{B})$~is a yes-instance of \textsc{Orthogonal Vectors} if and only if $G$~has hyperbolicity at least~$\delta=4h$. 
    
    \raproof{}
    Let~$\vec{a}\in \vec{A}$ and~$\vec{b}\in \vec{B}$ be orthogonal.
    We claim that~$\delta(a,b,u,v)=4h$. 
    Observe that~$\dist{u}{v}=4h$, and that $\dist{a}{b}=4h$ by \cref{claim:orthog4h}.
    The remaining distances are~$2h$ by construction, and hence~$\delta(G)=\delta(a,b,u,v)=4h$.
    
    \laproof{}
    Let $\delta(G)=4h$ and let $w,x,y,z$ be a quadruple with $\delta(w,x,y,z)=4h$.
    By \cref{lem:diamequalshyp}, we know that there are exactly two pairs of distance~$4h$ and, hence, $\{w,x,y,z\}\subseteq A\cup B\cup\{u,v\}\cup M^*$.
    We claim that $|\{w,x,y,z\}\cap (M^*\cup \{u,v\})|\leq 2$.
    By \cref{lem:diamequalshyp}, we know that, out of~$w,x,y,z$, there are exactly two pairs 
    at distance~$4h$ and all other pairs have distance~$2h$.
    Assume that $|\{w,x,y,z\}\cap M^*\cup \{u,v\}|\geq 3$. 
    Then, at least two vertices are in~$P_v^A\cup P_v^B\cup \{v\}$ or in~$P_u^A\cup P_u^B\cup \{u\}$.
    Observe that any two vertices in~$P_v^A\cup P_v^B\cup \{v\}$ or in~$P_u^A\cup P_u^B\cup \{u\}$ are at distance smaller than~$2h$,
    but this contradicts the choice of the quadruple.
    It follows that $|\{w,x,y,z\}\cap M^*\cup \{u,v\}|\leq 2$, and w.l.o.g.\ let $w,x\in A\cup B$.
    As each vertex in~$A$ is at distance smaller than $3h$ to any vertex in $A\cup \{u,v\}\cup M^*$, it follows that the other vertex is in~$B$.
    Applying \cref{claim:orthog4h},
    we have that~$w$ and~$x$ are at distance~$4h$ if and only if~$\vec{w}$ and~$\vec{x}$ are orthogonal;
    hence, the statement of the lemma follows.
  \end{proof}
}

\section{Parameter Distance to Cographs}
We now describe a fixed-parameter linear-time algorithm for \textsc{Hyperbolicity}
parameterized by the vertex deletion distance~$k$ to cographs. A graph is a cograph if and
only if it is~$P_4$-free. 
Given a graph~$G$ we can determine in linear time whether it is
a cograph and return an induced~$P_4$ if this is not the case. This implies that
in~$O(k\cdot (m+n))$ time we can compute a set~$X\subseteq V$ of size at most~$4k$ such
that~$G-X$ is a cograph.

A further characterization is that a cograph can be obtained from graphs consisting of one single vertex
via unions and joins~\cite{BLS99}.
\begin{compactitem}
\item A \emph{union} of two graphs~$G_1=(V_1,E_1)$ and~$G_2=(V_2,E_2)$ is the graph~$(V_1\cup V_2,E_1\cup E_2)$. 
\item A \emph{join} of two graphs~$G_1=(V_1,E_1)$ and~$G_2=(V_2,E_2)$ is the graph~$(V_1\cup V_2,E_1\cup E_2\cup \{\{v_1,v_2\}|v_1\in V_1,v_2\in V_2\})$.
\end{compactitem}
The union of $t$ graphs and the join of~$t$ graphs are defined by taking successive unions
or joins, respectively, of the~$t$ graphs in an arbitrary order. Each cograph~$G$ can be
associated with a rooted cotree~$T_G$. The leaves of~$T_G$ are the vertices of~$V$. Each
internal node of~$T_G$ is labeled either as a union or join node. For node $v$ in~$T_G$,
let~$L(v)$ denote the leaves of the subtree rooted at~$v$. For a union node~$v$ with
children~$u_1,\ldots, u_t$, the graph~$G[L(v)]$ is the union of the
graphs~$G[L(u_i)]$,~$1\le i\le t$. For a join node~$v$ with children~$u_1,\ldots, u_t$, the
graph~$G[L(v)]$ is the join of the graphs~$G[L(u_i)]$,~$1\le i\le t$.

The cotree of a cograph can be computed in linear time~\cite{CPS85}. In a subroutine in
our algorithm for \textsc{Hyperbolicity} we need to solve the following variant of \textsc{Subgraph Isomorphism}.
  \problemdef{\textsc{Colored Induced Subgraph Isomorphism}}
  	{An undirected graph~$G=(V,E)$ with a vertex-coloring~$\gamma:V\to \mathbb{N}$ and an undirected graph~$H=(W,F)$, where $|W|=k$, with a vertex-coloring~$\chi:W\to \mathbb{N}$.}
	{Is there a vertex set~$S\subseteq V$ such that there is an isomorphism~$f$ from~$G[S]$ to~$H$ such that~$\gamma(v)=\chi(f(v))$ for all~$v\in S$?}
Informally, the condition that~$\gamma(v)=\chi(f(v))$ means that every vertex is mapped to
a vertex of the same color. We say that such an isomorphism \emph{respects the colorings}.
As shown by Damaschke~\cite{Dam90}, \textsc{Induced Subgraph Isomorphism} on cographs is
\NP-complete. Since this is the special case of \textsc{Colored Induced Subgraph
  Isomorphism} where all vertices in~$G$ and~$H$ have the same color, \textsc{Colored
  Induced Subgraph Isomorphism} is also \NP-complete (containment in \NP is obvious). In the
following, we show that on cographs \textsc{Colored Induced Subgraph Isomorphism} can be
solved by a linear-time fixed-parameter algorithm when the parameter is the order~$k$
of~$H$.

\begin{lemma}%
  \label{lemma:cisifpt}
	\textsc{Colored Induced Subgraph Isomorphism} can be solved in~$O(3^k(n+m))$ time in cographs.
\end{lemma}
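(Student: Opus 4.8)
The plan is to solve the problem by a bottom-up dynamic program over the cotree of the host cograph. First I would compute the cotree $T_G$ in linear time using~\cite{CPS85}, and then \emph{binarize} it so that every internal (union or join) node has exactly two children: a multiway union node with $t$ children is replaced by a left-deep chain of $t-1$ binary union nodes, and similarly for joins, which keeps the number of cotree nodes in $O(n)$. For a cotree node $v$ and a subset $U\subseteq W$ of the pattern's vertices, define the Boolean table entry $\mathrm{feas}(v,U)$ to be true exactly when $H[U]$ occurs as a color-respecting induced subgraph of $G[L(v)]$, i.e.\ when there is some $S\subseteq L(v)$ and an isomorphism $f\colon G[S]\to H[U]$ with $\gamma(x)=\chi(f(x))$ for all $x\in S$. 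The answer to the instance is then simply $\mathrm{feas}(\mathrm{root},W)$, because $G[L(\mathrm{root})]=G$ and $H[W]=H$.

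The base case is a leaf $v$ corresponding to a host vertex $x\in V$: here $\mathrm{feas}(v,\emptyset)$ is true, $\mathrm{feas}(v,\{w\})$ is true iff $\gamma(x)=\chi(w)$, and $\mathrm{feas}(v,U)$ is false whenever $|U|\ge 2$. The key step is the recurrence at an internal node $v$ with children $u_1,u_2$, which exploits that a union creates \emph{no} edges and a join creates \emph{all} edges between $L(u_1)$ and $L(u_2)$. Any induced subgraph of $G[L(v)]$ splits into a part in $L(u_1)$ and a part in $L(u_2)$, and these two parts are pairwise nonadjacent at a union node and completely adjacent at a join node. Hence $\mathrm{feas}(v,U)$ is true iff there is an ordered partition $U=U_1\sqcup U_2$ (choose $U_1\subseteq U$ and set $U_2=U\setminus U_1$) with $\mathrm{feas}(u_1,U_1)$ and $\mathrm{feas}(u_2,U_2)$ both true and with the \emph{interface} between $U_1$ and $U_2$ matching the node type: at a union node $H$ must contain no edge between $U_1$ and $U_2$, while at a join node $H$ must contain every edge between $U_1$ and $U_2$. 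Correctness follows in both directions: a realization of $H[U]$ yields such a partition, and conversely combining a realization of $H[U_1]$ in $L(u_1)$ with one of $H[U_2]$ in $L(u_2)$ gives an induced subgraph on distinct host vertices (since $L(u_1)\cap L(u_2)=\emptyset$) that the interface condition makes isomorphic to $H[U]$, with the colorings composing correctly on each part; iterating this argument up the binarized cotree establishes it for the original multiway cotree.

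For the running time, I would precompute, for every $U_1\subseteq W$, the neighborhood mask $N_H(U_1)=\bigcup_{w\in U_1}N_H(w)$ in total time $O(2^k)$ via $N_H(U_1)=N_H(U_1\setminus\{w\})\cup N_H(w)$, and analogously the common-neighborhood mask $M(U_1)=\bigcap_{w\in U_1}N_H(w)$ with $M(\emptyset)=W$. Then the union interface test becomes the $O(1)$ check $N_H(U_1)\cap U_2=\emptyset$ and the join interface test becomes $U_2\subseteq M(U_1)$. Processing one internal node therefore costs
$$\sum_{U\subseteq W} 2^{|U|}=\sum_{j=0}^{k}\binom{k}{j}2^{j}=3^{k},$$
since for each $U$ we iterate over all $U_1\subseteq U$. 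As the binarized cotree has $O(n)$ nodes and is itself computed in $O(n+m)$ time, the total running time is $O(3^k\cdot n+(n+m))=O(3^k(n+m))$.

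The main obstacle is the precise correctness of the union and join recurrences, in particular verifying that the interface conditions are not merely necessary but also sufficient and that the two colorings glue into a single color-respecting isomorphism onto $H[U]$. A secondary technical point is making the interface tests run in constant time so that the clean $3^k$ bound is attained rather than $3^k\cdot\mathrm{poly}(k)$; the neighborhood-mask precomputation above handles this under the standard assumption of constant-time word operations on subset masks.
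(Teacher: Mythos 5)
Your proposal is correct and takes essentially the same approach as the paper: a bottom-up dynamic program over the cotree indexed by subsets of the pattern, splitting each subset into two parts at union/join nodes with the corresponding no-edges/all-edges interface condition (the paper's table $D[v,i,X]$ over prefixes of children is just an implicit version of your explicit binarization). Your mask precomputation for $O(1)$ interface tests is a small refinement of the running-time analysis that the paper leaves implicit, but it does not change the argument.
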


{
  \begin{proof}
    We use dynamic programming on the cotree. Herein, we assume that for each internal
    node~$v$ there is an arbitrary (but fixed) ordering of its children; the~$i$th child
    of~$v$ is denoted~$c_i(v)$ and the set of leaves in the subtrees rooted at the first~$i$
    children of~$v$ is denoted~$L_i(v)$. We fill a three-dimensional table~$D$ with entries
    of the type~$D[v,i,X]$ where~$v$ is a node of the cotree with at least~$i$ children 
    and~$X\subseteq W$ is a subset of the vertices of the pattern~$H$. The entry~$D[v,i,X]$
    has value 1 if~$(G[L_i(v)], \gamma|_{L_{i}(v)}, H[X], \chi|_{X})$ 
    is a yes-instance of \textsc{Colored Induced Subgraph Isomorphism}, that is, there is a subgraph isomorphism
    from~$G[L_i(v)]$ to~$H[X]$ that respects the coloring. Otherwise, the entry has value~0. 
    Thus,~$D[v,\deg(v)-1,X]$  has value 1 if and only if there is an induced subgraph
    isomorphism from~$G[L_{i}(v)]$ to~$H[X]$. After the table is completely filled, the
    instance is a yes-instance if and only if~$D[r,\deg(r),W]$ has value~1 where~$r$ is the
    root of the cotree. We initialize the table for leaf vertices~$v$, 
    by setting~$D[v,0,X]=1$ if either~$X=\emptyset$ or~$X = \{u\}$ with~$\gamma(v) = \chi(u)$; otherwise~$D[v,0,X]=0$.

    For union nodes, the table~$D$ is filled by the following recurrence
    \begin{displaymath}
      D[v,i,X]=
      \begin{cases}
	1 & \exists X'\subseteq X: D[v,i-1,X']= D[c_{i}(v),\deg(c_{i}(v))-1,X\setminus X']=1\\
	&  \text{$\wedge$ there are no edges between~$X'$ and~$X\setminus X'$ in~$H$}\\
	0 & \text{otherwise}.
      \end{cases}
    \end{displaymath}
    For join nodes, the table~$D$ is filled by the following recurrence
    \begin{displaymath}
      D[v,i,X]=
      \begin{cases}
	1 & \exists X'\subseteq X: D[v,i-1,X']= D[c_{i}(v),\deg(c_{i}(v))-1,X\setminus X']=1\\
	&  \text{$\wedge$ every~$q\in X'$ is adjacent to every~$p\in X\setminus X'$ in~$H$}\\
	0 & \text{otherwise}. 
      \end{cases}
    \end{displaymath}
    The correctness of the recurrence can be seen as follows for the union nodes. 
    First assume there is a color-respecting induced subgraph isomorphism from~$G[L_i(v)]$
    to~$H[X]$. Then there is a set~$S\subseteq L_i(v)$ such that there is a
    color-respecting isomorphism~$f$ from~$G[S]$ to~$H[X]$. Let~$S':=S\cap L_{i-1}(v)$ be
    the set of vertices that are from~$S$ and from~$L_{i-1}(v)$ which implies
    that~$S\setminus S'=S\cap L(c_{i}(v))$. Since~$v$ is a union node, there are no edges
    between~$S'$ and~$S\setminus S'$ in~$G$. Let~$X':=f(S')$ and~$X\setminus X'=f(S\setminus
    S')$ denote the image of~$S'$ and~$S\setminus S'$, respectively. Since~$f$ is an
    isomorphism there are no edges between between~$X'$ and~$X\setminus X'$. Moreover, since
    restricting a color-respecting isomorphism~$f\colon G[S]\to H[X]$ to a subset~$S'$ gives a
    color-respecting isomorphism from~$f\colon G[S']\to H[f(S')]$ we have that~$D[v,i-1,X']$
    and~$D[c_{i}(v),\deg(c_{i}(v))-1,X\setminus X']$ have value 1. Therefore, there is a case such that
    the recurrence evaluates correctly to 1.

    Conversely, if the recurrence evaluates to 1, then the conditions in the recurrence (about the existence of~$X'$) imply a color-respecting induced subgraph isomorphism from~$G[L_i(V)]$ to~$H[X]$. 
    Therefore the table is filled correctly for union nodes. 
    The correctness of the recurrence for join nodes follows by symmetric arguments.

    The running time is bounded as follows. The cotree has size~$O(n+m)$ and thus, there
    are~$O((n+m)\cdot 2^k)$ entries in the table. For each~$X\subseteq W$, filling the
    entries of a particular table entry is done by considering all subsets of~$X$, thus the
    overall number of evaluations is~$O(3^k\cdot (n+m))$.  %
  \end{proof}
}
We now turn to the algorithm for \hyp{} on graphs that can be made into cographs by at
most~$k$ vertex deletions.

\problemdef{\textsc{Distance-Constrained 4-Tuple}} {An undirected graph~$G=(V,E)$ and six integers $d_{\{a,b\}}$, $d_{\{a,c\}}$, $d_{\{a,d\}}$, $d_{\{b,c\}}$, $d_{\{b,d\}}$, and~$d_{\{c,d\}}$.}
{Is there a set~$S\subseteq V$ of four vertices and a bijection~$f\colon S\to \{a,b,c,d\}$ such that for each~$x,y\in S$ we have~$\dist{x}{y} =d_{\{f(x),f(y)\}}$?}

\begin{lemma}%
  \label{lem:cograph-dc-4-tuple}
  \textsc{Distance-Constrained 4-Tuple} can be solved in~$O(4^{4k}\cdot k\cdot (n+m))$
  time if~$G-X$ is a cograph for some~$X\subseteq V$ of size~$k$.
\end{lemma}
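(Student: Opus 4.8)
The plan is to make every distance in $G$ computable from a bounded amount of information about the deletion set~$X$, exploiting that each connected cograph has diameter at most two, so $\distance_{G-X}(u,v)\in\{1,2,\infty\}$ for all $u,v\in V\setminus X$. First I would compute $X$ with $|X|\le k$ in $O(k(n+m))$ time and run a BFS from every $x\in X$ in~$G$; this yields $\dist{x}{v}$ for all $v\in V$, and in particular all $\dist{y}{y'}$ for $y,y'\in X$, in $O(k(n+m))$ time. For $u\in V\setminus X$ and $x\in X$ let $\rho(u,x)$ be the length of a shortest $u$--$x$ path whose only vertex of~$X$ is~$x$ itself. Such a path is a cograph path from~$u$ to a neighbour~$w$ of~$x$ followed by the edge~$wx$; since cograph components have diameter at most two, $\rho(u,x)\in\{1,2,3,\infty\}$. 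The key identities are
\[
 \dist{u}{x}=\min_{y\in X}\bigl(\rho(u,y)+\dist{y}{x}\bigr),
 \qquad
 \dist{u}{v}=\min\Bigl(\distance_{G-X}(u,v),\ \min_{y,y'\in X}\bigl(\rho(u,y)+\dist{y}{y'}+\rho(v,y')\bigr)\Bigr),
\]
valid because a shortest path entering~$X$ first reaches some $y\in X$ on a pure cograph path (cost~$\rho(u,y)$) and then travels inside~$G$ (cost~$\dist{y}{\cdot}$). Hence the whole distance structure among four chosen vertices is determined by their \emph{profiles} $\sigma_u:=(\rho(u,x))_{x\in X}\in\{1,2,3,\infty\}^{X}$ together with the three-valued $G-X$-relation of each pair.

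The second step is to guess exactly this data. I would enumerate the profiles $\sigma_a,\sigma_b,\sigma_c,\sigma_d\in\{1,2,3,\infty\}^{X}$ of the four sought vertices, which is the source of the $4^{4k}$ factor, together with (in $O(k^4)$ overall) which roles are mapped into~$X$ itself and to which vertex, and (in constant time) which value in $\{1,2,\infty\}$ the $G-X$-distance of each of the six pairs takes. Using the precomputed $X$--$X$ distances I evaluate the displayed formula for every guess and keep only those for which each pair's computed distance equals the required value $d_{\{\alpha,\beta\}}$. What remains is to decide whether some four \emph{distinct} genuine vertices realise a surviving guess, i.e.\ whether $G$ contains four vertices with the prescribed profiles whose pairwise $G-X$-relations are exactly the guessed ones.

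For this final search I would reuse the machinery behind \cref{lemma:cisifpt}: the profiles act as vertex colours (a vertex~$u$ may play role~$\alpha$ only if $\sigma_u=\sigma_\alpha$), adjacency constraints (relation~$1$) become pattern edges, and a ``distance exactly two'' constraint is enforced by adding a wildcard-coloured common-neighbour vertex to the pattern, which keeps it of constant size. The genuinely new, and hardest, ingredient is enforcing the \emph{different-component} constraints; these are unavoidable, because two vertices at $G$-distance at least three must lie in different components of $G-X$ (otherwise the first identity forces distance at most two). Since this is a forbidden rather than a required structure, I would instead guess the partition of $\{a,b,c,d\}$ into cograph components ($B_4=15$ possibilities) and run a dynamic program over the cotree (computable in linear time by \cite{CPS85}) extending the recurrence of \cref{lemma:cisifpt}: the state records which of the four roles are already placed in the processed subtree and how they are grouped into components, a union node keeps roles from distinct children in distinct components, and a join node makes all placed roles pairwise adjacent and co-component. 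As only four roles are tracked, this DP runs in $O(n+m)$ time per guess; multiplying by the $4^{4k}$ profile guesses and the $O(k)$ cost of comparing $k$-entry colour vectors gives the claimed $O(4^{4k}\cdot k\cdot (n+m))$ bound.
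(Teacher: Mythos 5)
Your proposal is correct and, for most of its length, it is the paper's own proof in different notation: your profiles $\rho(u,x)\in\{1,2,3,\infty\}$ are exactly the paper's vertex types $t_v[i]$, your two distance identities are the paper's \cref{obs:dist-type} and its companion observation, and your branching over roles placed inside $X$, profiles of the remaining roles, and the component partition is the paper's branching scheme (one bookkeeping remark: the paper computes the types by a BFS from each $x_i$ in $G-(X\setminus\{x_i\})$, which is what you actually need — your BFS from $x$ in $G$ yields $\dist{x}{v}$, not $\rho(v,x)$; also, the $O(k^4)$ factor for roles mapped into $X$ is absorbed because a role in $X$ needs no profile, so $k^j\cdot 4^{(4-j)k}\le 4^{4k}$). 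Where you genuinely diverge is the final existence check. The paper reduces it to \textsc{Colored Induced Subgraph Isomorphism}: it adds to each component $C$ of $G-X$ a fresh vertex $v_C$ adjacent to all of $C$, colors vertices by type (color $0$ for the $v_C$'s), encodes ``distance exactly two'' as non-adjacency plus attachment to the same component-vertex and ``different components'' as attachment to different component-vertices, and then invokes \cref{lemma:cisifpt} as a black box. You instead extend the cotree dynamic program itself, enriching the state by the partition of the placed roles into components of the currently processed induced subgraph. Both variants cost $O(n+m)$ per branch; the paper's gadget buys verbatim reuse of \cref{lemma:cisifpt}, while your route avoids the gadget at the price of stating and proving an extended recurrence. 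If you take that route, two points your sketch glosses over must be handled: at a join node only pairs coming from \emph{distinct} children become adjacent (your phrase ``makes all placed roles pairwise adjacent'' is false for two roles inside the same child), and at union nodes you must additionally reject branches in which a cross-child pair is guessed to be adjacent — otherwise a pair required to be at distance one whose lowest common ancestor is a non-root union node lying below some join node would survive the root's partition check despite being non-adjacent. Both checks come out automatically if you extend the union and join recurrences of \cref{lemma:cisifpt}, which enforce induced (two-sided) adjacency constraints on cross-child pairs, and this is presumably what you intend.
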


{
  \begin{proof}
    Let~$G = (V,E)$ be the input graph and~$X \subseteq V$, $|X| \le k$, such that~$G-X$ is
    a cograph. Without loss of generality, let~$X=\{x_1,\ldots ,x_k\}$.

    In a preprocessing step, we classify the vertices of each connected component in~$G[V
    \setminus X]$ according to the length of shortest paths to vertices in~$X$ such that all
    internal vertices of the shortest path are in~$V\setminus X$.

    More precisely, for a vertex $v \in V \setminus X$ in a connected component~$C_v$
    of~$G-X$, the \emph{type}~$t_v$ of~$v$ is a length-$k$ vector containing the distance
    of~$v$ to each vertex~$x_i$ of~$X$ in $G[C_v \cup \{x_i\}]$.  That is, $t_v[i]$ equals
    the distance from~$v$ to~$x_i \in X$ within the graph~$G[C_v \cup \{x_i\}]$.  Since the
    diameter of~$G[C_v]$ is at most two, $t_v[i] \in \{1,2,3,\infty\}$. Therefore, the
    number of distinct types in~$G$ is at most~$4^k$. 
    For simplicity of notation, for every type $t$ we denote by $v_t$ an arbitrary vertex such that $t_v=t$.

    \begin{observation} \label{obs:dist-type}
      Let~$u$ and~$v$ be two vertices of the same type, that is,~$t_u=t_v$. 
      Then for each vertex~$w$ in~$G-(C_u\cup C_v)$, we have~$\dist{u}{w}=\dist{v}{w}$.
    \end{observation}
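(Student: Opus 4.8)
The plan is to prove the two inequalities $\dist{u}{w}\le\dist{v}{w}$ and $\dist{v}{w}\le\dist{u}{w}$ separately; by the symmetry between $u$ and $v$ it suffices to establish the first. The structural fact I would lean on is that $X$ separates the connected components of $G-X$: since $C_u$ and $C_v$ are (possibly identical) connected components of the induced subgraph $G[V\setminus X]$, every path in $G$ that leaves one of these components must pass through a vertex of $X$. This is what lets the type vectors $t_u$ and $t_v$ control all shortest paths that reach outside the components.

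First I would fix a shortest $v$-$w$ path $P$ in $G$. Because $w\notin C_v$, the path $P$ must contain a vertex of $X$; let $x_i$ be the first such vertex encountered when traversing $P$ starting from $v$. Every vertex of $P$ strictly before $x_i$ lies in $V\setminus X$ and is joined to $v$ by a subpath inside $G[V\setminus X]$, hence lies in $C_v$; consequently the prefix $P_{v\to x_i}$ lies entirely within $G[C_v\cup\{x_i\}]$. Next I would pin down the length of this prefix by a squeeze. Being a subpath of a shortest path, $P_{v\to x_i}$ is itself a shortest $v$-$x_i$ path, so its length equals $\dist{v}{x_i}\le t_v[i]$, where the inequality holds because a global distance never exceeds a component-restricted one. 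On the other hand, since the prefix stays within $G[C_v\cup\{x_i\}]$, its length is at least the distance measured in that graph, namely $t_v[i]$. Hence the prefix has length exactly $t_v[i]$, which by hypothesis equals $t_u[i]$.

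Finally I would exchange the prefix. Let $Q$ be a shortest $u$-$x_i$ path inside $G[C_u\cup\{x_i\}]$; by the definition of the type it has length $t_u[i]=t_v[i]$, and it exists because $t_u[i]=t_v[i]$ is finite (otherwise $P$ could not have reached $x_i$ through $C_v$ in the first place). Concatenating $Q$ with the suffix $P_{x_i\to w}$ of $P$ yields a $u$-$w$ walk of length $t_u[i]+\bigl(\dist{v}{w}-t_v[i]\bigr)=\dist{v}{w}$, so $\dist{u}{w}\le\dist{v}{w}$. Swapping the roles of $u$ and $v$ gives the reverse inequality, and therefore $\dist{u}{w}=\dist{v}{w}$ for every $w\in V\setminus(C_u\cup C_v)$, including vertices of $X$ and vertices of all other components.

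The main obstacle I expect is the bookkeeping around the prefix: one must carefully argue that the shortest $v$-$w$ path genuinely stays inside $C_v$ until it first meets $X$, so that its initial segment is exactly governed by the entry $t_v[i]$, and one must be comfortable that the glued object $Q$ followed by $P_{x_i\to w}$ is only a \emph{walk} rather than a simple path (it may revisit vertices), which is harmless since we only use it to upper-bound $\dist{u}{w}$. Everything else reduces to the elementary observation that component-restricted distances dominate global distances.
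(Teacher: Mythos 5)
Your proof is correct. There is nothing in the paper to compare it against: \cref{obs:dist-type} is stated without proof inside the proof of \cref{lem:cograph-dc-4-tuple}, i.e.\ the authors treat it as self-evident, and your prefix-exchange argument is exactly the reasoning they leave implicit. The two points you flag are indeed the only delicate ones, and you resolve both correctly: the squeeze showing that the portion of a shortest $v$-$w$ path up to the first $X$-vertex $x_i$ lies in $G[C_v\cup\{x_i\}]$ and has length exactly $t_v[i]$, and the remark that the concatenation of a shortest $u$-$x_i$ path in $G[C_u\cup\{x_i\}]$ with the suffix need only be a walk, since it is used solely to upper-bound $\dist{u}{w}$. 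For completeness you might add two boundary cases, both of which your argument absorbs without change: if $w\in X$ then $x_i=w$ and the suffix is empty, and if $\dist{v}{w}=\infty$ then the inequality $\dist{u}{w}\le\dist{v}{w}$ holds vacuously, so the symmetric pair of inequalities still forces equality (covering the situation where $w$ lies in a different connected component of $G$).
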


    \begin{observation}
      Given two vertices~$u$ and~$v$ such that~$C_u\neq C_v$, we can compute~$\dist{u}{v}$ in~$O(k)$ time when the distance between each~$u\in X$ and each~$v\in V\setminus X$ can be retrieved in~$O(1)$ time.
    \end{observation}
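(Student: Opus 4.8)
The plan is to exploit the fact that, since $u$ and $v$ lie in distinct connected components $C_u \neq C_v$ of $G - X$, the set $X$ separates $u$ from $v$ in~$G$. Consequently every $u$-$v$ path in~$G$ must pass through at least one vertex of~$X$, which lets me reduce the computation of~$\dist{u}{v}$ to a minimum over the at most~$k$ vertices of~$X$.

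Concretely, I would establish the identity
$$\dist{u}{v} = \min_{x \in X}\{\dist{u}{x} + \dist{x}{v}\}.$$
For the ``$\le$'' direction, the triangle inequality gives $\dist{u}{v} \le \dist{u}{x} + \dist{x}{v}$ for every $x \in X$, so $\dist{u}{v}$ does not exceed the minimum. For the ``$\ge$'' direction, I take a shortest $u$-$v$ path~$P$ in~$G$; if none exists then $\dist{u}{v} = \infty$, and since $C_u \neq C_v$ lie in different components already in $G-X$, for each $x$ at least one of $\dist{u}{x}$, $\dist{x}{v}$ is infinite whenever the whole distance is, so both sides agree. If $P$ exists, then because $C_u \neq C_v$ the path~$P$ contains some vertex $x^{\ast} \in X$; splitting $P$ at~$x^{\ast}$ yields a $u$-$x^{\ast}$ subpath and an $x^{\ast}$-$v$ subpath, each itself shortest, whence $\dist{u}{v} = \dist{u}{x^{\ast}} + \dist{x^{\ast}}{v} \ge \min_{x \in X}\{\dist{u}{x} + \dist{x}{v}\}$. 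Combining the two directions proves the identity.

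For the running time, I observe that $u, v \in V \setminus X$ whereas each $x \in X$, so by the stated precondition each of the distances $\dist{u}{x}$ and $\dist{x}{v}$ can be retrieved in $O(1)$ time. Evaluating the sum $\dist{u}{x} + \dist{x}{v}$ for each of the at most~$k$ vertices $x \in X$ and taking the minimum therefore takes $O(k)$ time in total, as claimed.

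I do not expect a genuine obstacle here. The only point requiring care is the observation that $X$ is a $u$-$v$ separator, which is what upgrades the triangle-inequality bound from a mere upper bound to an \emph{equality}: it guarantees that a shortest path actually decomposes at some vertex of~$X$, so the minimum over~$X$ is attained. The handling of infinite distances is a minor bookkeeping detail that the identity already absorbs.
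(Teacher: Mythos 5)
Your proof is correct and coincides with the paper's own (implicit) reasoning: the paper states this observation without any proof, and the intended justification is precisely your argument that $X$ separates $u$ from $v$ once $C_u \neq C_v$, giving $\dist{u}{v} = \min_{x \in X}\{\dist{u}{x} + \dist{x}{v}\}$, which is evaluated in $O(k)$ time under the stated $O(1)$-retrieval assumption. Your explicit treatment of the infinite-distance case is a small piece of bookkeeping the paper simply takes for granted.
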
 

    The dominating part of the running time of the preprocessing is the computation of the
    vertex types which can be performed in~$O(k\cdot (n+m))$ time as follows. Create a
    length-$k$ vector for each of the at most~$n$ vertices of~$V\setminus X$ and initially
    set all entries to~$\infty$. Then compute for each~$x_i\in X$, the graph~$G-(X\setminus
    \{x_i\})$ in~$O(n+m)$ time. In this graph perform a breadth-first search from~$x_i$ to
    compute the distances between~$x_i$ and each vertex~$v\in V\setminus X$ that is in the
    same connected component as~$x_i$. This distance is exactly the one in the graph $G[C_v
    \cup \{x_i\}]$. Thus, for all vertices that reach~$x_i$, the~$i$th entry in their type
    vector is updated. Afterwards, each vertex has the correct type vector.

    After this preprocessing, the algorithm proceeds as follows by restricting the choice of vertices for the 4-tuple.
    \begin{itemize}
    \item First, branch into all~$O(k^4)$ cases of taking a subset~$X'\subseteq X$ of size at most four. (We will assume that~$X'=S\cap X$.)
    \item 
	  For each such $X'\subseteq X$, branch into the different cases for the types of vertices in $S\setminus X'$. 
	  That is, consider all multisets~$M_T$ of size~$|S\setminus X'| = 4 - |X'|$ over the universe of all types.
	  (There are~$4^k$ types and thus at most $4^{4k}$ cases for each~$X'\subseteq X$.)
    \item For each such $X'\subseteq X$ and multiset $M_T$, branch into all cases of matching the vertices in~$\{a,b,c,d\}$ to the vertices in~$X$ and types in~$M_T$ (branch into all ``bijections''~$f$ between~$X' \cup M_T$ and~$\{a,b,c,d\}$).
    (There are at most $4!$ cases.)

    \item For each such branch, branch into the different possibilities to assign the types
      in~$M_T$ to connected components of~$G-X$. That is, create one branch for each
      partition of the multiset~$M_T$ and assume in this branch that two types are in the
      same connected component if and only if they are in the same set of the partition
      of~$M_T$. The current partition is called the \emph{component partition} of the branch.
    \end{itemize}
    We now check whether there is a solution to the \textsc{Distance-Constrained 4-Tuple}
    instance that fulfills the additional assumptions made in the above branches. To this end, for
    each pair of vertices~$x,y\in X'$, check 
    whether~$\dist{x}{y}=d_{\{f(x),f(y)\}}$. Now, for each vertex~$x\in X'$ and each
    type~$t\in M_T$, check whether $\dist{x}{v_t}=d_{\{f(x),f(v_t)\}}$, where~$v_t$ is
    an arbitrary vertex of type~$t$. Observe that this is possible since, by
    \cref{obs:dist-type} the distance between~$X$ and any vertex of type~$t$ is
    the same in~$G$. Next, for pair of types~$t,t'\in M_T$ such that the branch
    assumes that~$t$ and~$t'$ do not lie in the same connected component of~$G-X$, check whether
    $\dist{v_{t}}{v_{t'}}=d_{\{f(v_{t}),f(v_{t'})\}}$. Again, this is possible due to
    \cref{obs:dist-type}. 

    The remaining problem is thus to determine whether the types of~$M_T$ can be assigned to
    vertices in such a way that
    \begin{itemize}
    \item for each pair of types~$t,t'\in M_T$ the assigned vertices are in the
      same connected component of~$G-X$ if and only if it is constrained to be in the same
      type of connected component in the current branch,
    \item for each pair of types~$t,t'\in M_T$ such that their assigned
      vertices~$v_t$ and~$v_{t'}$ are constrained to be in the same connected component, we
      need to ensure that~$\dist{v_t}{v_{t'}}=d_{\{f(v_t),f(v_{t'})\}}$.
    \end{itemize}
    We solve this problem by a reduction to~\textsc{Colored Induced Subgraph
      Isomorphism}. Observe that, since~$G-X$ is a cograph, for each pair~$u$ and~$v$ of
    vertices in the same connected component of~$G-X$, the distance between~$u$ and~$v$ is 2
    if and only if they are not adjacent. With this observation, the reduction works as
    follows. Let~$j\le 4$ denote the number of distinct connected components of~$G-X$ that
    shall contain at least one type of~$M_T$. Now, for each connected component~$C$ of~$G-X$
    add one further vertex~$v_C$ by making it adjacent to all vertices of~$C$ and call the
    resulting graph~$G'$. Now color the vertices of~$G-X$ as follows. The additional
    vertices of each connected component receive the color~$0$. Next, for each vertex
    type~$t$ in~$G-X$ introduce one color and assign this color to each vertex of
    type~$t$. Call the vertices with color~$0$ the \emph{component-vertices} and all other
    vertices the \emph{type-vertices}. To complete the construction of the input instance,
    we build~$H$ as follows. Add~$j$ vertices of color~$0$. Then add a vertex for each
    type~$t$ of~$M_T$ and color it with the color corresponding to its type. As in~$G'$,
    call the vertices with color~$0$ \emph{component-vertices} and all other vertices
    \emph{type-vertices}. Add edges between the component-vertices and the type-vertices in
    such a way that every type-vertex is adjacent to one vertex of color 0 and two
    type-vertices are adjacent to the same color-0 vertex if and only if they are
    constrained to be in the same connected component. Finally, if two type-vertices are
    constrained to be in the same connected component and have distance 1 in~$G$, then add
    an edge between them, otherwise add no edge between them. This completes the
    construction of~$H$. The instance of \textsc{Colored Induced Subgraph Isomorphism}
    consists of~$G'$ and~$H$ and of the described coloring. We now claim that this instance
    is a yes-instance if and only if there is a solution to the \textsc{Distance-Constrained
      4-Tuple} instance that fulfills the constraints of the branch.

    If the instance has a solution, then the subgraph isomorphism~$\phi$ from~$G'[S]$ to~$H$
    corresponds to a selection of types from~$j$ connected components since~$H$ contains
    neighbors of~$j$ component-vertices. Moreover, in~$H$, and thus in~$G'[S]$,~every
    type-vertex is adjacent to exactly one component-vertex and thus the component-vertices
    define a partition of the type-vertices of~$G'[S]$ that is, due to the construction
    of~$H$, exactly the component partition of~$M_T$. 
    Selecting the type-vertices of~$S$ and assigning them to~$\{a,b,c,d\}$ as specified by~$f$ gives, together with the selected vertices of~$X$, a special 4-tuple~$Q$ 
    Observe that~$Q$ fulfills all constraints of the branch except for the conditions on the distances between the type-vertices of the same component. 
    Now for two vertices~$u$ and~$v$ of~$S$ in the same connected component of~$G-X$, the
    distance is 1 if they are adjacent and 2 otherwise. Due to the construction of~$H$, and
    the fact that~$\phi$ is an isomorphism, the distance is thus 1 if~$d_{\{f(u),f(v)\}}=1$
    and 2 if~$d_{\{f(u),f(v)\}}=2$. Thus, if the \textsc{Colored Induced Subgraph
      Isomorphism} instance is a yes-instance, so is the~\textsc{Distance-Constrained
      4-Tuple} instance. The converse direction follows by the same arguments.

    The running time can be seen as follows. The preprocessing can be performed in~$O(k(n+m))$
    time, as described above. Then, the number of branches is~$O(4^{4k})$: the only time
    when the number of created branches is not constant is when the types of the vertices in the 
    4-tuple are constrained or when the 4-tuple vertices are fixed to belong to~$X$. 
    In the worst case, we have~$X' = \{a,b,c,d\}\cap X=\emptyset$, that is, $S \subseteq V\setminus X$ and for all four vertices of~$S$ one has to branch in total into~$4^{4k}$ cases to fix the types. 
    In each branch, the algorithm first checks the
    conditions on all distances except for the distances between vertices of the same parts
    of the component partition. This can be done in~$O(k)$ time for each of these
    distance. Afterwards, the algorithm builds and solves the \textsc{Colored Induced
      Subgraph Isomorphism} in~$O(n+m)$ time. Altogether, this gives the claimed running
    time bound. %
  \end{proof}
}
The final step is to reduce~\hyp{} to \textsc{Distance-Constrained 4-Tuple}. This can be
done by creating~$O(k^4)$ instances of \textsc{Distance-Constrained 4-Tuple} as shown
below.
\begin{theorem}\label{thm:cograph-dist}
	\hyp can be solved in~$O(4^{4k}\cdot k^7\cdot (n+m))$ time, where $k$ is the vertex deletion distance of $G$ to cographs.
\end{theorem}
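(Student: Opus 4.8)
The plan is to reduce a single \hyp{} instance to polynomially many (in $k$) instances of \textsc{Distance-Constrained 4-Tuple} and to invoke \cref{lem:cograph-dc-4-tuple} on each of them. First I would compute, as described above, a deletion set $X\subseteq V$ with $G-X$ a cograph and $|X|=O(k)$ in $O(k(n+m))$ time; from here on I treat $|X|$ as the parameter. We may assume that $G$ is connected, since otherwise we process each connected component separately and take the maximum, so that all pairwise distances are finite.

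The enabling observation is that $\mathrm{diam}(G)=O(k)$. Indeed, within any connected component of the cograph $G-X$ every two vertices are at distance at most two, and any shortest path between vertices of different components must pass through $X$. Decomposing a shortest path at the (at most $|X|$) vertices of $X$ that it visits, each maximal subpath avoiding $X$ lies inside a single cograph component and hence, being part of a shortest path, has length at most two; adding the $O(|X|)$ edges incident to the visited $X$-vertices shows that every shortest path has length $O(|X|)=O(k)$. Consequently every distance $\dist{x}{y}$ lies in $\{1,2,\ldots,O(k)\}$.

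Now recall from the definition that $\delta(a,b,c,d)$ is determined by the three distance sums $D_1,D_2,D_3$, and equals the difference of the two largest of them; hence it is a function of the six pairwise distances alone. I would therefore enumerate all candidate six-tuples $(d_{\{a,b\}},d_{\{a,c\}},d_{\{a,d\}},d_{\{b,c\}},d_{\{b,d\}},d_{\{c,d\}})$ with each coordinate ranging over $\{1,\ldots,\mathrm{diam}(G)\}$. By the diameter bound there are $O(k^6)$ such \emph{distance profiles}. For each profile I compute its induced $\delta$-value (the gap between the two largest of $D_1,D_2,D_3$) and call the \textsc{Distance-Constrained 4-Tuple} algorithm of \cref{lem:cograph-dc-4-tuple} with exactly these six target distances; the profile is declared \emph{realizable} if and only if the call reports a yes-instance. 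The output is the maximum induced $\delta$-value over all realizable profiles. Correctness is immediate: every quadruple of $G$ induces one of the enumerated profiles, so the maximum over realizable profiles is exactly $\max_{a,b,c,d}\delta(a,b,c,d)=\delta(G)$, while metrically inconsistent or unrealizable profiles are simply rejected by the subroutine.

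For the running time, each of the $O(k^6)$ calls costs $O(4^{4k}\cdot k\cdot(n+m))$ by \cref{lem:cograph-dc-4-tuple}, and the preprocessing is dominated by this, giving the claimed $O(4^{4k}\cdot k^7\cdot(n+m))$ bound. The only step that needs genuine care—and the main (if modest) obstacle—is the diameter bound, since it is precisely what keeps the number of profiles polynomial in $k$ rather than in $n$; once that is in place, the remainder is bookkeeping, namely translating each profile into the correct $\delta$-value and feeding the six target distances to the subroutine of \cref{lem:cograph-dc-4-tuple}.
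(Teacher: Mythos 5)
Your overall strategy is exactly the paper's: bound the diameter (within each component) by $O(k)$ using the fact that connected cographs have diameter at most two, enumerate the $O(k^6)$ distance profiles, and decide realizability of each profile with one call to the algorithm of \cref{lem:cograph-dc-4-tuple}, taking the maximum $\delta$-value over realizable profiles. Your diameter argument (decomposing a shortest path at its $X$-vertices and observing that each maximal $X$-avoiding subpath, being itself a shortest path, has length at most two) is the same argument the paper gives, and your correctness reasoning for the profile enumeration is fine.

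There is, however, a genuine flaw in the running-time accounting. You compute the deletion set via the greedy factor-$4$ approximation, so you only get $|X|\le 4k$, and then you invoke \cref{lem:cograph-dc-4-tuple} at cost $O(4^{4k}\cdot k\cdot(n+m))$ per call. But the running time in that lemma is exponential in the size of the deletion set that is actually handed to the algorithm (the number of vertex types is $4^{|X|}$, and the branching is over $4^{4|X|}$ cases), so with $|X|\le 4k$ each call costs $O(4^{16k}\cdot k\cdot(n+m))$, and your total becomes $O(4^{16k}\cdot k^{7}\cdot(n+m))$ --- an FPT-linear bound, but not the bound claimed in \cref{thm:cograph-dist}. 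Saying you ``treat $|X|$ as the parameter'' does not repair this, since the theorem is stated in terms of the deletion distance $k$ itself. The fix is what the paper does: compute an exact deletion set of size at most $k$ by branching on induced $P_4$s (delete one of its four vertices, recurse), which takes $O(4^{k}\cdot(n+m))$ time and is dominated by the rest of the computation; with $|X|\le k$ the per-call cost is $O(4^{4k}\cdot k\cdot(n+m))$ and the claimed $O(4^{4k}\cdot k^{7}\cdot(n+m))$ bound follows.
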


\begin{proof}
  Let~$G = (V,E)$ be the input graph and~$X \subseteq V$, $|X| \le k$, such that~$G-X$ is
  a cograph and observe that~$X$ can be computed in~$O(4^k\cdot (n+m))$ time. Since every
  connected component of~$G-X$ has diameter at most two, the maximum distance between any pair of
  vertices in the same component of~$G$ is at most~$4k+2$: 
  any shortest path between two vertices~$u$ and~$v$ visits at most~$k$ vertices in~$X$, 
  at most three vertices between
  every pair of vertices~$x$ and~$x'$ from~$X$ and at most three vertices before
  encountering the first vertex of~$X$ and at most three vertices before encountering the
  last vertex of~$X$.
  
  Consequently, for the 4-tuple~$(a,b,c,d)$ that maximizes~$\delta(a,b,c,d)$, there
  are~$O(k^6)$ possibilities for the pairwise distances between the four vertices. Thus,
  we may compute whether there is a 4-tuple such that $\delta(a,b,c,d)=\delta$ by checking
  for each of the~$O(k^6)$ many $6$-tuples of possible pairwise distances of four vertices
  in~$G$ whether there are~$4$ vertices in~$G$ with these six pairwise distances and
  whether this implies~$\delta(a,b,c,d)\ge \delta$. The latter check can be performed
  in~$O(1)$ time, and the first is equivalent to solving \textsc{Distance-Constrained
    4-Tuple} which can be done in~$O(4^{4k}\cdot k\cdot (n+m))$ time by
  \cref{lem:cograph-dc-4-tuple}. The overall running time follows. %
\end{proof}

\section{Reduction from 4-Independent Set}\label{sec:ind-set}

In this section, we provide a further relative lower bound for \textsc{Hyperbolicity}.
Specifically, we prove that, if the running time is measured in terms of~$n$,
then~\textsc{Hyperbolicity} is at least as hard as the problem of finding an independent set of size four in a graph.
The currently best running time for this problem is~$O(n^{3.257})$~\cite{EG04,WWWY15}.
Hence, any improvement on the running time of~\textsc{Hyperbolicity} which breaks this bound (e.g., an algorithm running in $o(n^3)$ time),
would also yield a substantial improvement for the~\textsc{4-Independent Set} problem.

To this end, we reduce from a 4-partite (or 4-colored) variant of the \textsc{Independent Set} problem.
The standard reduction from \textsc{Independent Set} to \textsc{Multicolored Independent Set}
shows that this 4-colored variant has the same asymptotic running time lower bound as \textsc{4-Independent Set}.

\begin{theorem}%
  \label{thm:4is}
 Any algorithm solving \hyp in $O(n^c)$~time for some constant~$c$ yields an $O(n^c)$-time algorithm solving \textsc{4-Independent Set}.
\end{theorem}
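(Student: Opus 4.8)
The plan is to give a reduction from \textsc{4-Independent Set} to \hyp that inflates distances so that the four-point hyperbolicity condition detects an independent set of size four. The key idea is to build a graph in which the only way to achieve a large hyperbolicity value is to pick four pairwise non-adjacent vertices of the input graph. I would take an instance $G_0=(V_0,E_0)$ of \textsc{4-Independent Set} on $n_0$ vertices and construct a graph $G$ on $O(n_0)$ vertices so that $G$ has hyperbolicity at least some fixed target $\delta$ if and only if $G_0$ has an independent set of size four. Since the construction is linear in the number of vertices, an $O(n^c)$-time algorithm for \hyp would then solve \textsc{4-Independent Set} in $O(n_0^c)$ time.

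\textbf{Construction.} First I would make the distances between vertices of $V_0$ large and uniform unless the vertices are adjacent. Concretely, subdivide or pad so that every vertex $v\in V_0$ becomes a vertex $v$ in $G$, and add a global structure (for instance an auxiliary apex or a small clique) guaranteeing that all pairwise distances among $V_0$-vertices are exactly some value $t$ when the corresponding vertices are \emph{non}-adjacent in $G_0$, and strictly smaller than $t$ when they are adjacent. The cleanest way is to set $\dist{u}{v}=t$ for non-edges and $\dist{u}{v}=t-1$ (or some fixed smaller quantity) for edges, by inserting a direct shortcut of length $t-1$ precisely between adjacent pairs while routing all other pairs through a common hub at distance $t/2$. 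Choosing $t$ even, the four-point condition on a quadruple $\{a,b,c,d\}\subseteq V_0$ then yields $\delta(a,b,c,d)=2t-\max\{D_2,D_3\}$ where the three distance sums are each either $2t$, $2t-1$, or $2t-2$ depending on how many of the three pairings hit edges of $G_0$.

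\textbf{Correctness.} With this calibration, a quadruple $\{a,b,c,d\}$ of $V_0$-vertices attains the maximum possible value $\delta(a,b,c,d)=t$ exactly when all six pairs are non-adjacent, i.e.\ when $\{a,b,c,d\}$ is an independent set in $G_0$; any edge among the four strictly lowers the relevant distance sum and hence the difference. I would argue by \cref{lem:hyp-distance-bounded} that no quadruple involving the auxiliary hub/clique vertices can exceed this target, since those vertices are close (distance $O(1)$ or $t/2$) to everything, capping their contribution to $\delta$ below $t$. Thus $\delta(G)\ge t$ if and only if $G_0$ has a size-four independent set, and setting the threshold $\delta=t$ completes the equivalence.

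\textbf{The main obstacle} will be engineering the gadget so that (i) the intended quadruples genuinely realize the top value $t$ while (ii) \emph{no} unintended quadruple — in particular ones mixing $V_0$-vertices with auxiliary vertices, or using shortest paths that route unexpectedly through the edge-shortcuts — can match or beat it. Getting the shortest-path distances to come out exactly as prescribed (especially ensuring that an edge-shortcut between $u,v$ does not accidentally create a shorter route between some other pair $u,w$) is the delicate part, and I expect to spend most of the effort bounding distances from the auxiliary vertices and verifying that adjacency in $G_0$ translates into exactly the intended drop in one distance sum. Once the distance table is pinned down, the four-point computation and the overall $O(n^c)$ time bound follow directly from the linearity of the construction.
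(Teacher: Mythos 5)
There is a genuine gap, and it is at the heart of your calibration. If all six pairwise distances in a quadruple are equal to~$t$ (your intended profile for an independent set), then all three distance sums are equal to~$2t$, and hence $\delta(a,b,c,d)=0$, not~$t$: hyperbolicity is the difference between the two \emph{largest} sums, so a perfectly uniform distance profile is the most tree-like (least hyperbolic) configuration possible, not the most hyperbolic one. Your own formula exposes the problem: writing $\delta = D_1 - \max\{D_2,D_3\}$ with $D_1=2t$, the difference is maximized by making $\max\{D_2,D_3\}$ \emph{small}, i.e.\ by making the other four pairs \emph{adjacent} — exactly the opposite of detecting an independent set. Indeed, with distances confined to $\{t-1,t\}$ your construction can never produce hyperbolicity larger than~$2$, and the quadruples attaining $\delta=2$ are those inducing a $4$-cycle ($ab,cd$ non-edges, the other four pairs edges), so the reduction would decide the wrong problem. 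By \cref{lem:diamequalshyp}, a quadruple achieving hyperbolicity equal to the diameter~$h$ must have an \emph{asymmetric} profile: exactly two disjoint pairs at distance~$h$ and the remaining four pairs at distance~$h/2$. Any correct reduction must therefore treat the two ``opposite'' pairs of the quadruple differently from the four ``cyclically consecutive'' pairs, which a single uniform gadget for all pairs of $V_0$ cannot do.

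This is precisely why the paper reduces from the $4$-\emph{colored} variant and arranges the four color classes $X_1,\dots,X_4$ cyclically: consecutive classes are made to be at distance~$2$ exactly when the corresponding vertices are non-adjacent in~$G_0$ (via copies $X_i'$ whose edges encode the \emph{complement} of $E_0$), while opposite classes ($X_1$--$X_3$ and $X_2$--$X_4$) are at distance~$4$ exactly when non-adjacent (via chains $Y_i,Z_i$ whose final edges encode $E_0$ itself), with an apex~$w$ and connection vertices capping all other distances. An independent set then yields $\delta=(4+4)-(2+2)=4$, equal to the diameter, and \cref{lem:hyp-distance-bounded,lem:diamequalshyp} rule out quadruples using auxiliary vertices. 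So two ingredients are missing from your proposal: the asymmetric near/far distance calibration (small distances exactly half the large ones), and the colored structure that tells the reduction \emph{which} pairs of the quadruple must be far and which must be near.
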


{
  \begin{proof}
  Let~$G=(V=V_1\uplus V_2\uplus V_3\uplus V_4,E)$ be an instance of the \textsc{4-Colored-Independent Set} problem.
  Assume an arbitrary order on the vertices of $V_i$, that is, $V_i=\{v_1^i,\ldots,v_{n_i}^i\}$, where $n_i;=|V_i|$, for each $1\leq i\leq 4$.
  We construct a graph~$G'$, initially being the empty graph, as follows (we refer to~\cref{fig:4isreduc} for an illustration).
  \begin{figure}[t!]
    \centering
    \begin{tikzpicture}[draw=black!75, scale=0.75,-]
	\tikzstyle{vone}=[circle,fill=white,draw=black!80,minimum size=25pt,inner sep=0pt]
	\tikzstyle{vtwo}=[circle,fill=white,draw=black!80,minimum size=5pt,inner sep=0pt]
	\tikzstyle{vthree}=[circle,fill=white,draw=black!80,minimum size=20pt,inner sep=0pt]
	\tikzstyle{vfour}=[circle,fill=white,draw=black!80,minimum size=35pt,inner sep=0pt]
	  \def\xsh{1.25};
	  \def\ysh{1};
	\foreach [count=\i] \pos / \text / \style in {
	  {(2,10*\ysh)}/$X_1$/vone,
	  {(8*\xsh,10*\ysh)}/$X_2$/vone,
	  {(8*\xsh,4)}/$X_3$/vone,
	  {(2,4)}/$X_4$/vone,
	  {(4*\xsh,11*\ysh)}//vtwo,
	  {(6*\xsh,11*\ysh)}//vtwo,
	  {(9*\xsh,8*\ysh)}//vtwo,
	  {(9*\xsh,6*\ysh)}//vtwo,
	  {(6*\xsh,3)}//vtwo,
	  {(4*\xsh,3)}//vtwo,
	  {(1,6*\ysh)}//vtwo,
	  {(1,8*\ysh)}//vtwo,
	  {(5*\xsh,10*\ysh)}/$X_1' $/vthree,
	  {(8*\xsh,7*\ysh)}/$X_2'$/vthree,
	  {(5*\xsh,4)}/$X_3' $/vthree,
	  {(2,7*\ysh)}/$X_4' $/vthree,
	  {(-1*\xsh,8*\ysh)}/$Y_1$/vthree,
	  {(6.5*\xsh,1*\ysh)}/$ Z_1$/vthree,
	  {(10.75*\xsh,8*\ysh)}/$Y_2$/vthree,
	  {(3.5*\xsh,1*\ysh)}/$Z_2$/vthree}
	{
	  \node[\style] (V\i) at \pos {\text};
	}

	  \node at (V5)[label=90:{\footnotesize $u_{1,2}$}]{};
	  \node at (V6)[label=90:{\footnotesize$u_{2,1}$}]{};
	  \node at (V7)[label=0:{\footnotesize$u_{2,3}$}]{};
	  \node at (V8)[label=0:{\footnotesize$u_{3,2}$}]{};
	  \node at (V9)[label=-90:{\footnotesize$u_{3,4}$}]{};
	  \node at (V10)[label=-90:{\footnotesize$u_{4,3}$}]{};
	  \node at (V11)[label=180:{\footnotesize$u_{4,1}$}]{};
	  \node at (V12)[label=180:{\footnotesize$u_{1,4}$}]{};

	  \foreach \x in {0.1,0.2,...,0.5} {
		  \draw[-] (V5) to ($(V1)+(0.1*\x+0.2,\x)$);
		  \draw[-] (V6) to ($(V2)+(0.1*\x+0.1,\x)$);
				  \draw[-] (V7) to ($(V2)+(0.75*\x-0.4,1.2*\x)$);
				  \draw[-] (V8) to ($(V3)+(0.75*\x-0.4,-0.5*\x)$);
				  \draw[-] (V9) to ($(V3)+(-0.1*\x-0.2,-\x)$);
		  \draw[-] (V10) to ($(V4)+(-0.4*\x+0.3,-\x+0.1)$);
				  \draw[-] (V11) to ($(V4)+(0.75*\x-0.4,1.2*\x)$);
				  \draw[-] (V12) to ($(V1)+(0.75*\x-0.4,-0.5*\x)$);
	  }
	\foreach \i / \j in {5/6,7/8,9/10,11/12} {
	  \path[] (V\i) edge (V\j) ;
	}

	\foreach \i / \j in {1/13,2/14,3/15,4/16} {
	  \path[very thick] (V\i) edge (V\j) ;
	}

	  \draw[-] (V13) to node[below,scale=0.66]{$\not\in E$}(V2);
		  \draw[-] (V14) to node[left,scale=0.66]{$\not\in E$}(V3);
		  \draw[-] (V15) to node[above,scale=0.66]{$\not\in E$}(V4);
		  \draw[-] (V16) to node[right,scale=0.66]{$\not\in E$}(V1);

	\foreach \i / \j in {1/17} {
	  \path[very thick] (V\i) edge (V\j) ;
	}

	\foreach \i / \j in {2/19} {
	  \path[very thick] (V\i) edge (V\j) ;
	}
	  \draw[very thick] (V19) to [out=-90,in=-45](V20);
	  \draw[very thick] (V17) to [out=-90,in=225](V18);
	  \draw[-] (V20) to node[right,scale=0.66]{$\in E$}(V4);
	  \draw[-] (V18) to node[right,scale=0.66]{$\in E$}(V3);

	\foreach [count=\y] \pos / \text / \style in {
	  {(2,10*\ysh)}/$X_1$/vone,
	  {(8*\xsh,10*\ysh)}/$X_2$/vone,
	  {(8*\xsh,4)}/$X_3$/vone,
	  {(2,4)}/$X_4$/vone,
	  {(4*\xsh,11*\ysh)}//vtwo,
	  {(6*\xsh,11*\ysh)}//vtwo,
	  {(9*\xsh,8*\ysh)}//vtwo,
	  {(9*\xsh,6*\ysh)}//vtwo,
	  {(6*\xsh,3)}//vtwo,
	  {(4*\xsh,3)}//vtwo,
	  {(1,6*\ysh)}//vtwo,
	  {(1,8*\ysh)}//vtwo,
	  {(5*\xsh,10*\ysh)}/$X_1' $/vthree,
	  {(8*\xsh,7*\ysh)}/$X_2'$/vthree,
	  {(5*\xsh,4)}/$X_3' $/vthree,
	  {(2,7*\ysh)}/$X_4' $/vthree,
	  {(-1*\xsh,8*\ysh)}/$Y_1$/vthree,
	  {(6.5*\xsh,1*\ysh)}/$ Z_1$/vthree,
	  {(10.75*\xsh,8*\ysh)}/$Y_2$/vthree,
	  {(3.5*\xsh,1*\ysh)}/$Z_2$/vthree}
	  {
	  \node[\style] (V\y) at \pos {\text};
	}

	  \node (w) at (5*\xsh,7*\ysh)[circle,draw,scale=0.5,label=45:{$w$}]{};
	  \foreach \x in {-0.4,-0.3,...,0.4}{
		  \draw[-] (w) to ($(w)+(1.3*\x,1.1*\ysh)$);
		  \draw[-] (w) to ($(w)+(1.1*\xsh,1.3*\x)$);
		  \draw[-] (w) to ($(w)+(1.3*\x,-1.1*\ysh)$);
		  \draw[-] (w) to ($(w)+(-1.1*\xsh,1.3*\x)$);
	  }
      \end{tikzpicture}
      \caption{An illustrative sketch of the graph constructed in the proof of \cref{thm:4is}. 
	  Encircled vertices correspond to cliques.
	  A thick edge represents a matching between copy-vertices.
	  An edge labeled ``$\in E$'' (``$\not\in E$'') represent incidences corresponding to present (not present) edges in the original graph.
	  The vertex $w$ is incident with all vertices beside those in the $X_i$s, $1\leq i\leq 4$.}
      \label{fig:4isreduc}
  \end{figure}
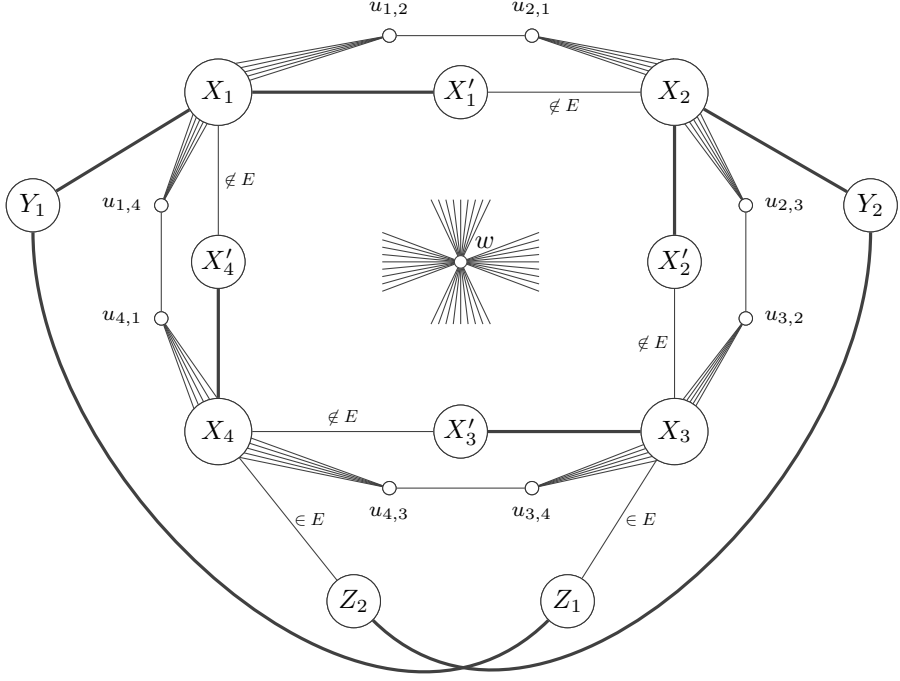
  \begin{compactitem}

  \item
  Add the vertex sets~$X_1$,~$X_2$,~$X_3$, and~$X_4$, where $X_i=\{x_1^i,\ldots,x_{n_i}^i\}$, $1\leq i\leq 4$.
  We say $x_j^i$ corresponds to the vertex $v_j^i\in V_i$ in~$V$, for each $1\leq i\leq 4$, $1\leq j\leq n_i$.
  Introduce a copy~$X'_i$ of each~$X_i$
  and further copies~$Y_1, Z_1$ of~$X_1$ and $Y_2, Z_2$ of~$X_2$. 
  Make each~$X_i$ and each copy of each~$X_i$ a clique.
  We say that the~$j$th vertex of some copy of~$X_i$ \emph{corresponds to the~$j$th vertex} of~$X_i$ and hence corresponds to the $j$th vertex in~$V_i$.

  \item
  For each vertex in~$X_i$ introduce an edge to its corresponding vertex in~$X'_i$.
    
  \item
  For $i\in\{1,2,3\}$, introduce an edge between a vertex in~$X'_i$ and a vertex in~$X_{i+1}$
  if their corresponding vertices in~$V$ are \emph{not} adjacent in~$G$. 
  Introduce edges between vertices in~$X_4'$ and~$X_1$ analogously.

  \item
  For $i\in\{1,2\}$, introduce edges for corresponding vertices between~$X_i$ and~$Y_i$, and between~$Y_i$ and~$Z_i$.

  \item
  For $i\in\{1.2\}$. introduce an edge between a vertex in~$Z_i$ and a vertex in~$X_{i+2}$
  if their corresponding vertices in~$V$ \emph{are} adjacent in~$G$.

  \item
  Introduce a set~$U:=\{u_{1,2}^1,u_{1,2}^2,u_{2,3}^2,u_{2,3}^3,u_{3,4}^3,u_{3,4}^4,u_{4,1}^4,u_{4,1}^1\}$
  of eight further vertices and call the vertices in~$U$ the \emph{connection vertices}.

  \item
  Introduce the edges~$\{u_{i,j}^i,u_{i,j}^j\}$, and connect each vertex in~$X_i$ with~$u^i_{i,j}$ and~$X_j$ with~$u^j_{i,j}$ via an edge.

  \item
  Finally, add the vertex~$w$ and connect~$w$ via an edge with all vertices except the vertices in~$X_i$, $1\leq i \leq 4$.

  \end{compactitem}

  This finishes the construction of~$G'=(V',E')$.
  Observe that $|V(G')|=2\cdot|V(G)|+2\cdot(n_1+n_2)+9$.
  Moreover, observe that the diameter of~$G'$ is four.
  To see this, observe that~$w$ has distance at most two to each vertex in~$G'$.
  Assuming that there exist at least one pair of vertices $a \in V_1$ and $c \in V_3$ such that $\{a, c\} \notin E$ (as otherwise~$G$ is a trivial no-instance),
  the distance between~$a$ and~$c$ is exactly~$4$.
  We prove that~$G'$ is 4-hyperbolic if and only if~$G$ has an independent set of size 4.

  \smallskip\noindent
  \raproof \quad
  Let~$\{a,b,c,d\}$ be a colored-independent set of size four in~$G$, and let without loss of generality~$a\in V_1$,~$b\in V_2$, $c\in V_3$, and~$d\in V_4$.
  Let $a',b',c',d'\in V(G')$ with $a'\in X_1$,~$b'\in X_2$, $c'\in X_3$, and~$d'\in X_4$ be the corresponding vertices in~$X:=X_1\cup X_2\cup X_3\cup X_4$.
  We show that~$\delta(a',b',c',d')=4$.

  First, we show that~$\dist{a'}{b'} = 2$.
  As no vertex in~$X_1$ is adjacent to any vertex in~$X_2$, we have~$\dist{a'}{b'} \geq 2$.
  As~$a$ and~$b$ are not adjacent in~$G$, they have a common neighbor in $X_1'$ by construction of~$G'$. 
  It follows that~$\dist{a'}{b'} = 2$.
  By a symmetric argument, we conclude that
  $\dist{b'}{c'} = \dist{c'}{d'} = \dist{d'}{a'} = 2$.

  Further, we show that~$\dist{a'}{c'}=4$.
  As each vertex in~$G'$ is at distance two to vertex~$w$, it follows that~$\dist{a'}{c'}\leq 4$.
  Moreover,
  all the neighbors of~$a'$ are in~$X'_1\cup Y_1\cup X'_4\cup \{u_{1,2}^1,u_{4,1}^1\}$
  and all the neighbors of~$c'$ are in~$Z_1\cup X'_2\cup X'_3\cup \{u_{2,3}^3,u_{3,4}^3\}$.
  Thus, to have a distance of at most three,
  a neighbor of~$a'$ must be adjacent to a neighbor of~$c'$.
  By construction, this is only possible if the unique neighbor~$a'_Y$ of~$a'$ in~$Y_1$ is adjacent to a neighbor of~$c'$ in~$Z_1$.
  The unique neighbor~$a'_Z\in Z_1$ of~$a'_Y\in Y_1$ is,
  however, not adjacent to~$c'$ since~$a$ and~$c$ are not adjacent in~$G$.
  It follows that~$\dist{a'}{c'}=4$.
  By a symmetric argument, we conclude that~$\dist{b'}{d'}=4$.

  Finally, altogether we have~$\delta(a',b',c',d')=(4+4)-(2+2)=4$.

  \smallskip\noindent
  \laproof \quad
  Let~$S:=\{a',b',c',d'\}\subseteq V(G')$ be a vertex set such that~$\delta(a',b',c',d')=4$.
  We show that these vertices correspond to four vertices in~$G$ forming a colored independent set in~$G$.
  By~\cref{lem:hyp-distance-bounded}, we have $4=\delta(a',b',c',d') \le 2 \cdot \min_{u \neq v \in S} \{\dist{u}{v}\}$, and hence no two vertices in~$S$ are adjacent.

  Now, assume without loss of generality that~$\dist{a'}{c'}+\dist{b'}{d'}$ is the largest sum among the distances.
  Since the diameter of~$G'$ is four, we have that~$\dist{a'}{c'}+\dist{b'}{d'}\le 8$.
  Moreover, all other distances are at least two,
  since~$S$ forms an independent set in~$G'$.
  Then, since~$G'$ is 4-hyperbolic,
  this implies that~$\dist{a'}{c'}=\dist{b'}{d'}=4$ and~$\dist{a'}{b'}=\dist{b'}{c'}=\dist{c'}{d'}=\dist{d'}{a'}=2$.

  This already implies that~$w\notin S$ as~$w$ has distance at most two to all other vertices in~$G'$.
  Moreover, since all vertices in~$V'\setminus X$ are adjacent to~$w$,
  each of them is at distance at most three to all other vertices in~$G'$.
  Thus,~$S\subseteq X$, but as~$S$ forms an independent set in~$G'$, there are no two vertices of~$X_i$, $1\leq i\leq 4$, in~$S$ (recall each~$X_i$ forms a clique in~$G'$).
  Let $a\in V_1$,~$b\in V_2$,~$c\in V_3$, and~$d\in V_4$ be the vertices in~$G$ corresponding to~$a'$,~$b'$,~$d'$, and~$d'$, respectively.
  Assume without loss of generality that~$a'\in X_1$,~$b'\in X_2$,~$c'\in X_3$, and~$d'\in X_4$.
  Finally, by the construction of~$G'$ together with~$\dist{a'}{b'}=\dist{b'}{c'}=\dist{c'}{d'}=\dist{d'}{a'}=2$ and~$\dist{a'}{c'}=\dist{b'}{d'}=4$,
  it follows that $\{a,b,c,d\}$ forms a colored-independent set in~$G$.
  \end{proof}
}

\section{Conclusion}
To efficiently compute the hyperbolicity number, parameterization sometimes 
may help. In this respect, perhaps our practically most promising results
relate to the $O(k^4(n+m))$ running times (for the parameters 
covering path number 
and feedback edge number, see \cref{tab:results})---note that 
they clearly improve on the standard algorithm when $k=o(n^{1/4})$. Moreover, 
the linear-time data 
reduction rules we presented may be of independent practical interest.
On the lower bound side, together with the work of Abboud et al.~\cite{AWW16}
our SETH-based lower bound with respect to the parameter vertex cover number 
is among few known ``exponential lower bounds'' for a 
polynomial-time solvable problem.

As to future work, we particularly point to the following open questions.
First, we left open whether there is a linear-time FPT algorithm 
exploiting the parameter feedback vertex number for computing the 
hyperbolicity number.
Second, for parameter vertex cover number we have an SETH-based exponential 
lower bound for the parameter function in any linear-time FPT~algorithm. 
This does not imply that it is impossible to achieve a polynomial parameter 
dependence when asking for algorithms with running time factor $O(n^2)$ or~$O(n^3)$.

\bibliographystyle{abbrvnat}
\bibliography{hyper-arxiv2017-short} %

\end{document}